\newtheorem{theorem}{Theorem}[section]
\newtheorem{lemma}{Lemma}[section]
\newtheorem{corollary}{Corollary}[section]
\newtheorem{prop}{Proposition}[section]
\theoremstyle{definition}
\newtheorem{remark}[theorem]{Remark}
\newenvironment{acknowledgement}{\par\medskip\noindent\emph{Acknowledgement.}}
\newcommand{\labelnummer}{\mbox{\normalfont (\roman{numcount})}}%
\let\curlabelspeicher\@currentlabel%
    \let\saveitem\item%
    \def\item{\saveitem%
      \def\@currentlabel{{\upshape\curlabelspeicher}$\,$\labelnummer}}%
    \let\savelabel\label%
    \def\label##1{\savelabel{##1}%
      \@bsphack%
        \ifmmode\else%
          \protected@write\@auxout{}%
          {\string\newlabel{##1item}{{\labelnummer}{\thepage}}}%
        \fi%
      \@esphack%
    }%
\renewcommand{\appendix}{\def\thesection{\textsc{Appendix}}}
 \let\leq\le
 \let\geq\ge
 \let\Im\undefined
\DeclareMathOperator{\Im}{Im}
\DeclareMathOperator{\tr}{tr\kern1pt}
\newif\ifper\pertrue
\def\per{.}
\def\bti{\@ifnextchar[\bbti\bbbti}
\def\bbti[#1]#2{#2, #1.}
\def\bbbti#1{#1.}
\def\z{\@ifnextchar[\zz\zzz}
\def\zz[#1]#2#3#4#5{\perfalse\emph{#2} \textbf{#3}, #4 (#5) [#1]}
\def\zzz#1#2#3#4{\emph{#1} \textbf{#2}, #3 (#4)\ifper\per\fi\pertrue}
\def\pub{\@ifstar\pubstar\pubnostar}
\def\pubnostar{\@ifnextchar[\@@pubnostar\@pubnostar}
\def\@@pubnostar[#1]#2#3#4{#2, #3, #4, #1\ifper\per\fi\pertrue}
\def\@pubnostar#1#2#3{#1, #2, #3\ifper\per\fi\pertrue}
\def\pubstar[#1]#2#3#4{\perfalse #2, #3, #4 [#1]\pertrue}
\newcommand{\bel}{\begin{equation} \label}
\newcommand{\ee}{\end{equation}}
\def\beq{\begin{equation}}
\def\eeq{\end{equation}}
\newcommand{\bea}{\begin{eqnarray}}
\newcommand{\eea}{\end{eqnarray}}
\newcommand{\beas}{\begin{eqnarray*}}
\newcommand{\eeas}{\end{eqnarray*}}
\newcommand{\R}{\mathbb{R}}
\newcommand{\Z}{\mathbb{Z}}
\newcommand{\N}{\mathbb{N}}
\newcommand{\C}{\mathbb{C}}
\newcommand{\E}{\mathbb{E}}
\begin{document}

\title[Density of states]{Dependence of the density of states on the probability distribution for discrete random Schr\"odinger operators}

\author[P.\ D.\ Hislop]{Peter D.\ Hislop}
\address{Department of Mathematics,
    University of Kentucky,
    Lexington, Kentucky  40506-0027, USA}
\email{peter.hislop@uky.edu}

\author[C.\ A.\ Marx]{Christoph A.\ Marx}
\address{Department of Mathematics,
Oberlin College,
Oberlin, Ohio 44074, USA}
\email{cmarx@oberlin.de}

\thanks{Version of \today}

\begin{abstract}
We prove that the the density of states measure (DOSm) for random Schr\"odinger operators on $\mathbb{Z}^d$ is weak-$^*$ H\"older-continuous in the probability measure. The framework we develop is general enough to extend to a wide range of discrete, random operators, including the Anderson model on the Bethe lattice, as well as random Schr\"odinger operators on the strip. An immediate application of our main result provides quantitive continuity estimates for the disorder dependence of the DOSm and the integrated density of states (IDS) in the weak disorder regime. These results hold for a general compactly supported single-site probability measure, without any further assumptions. The few previously available results for the disorder dependence of the IDS valid for dimensions $d \geq 2$ assumed absolute continuity of the single-site measure and thus excluded the Bernoulli-Anderson model. As a further application of our main result, we establish quantitative continuity results for the Lyapunov exponent of random Schr\"odinger operators for $d=1$ in the probability measure with respect to the weak-$^*$ topology.
\end{abstract}

\maketitle \thispagestyle{empty}

\tableofcontents

\vspace{.2in}

{\bf  AMS 2010 Mathematics Subject Classification:} 35J10, 81Q10,
35P20\\
{\bf  Keywords:} random Schr\"odinger operators, density of states, singular distributions \\

%%%%%%%%%%%%%%%%%%%%%%%%%%%%%%%%%%%%%%%%%%%%%%%%%%%%%%%%%%%%%%%%%%%%%%%%%%%%%%%%%%%%%%%%%%%%%%%%%%%%%%%%%%%%%%%%%%%%%%%%%%%%%%%%%%%%

\section{Introduction: Dependence of the density of states on the probability measure}\label{sec:intro1}
\setcounter{equation}{0}

In this article, we quantify the dependence of the density of states on the \emph{single-site probability measure} for discrete random Schr\"odinger operators. We give estimates of the modulus of continuity, with respect to the probability distribution in the weak-$^*$ topology, of the density of states measure, the integrated density of states, the density of states function, and, in one dimension, the Lyapunov exponent.  One of the consequences of our results is that the density of states measure for the Bernoulli-Anderson model may be approximated by the density of states measure for smooth single-site probability measures. The latter question arises, for instance, in the study of local eigenvalue statistics in the localization regime.

We consider the formal Hamiltonian on $\ell^2 (\Z^d)$ with a random potential constructed from finite-rank projections and independent, identically distributed ($iid$) random variables,
%finite-rank random potentials, which are modeled by independent, identically distributed ($iid$) random variables,
\beq\label{eq:schr-op1}
H_\omega =  \Delta + \sum_{j} \omega_j P_j ~\mbox{,}
\eeq
where $\Delta$ is the finite-difference Laplacian. Here, the elements of $\omega = (\omega_j)$ are distributed according to a common, compactly supported Borel probability measure $\nu$ and the projections $P_j$ form a complete family of orthogonal projections with common rank $N \in \mathbb{N}$. A precise definition of the model (\ref{eq:schr-op1}) is given in [H1] of section \ref{subsec:model1}. For the usual Anderson model, the projections are rank-one, i.e. $N=1$. Models for $N > 1$ arise for instance in the study of multi-dimensional random polymers \cite{deBievre_Germinet_2000, SchulzBaldes_Jitomirskaya_Stolz_CMP_2003, DamanikSimsStolz_JFunAnal_2004}.

The {\em{density of states measure}} (DOSm) $n_\nu^{(\infty)}$ associated with the random lattice Schr\"odinger operator (\ref{eq:schr-op1}) is given by the spectral average
\beq \label{eq:DOSm1}
n_\nu^{(\infty)}(f) := \frac{1}{N} \E_{\nu^{(\infty)}} \{ {\rm Tr} (P_0 f(H_\omega) P_0) \} ~\mbox{,}
\eeq
where $\nu^{(\infty)}$ denotes the infinite product measure induced by $\nu$ (defined in (\ref{eq:m1}) below). The cumulative distribution function associated with the measure $n_\nu^{(\infty)}$ is commonly called the {\em{integrated density of states}} (IDS), denoted by $N_\nu (E) := n_\nu^{(\infty)}((-\infty, E))$ for $E \in \R$. {\em{If}} the DOSm $n_\nu^{(\infty)}$ is absolutely continuous with respect to Lebesgue measure on $\R$, the corresponding density (Radon-Nikodym derivative) will be referred to as the {\em{density of states function}} (DOSf), denoted by $\rho_\nu (E)$.

We mention that our results are not limited to the Hamiltonians in (\ref{eq:schr-op1}). Indeed, the framework we develop applies more generally to discrete random operators with a certain ``{\em{finite-range structure}}'' (see section \ref{sec:extensions_necfeat} for further details). The latter in particular includes the Anderson model on the Bethe lattice and random Schr\"odinger operators on the strip. Since the precise moduli of continuity will depend on the model under consideration, we will, for the sake of concreteness, present our results for operators of the form (\ref{eq:schr-op1}) and defer the discussion of generalizations to section \ref{sec:extensions}.

Finally, we note that an extension of the continuity results of the present article to continuum Schr\"odinger operators on $L^2 (\R^d)$, as well as to discrete models with non-compactly supported single-site measures, is the subject of a follow-up paper \cite{MarxHislop_contDosm_followup}, see also Remark \ref{rem_mainthm} below.

%%%%%%%%%%%%%%%%%%%%%%%%%%%%%%%%%%%%%%%%%%%%%%%%%%%%%%%%%%%%%%%%%%%%%%%%%%%%%%%%%%%%%%%%
\subsection{Summary of the results}\label{subsec:model1}

In order to state our main results precisely, we first list the precise hypotheses on the model:
\begin{description}
\item [{[H1]}] The discrete Hamiltonian on $\ell^2 (\Z^d)$ has the form
\beq\label{eq:schr-op2}
H_\omega =  \Delta + \sum_{j \in {\mathcal{J}}} \omega_j P_j,
\eeq
where the components of $\omega := \{ \omega_j \}_{ j \in {\mathcal{J}}} \in \Omega := [-C,C]^{\Z^d}$ are iid random variables, distributed according to a common Borel probability measure $\nu$ with support in $[-C,C]$. The index set ${\mathcal{J}}$ is a lattice $K \Z^d$, for $N = K^d  \in \N$.
The rank $N$ projection $P_0$ projects onto the $N = |\Lambda_0|$ sites in the cube $\Lambda_0 :=[0, K-1]^d \subset \Z^d$ centered at the origin. The orthogonal projections $\{ P_k ~|~ k \in K\Z^d \}$ are generated by translation of the single rank $N$ projection $P_0$: Defining the unitary $U_k$ on $\ell^2 (\Z^d)$ by $f(x-k) = (U_kf)(x)$, for $k \in K\Z^d$, one has $P_k = U_k P_0 U_k^{-1}$. In summary, it follows that the spectrum of $H_\omega$ satisfies
 \beq\label{eq:schr-op-sp1}
\sigma ( H_\omega ) \subseteq [-2d-C, 2d+C] =: [-r,r] ~\mbox{for all } \omega \in \Omega ~\mbox{.}
\eeq
\end{description}
Ergodicity of the operator (\ref{eq:schr-op2}) implies that the DOSm can be defined by (\ref{eq:DOSm1}) and satisfies $\mathrm{supp}(n_\nu^{(\infty)}) = \sigma ( H_\omega )$, almost surely in $\omega$. In particular, the DOSm is a compactly supported probability measure with $\mathrm{supp}(n_\nu^{(\infty)}) \subseteq [-r,r]$.

We equip the space of Borel probability measures $\mathcal{P}([-C,C])$ on $[-C,C]$ with the {\em{weak-$^*$ topology}}, i.e. a sequence $\{ \nu_\alpha \}_{\alpha \in \N}$ weak-$^*$converges to a measure $\nu$ in the space $\mathcal{P}([-C,C])$, denoted by $\nu_\alpha \stackrel{w^\star}{\rightarrow} \nu$, if, for all $f \in \mathcal{C}([-C,C])$, one has $\nu_\alpha (f) \rightarrow \nu(f)$, as $\alpha \rightarrow \infty$. Here, we write $\nu(f) := \int f(x) ~d\nu(x)$, for $\nu \in \mathcal{P}([-C,C])$.

In view of quantitative results, it will be useful to work with a metric on $\mathcal{P}([-C,C])$. It is well known (see e.g. Dudley \cite{Dudley_1966}, Theorem 12 therein) that since $[-C,C]$ is a separable metric space, the topology of weak-$^*$ convergence on $\mathcal{P}([-C,C])$ is metrizable by the metric derived from the Lipschitz dual, i.e.
\begin{equation} \label{eq:metric}
d_w(\mu, \nu) := \sup \left\{ \left\vert \mu(f) - \nu(f) \right\vert ~:~ f \in \mathrm{Lip}([-C,C]) ~\mbox{ with } \Vert f \Vert_{\mathrm{Lip}} \leq 1 \right\} ~\mbox{,}
\end{equation}
for $\mu, \nu \in \mathcal{P}([-C,C])$. Here, $\mathrm{Lip}([-C,C])$ is the Banach space of Lipschitz functions on $[-C,C]$ together with the norm
\begin{equation} \label{eq:deflipnorm}
\Vert f \Vert_{\mathrm{Lip}} := \Vert f \Vert_\infty +  \sup_{x \neq y \in [-C,C]} \frac{| f(x) - f(y)| }{ |x-y|} =: \Vert f \Vert_\infty + L_f ~\mbox{.}
\end{equation}
While there are other common metrics that metrize weak-$^*$convergence on $\mathcal{P}([-C,C])$, in particular the closely related Wasserstein metric or the Prokhorov metric (see e.g. \cite{Dudley_1976_book, billingsley}), the metric defined in (\ref{eq:metric}) will be most natural in view of our applications (see examples 1--3 in section \ref{sec:qual-contDOSm1}).

\newpage
We can now summarize our main result. The following theorem combines the Theorems \ref{thm:qual1}, \ref{thm:quant1}, and \ref{thm:IDS-quant1}.
\begin{theorem} \label{thm:main}
Consider the model described in [H1].
\begin{itemize}
\item[(i)] For single-site probability measures $\nu \in \mathcal{P}([-C,C])$ and $E \in \mathbb{R}$, both the maps
\begin{align}
\nu \mapsto n_\nu^{(\infty)} ~\mbox{, } \nu \mapsto N_\nu(E) ~\mbox{,}
\end{align}
are continuous in the weak-$^*$ topology.
\item[(ii)] The modulus of continuity of the maps in part (i) is quantified by the following: there exist constants $\gamma > 0$, $C_2 > 0$, and $0 < \rho < 1$, only depending on $d$ and $N$, such that for all single-site measures $\mu, \nu \in \mathcal{P}([-C,C])$ with $d_w(\mu, \nu) < \rho$ one has
\begin{equation}
d_w( n_\mu^{(\infty)} , n_\nu^{(\infty)} ) \leq \gamma d_w(\mu, \nu)^{\frac{1}{1 + 2d}} ~\mbox{,}
\end{equation}
and, for all $E \in \mathbb{R}$,
\begin{equation}
| N_\mu (E) - N_\nu(E) | \leq \frac{C_2}{\log \left(   \frac{1}{d_w(\mu,\nu)}    \right)} ~\mbox{.}
\end{equation}
\end{itemize}
\end{theorem}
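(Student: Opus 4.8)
The plan is to reduce everything to estimating $|n_\mu^{(\infty)}(f) - n_\nu^{(\infty)}(f)|$ for a single Lipschitz test function $f$ with $\|f\|_{\mathrm{Lip}}\le 1$, since by the definition of the metric this supremum is exactly $d_w(n_\mu^{(\infty)}, n_\nu^{(\infty)})$; the qualitative continuity in part (i) will then fall out of the quantitative bounds in part (ii). The first move is a \emph{polynomial approximation}: using Jackson's theorem I would choose, for each degree $k$, a polynomial $p_k$ with $\|f - p_k\|_{\infty, [-r,r]} \le c_J\, r / k$. Because both $n_\mu^{(\infty)}$ and $n_\nu^{(\infty)}$ are probability measures supported in $[-r,r]$, replacing $f$ by $p_k$ costs at most $c_J r/k$ on either side, so the task becomes comparing $n_\mu^{(\infty)}(p_k)$ with $n_\nu^{(\infty)}(p_k)$ at the price of an additive error $2 c_J r / k$.

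The second, and central, idea is the \emph{finite-range structure}. Since $H_\omega$ has finite hopping range and $p_k(H_\omega)$ is a polynomial of degree $k$, the quantity $n_\nu^{(\infty)}(p_k) = \tfrac1N \E_{\nu^{(\infty)}}\{\mathrm{Tr}(P_0\, p_k(H_\omega)\, P_0)\}$ depends only on the couplings $\omega_b$ attached to blocks $b$ within lattice distance $\lesssim k$ of the origin, of which there are $\lesssim k^d$. I would then run a \emph{telescoping (hybrid) argument} over these finitely many blocks, swapping $\mu$ for $\nu$ one block at a time and using independence, so that each swap is controlled by $d_w(\mu,\nu)$ against the $\mathrm{Lip}$-norm of the single-variable function $\Phi_b(\omega_b)$ obtained by integrating out all other couplings (after subtracting its mean, so only its oscillation enters). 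Here $\partial_{\omega_b}\Phi_b$ equals $\tfrac1N \mathrm{Tr}(P_0\, Dp_k(H_\omega)[P_b]\, P_0)$; representing $p_k(H_\omega)$ by a Cauchy/Combes--Thomas contour integral at distance $\eta \sim 1/k$ from $[-r,r]$ bounds this by a fixed power of $k$, while the Combes--Thomas decay $\e^{-c\,\mathrm{dist}(0,b)/k}$ confirms that only $\lesssim k^d$ blocks contribute. Collecting the block count against the per-block bound, careful bookkeeping yields $|n_\mu^{(\infty)}(p_k) - n_\nu^{(\infty)}(p_k)| \lesssim k^{2d}\, d_w(\mu,\nu)$.

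Combining the two error terms gives $d_w(n_\mu^{(\infty)}, n_\nu^{(\infty)}) \lesssim k^{-1} + k^{2d}\, d_w(\mu,\nu)$, uniformly in $k$; optimizing by $k \sim d_w(\mu,\nu)^{-1/(1+2d)}$ produces exactly the H\"older bound with exponent $\tfrac{1}{1+2d}$, and in particular weak-$^*$ continuity of $\nu \mapsto n_\nu^{(\infty)}$. For the IDS I would pass from this Wasserstein-type control of the measures to the uniform (Kolmogorov) distance of their distribution functions. The elementary interpolation inequality $\|N_\mu - N_\nu\|_\infty \le \omega(\delta) + \delta^{-1} W_1(n_\mu^{(\infty)}, n_\nu^{(\infty)})$, valid for any $\delta>0$ whenever $\omega$ is a modulus of continuity in the energy of one of the IDS, together with the comparability $W_1 \lesssim d_w$ on the bounded support $[-r,r]$, is the tool. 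Invoking the general \emph{log-H\"older continuity} of the IDS of ergodic operators in $E$ (Craig--Simon; valid with no regularity assumption on $\nu$), so that $\omega(\delta) \sim 1/\log(1/\delta)$, and optimizing over $\delta$ converts the power-law bound on $d_w(n_\mu^{(\infty)}, n_\nu^{(\infty)})$ into the logarithmic modulus $|N_\mu(E) - N_\nu(E)| \le C_2 / \log(1/d_w(\mu,\nu))$.

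The main obstacle is the per-block estimate inside the telescoping step: one must bound the Lipschitz constant of the localized trace $\tfrac1N \mathrm{Tr}(P_0\, p_k(H_\omega)\, P_0)$ in a single coupling $\omega_b$ \emph{polynomially} in $k$ — a naive monomial expansion only gives the useless exponential bound $r^k$ — and must simultaneously extract enough spatial decay to keep the number of relevant blocks at $\lesssim k^d$. This is precisely where the finite rank $N$ of the projections, the resolvent representation of $p_k(H_\omega)$, and the Combes--Thomas estimate have to be balanced against one another, and it is the interplay of the two powers of $k$ (block count versus per-block cost) that fixes the dimension-dependent exponent $1+2d$. The remaining ingredients — Jackson's theorem, the hybrid swapping, and the Wasserstein-to-Kolmogorov interpolation — are standard once this estimate is in hand.
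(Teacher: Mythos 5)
Your overall architecture --- polynomial approximation, locality of $\mathrm{Tr}(P_0\,p_k(H_\omega)P_0)$ in $\lesssim k^d$ couplings, a one-coupling-at-a-time telescoping using independence, and the Craig--Simon log-H\"older interpolation to pass from the DOSm to the IDS --- is the paper's strategy, and your IDS step and the deduction of part (i) from part (ii) are sound. The genuine gap is in the step you yourself flag as the main obstacle: the per-swap Lipschitz bound for the polynomial trace. Your proposed resolution, representing $p_k(H_\omega)$ by a Cauchy integral on a contour at distance $\sim 1/k$ from $[-r,r]$, does not give bounds polynomial in $k$: by the Bernstein--Walsh lemma, a degree-$k$ polynomial bounded by $M$ on $[-r,r]$ can be as large as $M e^{c\sqrt{k}}$ at distance $1/k$ from the \emph{endpoints} of the interval (the Green's function of $\mathbb{C}\setminus[-r,r]$ behaves like $\sqrt{\mathrm{dist}}$ there), and any closed contour must pass near the endpoints. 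Pushing the contour to distance $\sim 1/k^{2}$ tames $|p_k(z)|$, but then each resolvent factor costs $k^{2}$ and the Combes--Thomas localization length becomes $k^{2}$, so honest bookkeeping gives a total swap cost of order $k^{2d+4}\,d_w(\mu,\nu)$, not your claimed $k^{2d}\,d_w(\mu,\nu)$; balancing against the Jackson error $1/k$ then yields the exponent $\tfrac{1}{2d+5}$, strictly weaker than the asserted $\tfrac{1}{1+2d}$. As written, the $k^{2d}$ is not derived from anything (there is no reason the per-block cost should be $k^{d}$), and the fact that it reproduces exactly the target exponent suggests it was fitted to the answer rather than proved.

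The paper's fix is to decouple the two roles you have merged. The polynomial is used \emph{only} for the finite-range reduction: since $\mathrm{Tr}(P_0 f_n(H_\omega)P_0)$ depends on $\Gamma(n)\leq 2^d n^d$ couplings (Lemma \ref{lem:rv-counting1}), the full product measure may be replaced by a hybrid measure agreeing with the target outside a box, at the cost of the Bernstein approximation error $\sim n^{-1/2}$. The telescoping is then performed on $\mathrm{Tr}(P_0 f(H_\omega)P_0)$ with the \emph{original Lipschitz} $f$, and the per-swap cost comes from a separate finite-rank lemma (Lemma \ref{lem:finite-rank1}, proved as Proposition \ref{prop:finite-rank-prop2} via Fourier transform plus Duhamel, or alternatively Helffer--Sj\"ostrand): the map $\lambda\mapsto\mathrm{Tr}(P_0 f(H+\lambda P_\ell)P_0)$ is Lipschitz with constant $2N^2 L_f$, \emph{independently of any polynomial degree}. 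With a degree-independent swap cost the two error terms are $n^{-1/2}$ and $n^{d}\eta$, and optimizing gives precisely $\eta^{1/(1+2d)}$. If you insist on telescoping the polynomial itself, you would still need such a finite-rank lemma (applied to $p_k$) together with Markov's inequality $\|p_k'\|_{[-r,r]}\leq (k^2/r)\|p_k\|_{[-r,r]}$; this gives swap cost $\sim k^{d+2}\eta$ and exponent $\tfrac{1}{d+3}$, which implies the stated bound for $d\geq 2$ but falls short of $\tfrac13$ in dimension $d=1$. Either way, the missing ingredient is a proved, quantitative per-swap estimate; the contour-integral sketch does not supply it.
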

\begin{remark} \label{rem_mainthm}
We note that the hypothesis of compactness of the support of the single-site probability measures is necessary for the approach in the current paper. Indeed, as outlined in section \ref{subsec:contents1}, the proof of Theorem \ref{thm:main} relies on two key steps, the first of which, the ``finite-range reduction,'' uses polynomial approximation and thereby compactness of the support of the probability measure. The results of the present paper are extended to the situation of non-compactly supported single-site probability measures in \cite{MarxHislop_contDosm_followup}.
   In this paper, we will overcome this technical obstacle by working with resolvents instead of polynomial approximations. In addition, this modified approach, presented in \cite{MarxHislop_contDosm_followup}, applies to continuum random Schr\"odinger operators. {\em{The latter will however come at a price:}} the singularity of resolvents near the real axis will have to be compensated by higher regularity of the functions $f$ in Theorem \ref{thm:quant1}. In this case the right hand side of (\ref{eq:quant-cont1}) will depend on higher order derivatives of $f$. This will limit the continuity result for the DOSm to functions $f \in \mathcal{C}_c^k(\mathbb{R})$ for some $k = k(d) > 1$ instead of merely Lipschitz $f$, as in the present article, and also yields a smaller H\"older exponent.
\end{remark}

Theorem \ref{thm:main} in particular applies to the situation where $\{ \nu_\alpha \}_{\alpha \in \N}$ is a sequence of single-site probability measures converging in the space $\mathcal{P}([-C,C])$ to a probability measure $\nu$. In this context, part (i) of Theorem \ref{thm:main} implies continuity in $\alpha$ as $\alpha \to \infty$ and part (ii) provides \emph{quantitative bounds} on the rate of convergence for the DOSm and the IDS as $\alpha \rightarrow \infty$.

To further illustrate the usefulness of Theorem \ref{thm:main}, we present the following applications:
\begin{enumerate}
\item In section \ref{subsec:dosm-low1}, we show that Theorem \ref{thm:main} immediately implies the quantitative continuity for the DOSm and the IDS (for fixed energy!) in the disorder parameter $\lambda$ as $\lambda \to 0^+$ ({\em{weak disorder limit}}) for the rescaled model,
\beq\label{eq:schr-op1_rescale}
H_\omega =  \Delta + \lambda \sum_{j} \omega_j P_j ~\mbox{.}
\eeq
It is particularly noteworthy that our results given in Theorem \ref{thm:weak-cont1} -- \ref{thm:weak-contIDS1} hold for a {\em{general compactly supported probability measure, without any further assumptions}}. The very few available results for the disorder dependence of the IDS (for fixed energy) valid for {\em{higher}} dimensions $d \geq 2$ had to assume absolute continuity of the measure \cite{hks1, schenker1}. Even for $d=1$, where several results had been known previously (see the references and discussion in section \ref{subsec:previous-intro1}), the authors assumed appropriate decay of the Fourier transform of the single-site measure, which in particular ruled out Bernoulli measures.

\item In section \ref{subsec:le1} we show that for $d=1$, the Lyapunov exponent for each fixed energy is weak-$^*$ continuous in the probability measure (Theorem \ref{thm:qualiLE}). Under additional hypotheses (see [H3] in section \ref{subsec:le1}), we prove a bound on the convergence rate of the Lyapunov exponent as $\nu_\alpha \stackrel{w^\star}{\rightarrow} \nu$ (Theorem \ref{thm:LE-quant1}). Replacing the Lyapunov exponent by the sum of all non-negative Lyapunov exponents, these results extend to Schr\"odinger operators on the strip (see section \ref{sec:extensions_finiteRange_strip_Strip}). We mention that Theorem \ref{thm:qualiLE} recovers, for the case of Schr\"odinger operators, recent results of Bocker and Viana \cite{BockerViana_ETDS_2017} and Avila, Eskin, and Viana \cite{AvilaEskinViana}, which establish the continuity of the Lyapunov exponents for general products of random matrices. Theorem \ref{thm:LE-quant1} yields an additional quantification of the associated modulus of continuity.

\item Under the assumption that the DOSm is absolutely continuous with respect to the Lebesgue measure so that, for all $\alpha \in \mathbb{N}$, both the DOSf $\rho_{\nu_\alpha}$ and $\rho_\nu$ exist, we show in Theorem \ref{thm:dos1} of section \ref{subsec:dos-low1} that Theorem \ref{thm:main} implies the estimate
$$
| \rho_{\nu_\alpha} (E) - \rho_\nu(E) | \leq {C_3}{d_w(\nu_\alpha, \nu)}^{{\frac{\epsilon}{2+\epsilon}}{\frac{1}{1+2d}}},
$$
for all $E \in \R$ and some $\epsilon > 0$ related to the decay of the Fourier transform of the DOSf (see [H5] in section \ref{subsec:dosm-low1}). This applies, in particular, to the weak disorder limit, see Theorem \ref{thm:contiDOSfweakdisorder}.
\end{enumerate}

%%%%%%%%%%%%%%%%%%%%%%%%%%%%%%%%%%%%%%%%%%%%%%%%%%%%%%%%%%%%%%%%%%%%%%%%%%%%%%%%%%%%%%%%%%%%%%%%%%%

\subsection{Previous work on the continuity properties with respect to energy and disorder}\label{subsec:previous-intro1}

The regularity properties of the IDS with respect to the \emph{energy} for Schr\"odinger operators on $\ell^2 (\Z^d)$ (with rank-one perturbations) have been studied {\em{extensively}} by many authors. {\em{Since this is not the subject of this paper,}} we only mention a few results here; for a history and a more detailed survey of the literature, we refer to excellent review articles, for instance \cite{KirschMetzger_BarryFestschrift_2007, Veselic_review_2004}.

Pastur \cite{pastur1} proved that the IDS is continuous in the energy for one-dimensional random Schr\"odinger operators.  Craig and Simon \cite{craig-simon1} proved that for ergodic stationary potentials $V_\omega$ on $\Z^d$ satisfying $\E \{ \log  ( 1 + |V_\omega (0)|) \}$ the IDS is log-H\"older continuous. Delyon and Souillard \cite{DelyonSouillard_1984} provided a general proof of the continuity of the IDS with respect to $E$ for $d \geq 1$ for ergodic stationary potentials, but they do not give a quantitative estimate on the modulus of continuity.

Bourgain and Klein \cite{bourgain-klein1} established log-H\"older continuity of the outer density of states measure for Schr\"odinger operators on $\Z^d$, $d \geq 1$, which exists even for non-ergodic models. Most notably, in the same paper the authors establish also the (fractional) log-H\"older continuity of the outer density of states measure for continuum Schr\"odinger operators in dimensions $d=1,2,3$, which, prior to their work had not been known. There are better estimates on multidimensional random Schr\"odinger operators on $\ell^2 (\Z^d)$ if the single-site probability measure is H\"older continuous, in which case it can be shown that the IDS is also H\"older continuous. We refer to the papers of Combes, Hislop and Klopp \cite{chk1} and Rojas-Molina and Veseli\'c \cite{rojas-molina-veselic1} for refined results for random Schr\"odinger operators.

For random Schr\"odinger operators in dimension $d=1$, several higher regularity results for the energy dependence of the IDS are available. Simon and Taylor \cite{simontaylor} proved the IDS is $\mathcal{C}^\infty$ if the single-site measure is absolutely continuous with an appropriate decay of its Fourier transform. Further higher regularity results for the IDS specific to $d=1$, which allow for more general single-site distributions, were later obtained by Campanino and Klein \cite{campanino-klein86}, Klein and Speis \cite{KleinSpeis_JFunctAnal_1990}, and Bourgain \cite{Bourgain_Jdanalyse_2012}. Finally, we mention recent results by Hart and Vir\'ag \cite{HartVirag_CMP_2017} showing that the IDS of (\ref{eq:schr-op1_rescale}) in the weak disorder regime and $d=1$ is H\"older continuous with exponent $1 - c\lambda$ away from the band edges of the free Laplacian; the latter holds for any probability measure with finite support and thus in particular includes the Bernoulli-Anderson model. Higher-dimensional results ($d \geq 1$) for the regularity of the IDS with respect to $E$ for random Schr\"odinger operators were obtained by Bovier, Campanino, Klein, and Perez \cite{bovier-campanino-klein-perez1} in the high disorder regime.

%%%%%%%%%%%%%%%%%%%%%%%%%%%%%%%%%%%%%%%%%%%%%%%%%%%%%%%%%%%%%%%%%%%%%%%%%%%%%%%%%%%%%%%%%%%%%%%%%

Much fewer is known about the continuity of the IDS of (\ref{eq:schr-op1_rescale}) with respect to the \emph{disorder} (for {\em{fixed}} energy), in particular in the weak disorder regime $\lambda \to 0^+$. In this context, most available results are for $d=1$, all of which require appropriate decay conditions of the Fourier transform of the single-site measure. Here, we mention the results due to Bovier and Klein \cite{BovierKlein88}, Campanino and Klein \cite{campanino-klein90}, an Speis \cite{Speis_CMP_1992, Speis_JStatPhys_1991}, which are all specific to $d=1$ and rely on the supersymmetric replica method.

For higher dimensions, $d \geq 2$, the only available continuity results for the $\lambda$-dependence of the IDS (for fixed energy $E$) for random Schr\"odinger operators on $\mathbb{Z}^d$ in the weak disorder regime were established in \cite{schenker1} and \cite{hks1}. There, the authors prove H\"older continuity in $\lambda$ (for further details, see remark \ref{remark:weakcoupling} in section \ref{subsec:dosm-low1} of the present paper). Both of these results were obtained under the assumption that the single-site measure is absolutely continuous with a bounded density.

%%%%%%%%%%%%%%%%%%%%%%%%%%%%%%%%%%%%%%%%%%%%%%%%%%%%%%%%%%%%%%%%%%%%%%%%%%%%%%

The DOSf for the Bethe lattice $\mathbb{B}$ was studied by Acosta and Klein \cite{AcostaKlein} who proved that if the single-site probability distribution is close to a Cauchy distribution, then the DOSf admits an analytic continuation in energy to strip around the real axis. On a Bethe strip $\mathbb{B} \times \{ 1, \ldots , m \}$, Klein and Sadel \cite{KleinSadel} studied the weak disorder regime of the DOSf. They showed that the DOSf $\rho_\lambda (E)$ is jointly continuous in $(\lambda, E)$ as $\lambda \rightarrow 0$ and for $E$ in the interval of absolutely continuous spectrum \cite[Theorem 1.4]{KleinSadel}. They do not give quantitative estimates on the moduli of continuity. Similar to the above mentioned results for lattice Schr\"odinger operators and $d=1$, the authors had to impose a decay condition on the Fourier transform of the single-site measure.

%%%%%%%%%%%%%%%%%%%%%%%%%%%%%%%%%%%%%%%%%%%%%%%%%%%%%%%%%%%%%%%%%%%%%%%%%%%%%%%%%%%%%%%%%%%%%%%%%%%%%%%%%%%%%%%%%

\subsection{Outline of the paper}\label{subsec:contents1}
Section \ref{sec:qual-contDOSm1} sets the stage by addressing the {\em{qualitative}} statements of continuity given in part (i) of Theorem \ref{thm:main}. The key ingredient is provided by Lemma \ref{lem:observation1}, which achieves a ``{\em{finite-range reduction}}'' by replacing the problem of comparing expectations with respect to product measures differing in {\em{infinitely}} many factors, by ones where only {\em{finitely}} many factors are different. Our proof explicitly quantifies the error (see Remark \ref{rem:quantifyerror-keylemma}), which in turn outlines a strategy for establishing the quantitative statements of continuity given in part (ii) of Theorem \ref{thm:main}. As shown in Lemma \ref{lem:rv-counting1}, the latter is made possible by the ``finite-range structure'' of the Hamiltonian in [H1], which causes the map $\omega \mapsto {\rm Tr} \{ P_0 f(H_\omega) P_0 \}$ to depend on the random potentials at only {\em{finitely many}} sites, for each given {\em{polynomial}} $f$. Section \ref{sec:quan-contDOSm1} consequently uses the strategy outlined in section \ref{sec:qual-contDOSm1} to establish the modulus of continuity given in part (ii) of Theorem \ref{thm:main}; for pedagogical reasons we split the discussion into two statements, Theorem \ref{thm:quant1} for the DOSm and Theorem \ref{thm:IDS-quant1} for the IDS. The arguments presented in section \ref{sec:quan-contDOSm1} in particular reduce the proof of these quantitative continuity results to verifying the Lipschitz continuity of certain ``single-site'' spectral averages (Lemma \ref{lem:finite-rank1}). A short proof of the latter is given in section \ref{sec:finite-rank-lemma1}. Section \ref{sec:appl-contDOSm1} presents the three applications of our main result summarized above in items (1) -- (3): the disorder dependence of the DOSm and the IDS in the weak disorder regime (section \ref{subsec:dosm-low1}), the dependence of the Lyapunov exponent on the probability measure (section \ref{subsec:le1}), and the dependence of the DOSf on the probability measure (section \ref{subsec:dos-low1}). Finally, section \ref{sec:extensions} shows that the framework developed in this paper is not limited to the model in [H1], but in fact applies to a wide range of discrete random operators: Section \ref{sec:extensions_necfeat} lists the necessary features for a model to be amenable to the framework of this article. The subsequent sections then consider a few examples of models which have been of particular interest in the literature, specifically the finite-range Anderson model and random Schr\"odinger operators on the strip (section \ref{sec:extensions_finiteRange_strip}), as well as the Anderson model on the Bethe lattice (section \ref{sec:extensions_bethe}).

The paper includes three appendices. In the first, Appendix A, section \ref{sec:appendix:rv-counting1}, we prove the counting lemma, Lemma \ref{lem:rv-counting1}, used in section \ref{sec:qual-contDOSm1}. Section \ref{sec:appendix:alt-fr1}, Appendix B, presents an alternate proof of the finite-rank lemma, Lemma \ref{lem:finite-rank1}, using the Helffer-Sj\"ostrand formula. Finally, section \ref{sec:appendix:nontangLE1} contains Appendix C in which we show that the assumption of H\"older continuity of the DOSm used in Proposition \ref{prop:le-quant1} cannot be relaxed to log-H\"older continuity using the methods employed to prove Proposition \ref{prop:le-quant1}.

\begin{acknowledgement}
We wish to thank Abel Klein and Svetlana Jitomirskaya for useful discussions about the literature, as well as for suggestions which influenced the presentation of the main results of this paper. We also wish to thank Ilya Kachkovskiy and Christian Sadel for useful comments. Chris Marx would like to thank the mathematics departments of both the University of Kentucky and UC Irvine, where this work was done, for their hospitality. Finally, Chris Marx also acknowledges financial support through Oberlin College's research leave program for junior faculty.
\end{acknowledgement}

%%%%%%%%%%%%%%%%%%%%%%%%%%%%%%%%%%%%%%%%%%%%%%%%%%%%%%%%%%%%%%%%%%%%%%%%%%%%%%%%%%%%%%%%%%%%%%%%%%%

\section{Qualitative continuity of the density of states in the probability measure} \label{sec:qual-contDOSm1}
\setcounter{equation}{0}
Assuming the set-up described in [H1], the main result of this section concerns the {\em{qualitative}} continuity of the DOSm and IDS in the single-site probability distribution as stated in part (i) of Theorem \ref{thm:main}. While we will show that the qualitative continuity result can be reduced to the basic fact that weak-$^*$ convergence of probability measures implies the same for the associated {\em{infinite}} product measures (see Corollary \ref{eq:weakstarconv} below), the line of arguments presented here will be refined in section \ref{sec:quan-contDOSm1} to obtain the corresponding {\em{quantitative}} estimates on the modulus of continuity given in part (ii) of Theorem \ref{thm:main}.

From now on we consider the following situation:
\vspace{.1in}
\noindent

\begin{description}
\item [{[H2]}] We let $\{ \nu_\alpha \}_{\alpha \in \N}$ be a sequence of of Borel probability measures in the space $\mathcal{P}([-C,C])$ converging in weak-$^*$ topology to a measure $\nu \in \mathcal{P}([-C,C])$. For ease of notation, we set $n^{(\infty)}: = n_\nu^{(\infty)}$ and $n_\alpha^{(\infty)}:= n_{\nu_\alpha}^{(\infty)}$, for $\alpha \in \mathbb{N}$. Moreover, we write
\begin{equation}
\eta_\alpha : = d_w( \nu_\alpha , \nu ) \to 0^+ ~\mbox{,}
\end{equation}
and assume without loss of generality that $0 < \eta_\alpha < 1$, for all $\alpha \in \mathbb{N}$.
\end{description}
\vspace{.1in}
\noindent

We emphasize that we do {\em{not}} make any further assumptions on the specific form of the measures $\{\nu_\alpha\}_{\alpha \in \mathbb{N}}$ and $\nu$. The following, however, lists some examples for the set-up described in [H2] which we will come back to in section \ref{sec:appl-contDOSm1} when discussing some applications of the general results obtained in this and the following section.
\vspace{.1in}
\noindent

\begin{description}
\item[Example 1 - Approximate delta functions]
Given $L \in \N$, weights $\{ w_n \in [0,1], 1\leq n \leq L\}$, $\sum_{n=1}^L w_n =1$, and points $\{\lambda_n \in [-C,C], 1\leq n \leq L\}$, let $\nu$ be the linear combination of delta functions
\begin{equation} \label{eq:lcdelta}
d \nu (x) = \sum_{n=1}^L w_n \delta ( x - \lambda_n ) ~\mbox{.}
\end{equation}
To define the sequence $\{\nu_\alpha\}_{\alpha \in \mathbb{N}}$ of measures approximating $\nu$, we replace the delta functions in (\ref{eq:lcdelta}) by ``approximate delta functions,'' i.e. for a given function $0 \leq \phi \in L^1([-C,C])$, $\Vert \phi \Vert_1 = 1$, and a strictly decreasing sequence $\{ \eta_\alpha\}_{\alpha \in \mathbb{R}}$ such that $0 < \eta_\alpha \searrow 0^+$ and $\eta_1 < 1$, we let
$$
d \nu_\alpha (x) = \sum_{n=1}^L \frac{w_n}{\eta_\alpha} ~ \phi \left( \frac{ x - \lambda_n }{ \eta_\alpha} \right) ~d x ~\mbox{.}
$$
Then, $\nu_\alpha \stackrel{w^\star}{\rightarrow} \nu$ and for every $f \in Lip([-C,C])$,
\begin{equation}
\vert \nu_\alpha(f) - \nu(f) \vert \leq C L_f \eta_\alpha \leq C \Vert f \Vert_{\mathrm{Lip}} \eta_\alpha ~\mbox{,}
\end{equation}
i.e. $d_w(\nu_\alpha, \nu) \leq \eta_\alpha$.

One special case of this example are smooth approximations of the Bernoulli probability measure for which $L=2$ and $w_1 = p$, $w_2 = 1-p$, for some $0 < p < 1$, and $\phi \in \mathcal{C}^\infty(\mathbb{R})$ with support in $[-C,C]$.

\vspace{.1in}
\noindent

\item[Example 2 - Absolutely continuous measures] Suppose both the limiting measure $\nu$ and the elements of the sequence $\{\nu_\alpha\}_{\alpha \in \mathbb{N}}$ are absolutely continuous (AC) measures in $\mathcal{P}([-C,C])$, i.e.
\begin{align}
\mathrm{d}\nu(x) = \phi(x) ~\mathrm{d} x ~\mbox{, }
\mathrm{d}\nu_\alpha (x) = \phi_\alpha(x) ~\mathrm{d} x ~\mbox{,}
\end{align}
for some density functions $0 \leq \phi, \phi_\alpha \in L^1([-C,C])$, $\Vert \phi \Vert_1 = 1$, $\Vert \phi_\alpha \Vert_1 = 1$, such that
\begin{equation} \label{eq_weakconvac}
1 > \eta_\alpha:= \Vert \phi_\alpha - \phi \Vert_1 \to 0 ~\mbox{, as $\alpha \to \infty$ .}
\end{equation}
We observe that given the absolute continuity of both $\nu_\alpha$ and $\nu$, the condition in (\ref{eq_weakconvac}) is in fact equivalent to $\nu_\alpha \to \nu$ in $w^\star$-topology. Moreover, for every $f \in Lip([-C,C])$ one has
\begin{equation}
\vert \nu_\alpha(f) - \nu(f) \vert \leq \Vert f \Vert_\infty \eta_\alpha \leq \Vert f \Vert_{\mathrm{Lip}} \eta_\alpha ~\mbox{,}
\end{equation}
i.e. $d_w(\nu_\alpha, \nu) \leq \eta_\alpha$.

\vspace{.1in}
\noindent

\item[Example 3 - Point measures] Let $\nu$ be as in (\ref{eq:lcdelta}). For each $\alpha \in \mathbb{N}$, let $\lambda^{(\alpha)} = (\lambda_n^{(\alpha)})_{1 \leq n \leq L} \in [-C,C]^L$ and $w^{(\alpha)} = (w_n^{(\alpha)})_{1 \leq n \leq L} \in [0,1]^L$, $\sum_{n=1}^L w_n^{(\alpha)} = 1$, be given and consider the measure
\begin{equation}
d \nu_\alpha(x) = \sum_{n=1}^L w_n^{(\alpha)} \delta ( x - \lambda_n^{(\alpha)} ) ~\mbox{.}
\end{equation}

It is straightforward to see that $\nu_\alpha \to \nu$ in $w^\star$-topology is equivalent to
\begin{equation}
\lambda_n^{(\alpha)} \to \lambda_n ~\mbox{ and } w_n^{(\alpha)} \to  w_n ~\mbox{, for each $1 \leq n \leq L$.}
\end{equation}
Moreover, letting
\begin{equation}
\eta_\alpha := \sum_{n=1}^L \{ \vert w_n^{(\alpha)} - w_n \vert + \vert \lambda_n^{(\alpha)} - \lambda_n \vert \} ~\mbox{,}
\end{equation}
one has for each $f \in Lip([-C,C])$ that
\begin{align}
\vert \nu_\alpha(f) - \nu(f) \vert & \leq \max_{1 \leq n \leq L} \{1 ; \vert \lambda_n \vert \} \cdot \Vert f \Vert_{\mathrm{Lip}} \eta_\alpha ~\mbox{, i.e. } \\
d_w(\nu_\alpha, \nu) & \leq \max_{1 \leq n \leq L} \{1 ; \vert \lambda_n \vert \}  \cdot \eta_\alpha  ~\mbox{.}
\end{align}

We will return to this example in section \ref{subsec:le1} where we apply our main result, Theorem \ref{thm:main}, to obtain continuity estimates for the Lyapunov exponent. The set-up introduced in the present example was considered in the recent works by Bocker and Viana \cite{BockerViana_ETDS_2017} and Avila, Eskin, and Viana \cite{AvilaEskinViana}, both of which establish the (qualitative) continuity of the Lyapunov exponents for general products of random matrices.

\end{description}
\vspace{.1in}
\noindent

%%%%%%%%%%%%%%%%%%%%%%%%%%%%%%%%%%%%%%%%%%%%%%%%%%%%%%%%%%%%%%%%%%%%%%%%%%%%%%%%%%%%%%%%%%%%%%%%

\subsection{Approximating sequences of measures and convergence}\label{subsec:measures1}
As a technical preparation, we start by introducing certain approximations for the product measures $\nu_\alpha^{(\infty)}$ differing from the limiting measure $\nu^{(\infty)}$ on only a \emph{finite} number of lattice points. Here, we recall the definitions of the infinite product measures
\beq\label{eq:approx-m1}
\nu_\alpha^{(\infty)} := \bigotimes_{k \in \Z^d} \nu_\alpha, ~~~\alpha \in \N ,
\eeq
and
\beq\label{eq:m1}
\nu^{(\infty)} := \bigotimes_{k \in \Z^d} \nu  ~\mbox{,}
\eeq
associated with the single-site probability measures $\nu_\alpha$ and $\nu$, respectively.

The above mentioned approximating measures for $\nu_\alpha^{(\infty)}$ are then defined as follows: For $M \in \N$, we let
\beq\label{eq:Mapprox-m1}
\nu_\alpha^{(M)} := \left( \bigotimes_{k \in \Z^d; \|k\|_\infty \leq M} \nu_\alpha \right) \bigotimes
\left( \bigotimes_{k \in \Z^d; \|k\|_\infty > M} \nu \right), ~~~\alpha \in \N.
\eeq
We then claim:

\begin{lemma}\label{lem:observation1}
For any $F \in C(\Omega)$ and $\epsilon > 0$, there exists an integer $M \in \N$ (depending on both $F$ and $\epsilon$) such that
\beq\label{eq:finite-m-approx1}
| \nu_\alpha^{(M)} (F) - \nu_\alpha^{(\infty)}(F) | < \epsilon,  \forall \alpha \in \N.
\eeq
\end{lemma}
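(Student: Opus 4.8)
The plan is to exploit the compactness of the configuration space together with the key observation that the two measures $\nu_\alpha^{(M)}$ and $\nu_\alpha^{(\infty)}$ possess the \emph{same} finite-dimensional marginal on the cube $\Lambda_M := \{ k \in \Z^d : \|k\|_\infty \leq M\}$. Since $\Omega = [-C,C]^{\Z^d}$ is a product of compact intervals indexed by the countable set $\Z^d$, Tychonoff's theorem renders $\Omega$ compact (and metrizable) in the product topology, so every $F \in C(\Omega)$ is uniformly continuous. This uniform continuity is precisely what will furnish a choice of $M$ depending only on $F$ and $\epsilon$, and crucially \emph{not} on $\alpha$.

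First I would reduce $F$ to a cylinder function supported on $\Lambda_M$. Fixing any reference configuration $\omega^* \in \Omega$, I define, for $M \in \N$, the function $F_M(\omega) := F(\pi_M \omega)$, where $\pi_M \omega$ retains the coordinates $\omega_k$ for $k \in \Lambda_M$ and replaces every remaining coordinate by $\omega^*_k$. By construction $F_M$ depends only on the coordinates in $\Lambda_M$. Since $\pi_M \omega$ and $\omega$ agree throughout $\Lambda_M$, their distance in any metric inducing the product topology is controlled entirely by the tail coordinates $\|k\|_\infty > M$, and hence tends to $0$ \emph{uniformly in} $\omega$ as $M \to \infty$. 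Uniform continuity of $F$ then yields $\|F - F_M\|_\infty \to 0$, so given $\epsilon > 0$ I can fix $M$, depending only on $F$ and $\epsilon$, with $\|F - F_M\|_\infty < \epsilon/3$.

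The decisive step is the \emph{exact} identity $\nu_\alpha^{(M)}(F_M) = \nu_\alpha^{(\infty)}(F_M)$. Indeed, both infinite product measures restrict to the identical finite product $\bigotimes_{k \in \Lambda_M} \nu_\alpha$ on $\Lambda_M$; they differ only in the tail $\|k\|_\infty > M$, on which $F_M$ does not depend. Because $F_M$ is a $\Lambda_M$-cylinder function, integrating it against either measure sees only this common marginal, so the two integrals coincide. This vanishing middle term is exactly what produces uniformity in $\alpha$.

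Finally I would assemble the bound by the triangle inequality, using that both measures are probabilities, so $|\mu(F) - \mu(F_M)| \leq \mu(|F - F_M|) \leq \|F - F_M\|_\infty$ for $\mu \in \{\nu_\alpha^{(M)}, \nu_\alpha^{(\infty)}\}$:
\begin{align*}
| \nu_\alpha^{(M)} (F) - \nu_\alpha^{(\infty)}(F) |
&\leq | \nu_\alpha^{(M)}(F - F_M) | + | \nu_\alpha^{(M)}(F_M) - \nu_\alpha^{(\infty)}(F_M) | + | \nu_\alpha^{(\infty)}(F_M - F) | \\
&< \tfrac{\epsilon}{3} + 0 + \tfrac{\epsilon}{3} < \epsilon ~\mbox{,}
\end{align*}
uniformly in $\alpha \in \N$. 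The only mildly delicate point is the cylinder reduction with an approximation error that is uniform over both $\omega \in \Omega$ \emph{and} $\alpha$; this is delivered entirely by compactness (hence uniform continuity) of $F$, and it is what renders $M$ independent of $\alpha$. Everything else is routine bookkeeping once the common-marginal identity has been isolated.
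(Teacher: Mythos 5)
Your proof is correct, and its skeleton is the same as the paper's: approximate $F$ in sup-norm by a cylinder function $F_M$ depending only on the coordinates in $\Lambda_M$, observe that $\nu_\alpha^{(M)}(F_M) = \nu_\alpha^{(\infty)}(F_M)$ exactly because both measures share the marginal $\bigotimes_{\|k\|_\infty \le M}\nu_\alpha$ on $\Lambda_M$, and conclude by the triangle inequality with a bound uniform in $\alpha$. The only genuine difference is the mechanism producing the cylinder approximant: the paper invokes the Stone--Weierstrass theorem to get density in $C(\Omega)$ of the sub-algebra of functions depending on finitely many coordinates, whereas you use compactness and metrizability of $\Omega$ (Tychonoff) to get uniform continuity of $F$ and then construct the explicit approximant $F_M = F\circ \pi_M$ by freezing the tail coordinates at a reference configuration. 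Your route is more elementary and self-contained, and the approximant is concrete; note only that the phrase ``any metric inducing the product topology'' deserves the one-line justification that on a compact space all such metrics are uniformly equivalent, so it suffices to check the uniform convergence $\sup_\omega d(\omega,\pi_M\omega)\to 0$ for one weighted-sum metric. The paper's Stone--Weierstrass formulation, on the other hand, dovetails with how the lemma is later used quantitatively (Remark \ref{rem:quantifyerror-keylemma} and Section \ref{sec:quan-contDOSm1}): there the cylinder approximants are not abstract but arise from polynomial functional calculus, $F_n(\omega) = {\rm Tr}(P_0 f_n(H_\omega)P_0)$ with $f_n$ a Bernstein polynomial, and the counting lemma bounds how many coordinates they involve; either proof of the qualitative statement supports that refinement equally well.
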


\begin{proof}
Let $\mathcal{F}$ denote the sub-algebra in $C(\Omega)$ consisting of functions $F \in C(\Omega)$ so that $F \in \mathcal{F}$ depends only on {\em{finitely-many}} $\omega_k$, $k \in \Z^d$. By the Stone-Weierstrass Theorem, the sub-algebra $\mathcal{F}$ is dense in $C(\Omega)$. In fact, one easily checks that $\mathcal{F}_0$ defined as
\begin{equation} \label{eq:subalgebraf0}
~\mbox{ the sub-algebra generated by }\left\{ F = \prod_{\|j\|_\infty \leq M} f_j (\omega_j) ~:~ f_j \in C([-C,C]), ~\|j\|_\infty \leq M, ~M \in \N \right\},
\end{equation}
is dense in $C(\Omega)$. Hence, for every $F \in C(\Omega)$ and $\epsilon > 0$, there exists $M \in \N$ and $F_M \in \mathcal{F}$, with $F_M$ depending only on the random variables $\{ \omega_k ~|~ \|k \|_\infty \leq M \}$, such that
\beq\label{eq:finite-fnc-approx1}
\|F - F_M \|_{\infty}  < \epsilon .
\eeq
By construction, for all $\alpha \in \N$, we have the equality
$\nu_\alpha^{(M)} (F_M) = \nu_\alpha^{(\infty)} (F_M)$, from which it follows that
\bea\label{eq:finite-m-approx2}
| \nu_\alpha^{(M)} (F) - \nu_\alpha^{(\infty)}(F) | & \leq & | \nu_\alpha^{(M)} (F) - \nu_\alpha^{(M)}(F_M) | \nonumber \\
 & & + | \nu_\alpha^{(M)} (F_M) - \nu_\alpha^{(\infty)}(F_M) | + | \nu_\alpha^{(\infty)} (F_M) - \nu_\alpha^{(\infty)}(F) |   \nonumber \\
 & & < 2 \epsilon,
 \eea
for all $\alpha \in \N$. This proves the lemma.
\end{proof}
\begin{remark} \label{rem:quantifyerror-keylemma}
The proof of Lemma \ref{lem:observation1} shows that given $F \in C(\Omega)$ and $\epsilon > 0$, the index $M \in \N$ can be determined {\em{explicitly}} by finding a function $\tilde{F} \in \mathcal{F}$ so that $\| F - \tilde{F} \|_\infty < \epsilon / 2$. This will play a key role in section \ref{sec:quan-contDOSm1} for obtaining our quantitive continuity theorem for the DOSm.
\end{remark}

Using approximation of $F \in C(\Omega)$ by elements of $\mathcal{F}_0$ defined in (\ref{eq:subalgebraf0}), Lemma \ref{lem:observation1} in particular implies:
\begin{corollary} \label{eq:weakstarconv}
If $\nu_\alpha \rightarrow \nu$ in the $w^\star$-topology, then $\nu_\alpha^{(\infty)} \rightarrow \nu^{(\infty)}$ in the $w^\star$-topology.
\end{corollary}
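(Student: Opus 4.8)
The plan is to unwind the definition of weak-$^*$ convergence for the product measures: since $\Omega = [-C,C]^{\Z^d}$ is compact, it suffices to prove that $\nu_\alpha^{(\infty)}(F) \to \nu^{(\infty)}(F)$ as $\alpha \to \infty$ for every fixed $F \in C(\Omega)$. First I would fix $F$ and $\epsilon > 0$ and invoke the density of the sub-algebra $\mathcal{F}_0$ from (\ref{eq:subalgebraf0}), which was already established in the proof of Lemma \ref{lem:observation1}: choose $F_M \in \mathcal{F}_0$, a \emph{finite} linear combination of products $\prod_{\|j\|_\infty \leq M} f_{j}(\omega_j)$ with $f_j \in C([-C,C])$, so that $\|F - F_M\|_\infty < \epsilon/3$. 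Because $\nu_\alpha^{(\infty)}$ and $\nu^{(\infty)}$ are probability measures, this immediately controls the two ``outer'' terms in the triangle inequality $|\nu_\alpha^{(\infty)}(F) - \nu^{(\infty)}(F)| \leq |\nu_\alpha^{(\infty)}(F - F_M)| + |\nu_\alpha^{(\infty)}(F_M) - \nu^{(\infty)}(F_M)| + |\nu^{(\infty)}(F_M - F)|$, each of which is at most $\|F - F_M\|_\infty < \epsilon/3$.

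Second, I would handle the middle term using the product structure. By linearity it is enough to treat a single generator $F_M = \prod_{\|j\|_\infty \leq M} f_j(\omega_j)$, for which the product form of the measures gives $\nu_\alpha^{(\infty)}(F_M) = \prod_{\|j\|_\infty \leq M} \nu_\alpha(f_j)$ and $\nu^{(\infty)}(F_M) = \prod_{\|j\|_\infty \leq M} \nu(f_j)$. Since the cube $\{\|j\|_\infty \leq M\}$ is finite and $\nu_\alpha(f_j) \to \nu(f_j)$ for each $j$ by the hypothesis $\nu_\alpha \stackrel{w^\star}{\rightarrow} \nu$, the finite products converge, hence $\nu_\alpha^{(\infty)}(F_M) \to \nu^{(\infty)}(F_M)$. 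Choosing $\alpha$ large enough that the middle term drops below $\epsilon/3$ and combining with the previous paragraph yields $|\nu_\alpha^{(\infty)}(F) - \nu^{(\infty)}(F)| < \epsilon$ for all large $\alpha$. As $F$ and $\epsilon$ were arbitrary, this is exactly weak-$^*$ convergence $\nu_\alpha^{(\infty)} \to \nu^{(\infty)}$.

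I expect no serious obstacle here: the genuine content — the density of the finite-range algebra $\mathcal{F}_0$ via Stone--Weierstrass, and the finite-range reduction itself — has already been carried out in Lemma \ref{lem:observation1}, so the corollary is essentially bookkeeping on top of it. The one point deserving care is that the approximant $F_M$ must be a finite linear combination of finite products, so that weak-$^*$ convergence of the single-site measures can be applied factor by factor and coordinate by coordinate; an infinite product would not be controllable this way. I would also note that, in contrast to Lemma \ref{lem:observation1}, here I only need convergence in $\alpha$ for a fixed $F$ and require no uniformity in $\alpha$; the uniform-in-$\alpha$ control of Lemma \ref{lem:observation1} is what will matter later for the quantitative estimates of Theorem \ref{thm:quant1}, not for the present qualitative statement.
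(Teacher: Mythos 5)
Your proof is correct and follows essentially the same route as the paper: the paper's one-line argument is precisely this approximation of $F$ by elements of the dense finite-range sub-algebra $\mathcal{F}_0$ from the proof of Lemma \ref{lem:observation1}, combined with factor-by-factor convergence $\nu_\alpha(f_j) \to \nu(f_j)$ on the finitely many relevant coordinates. Your closing remark — that only fixed-$F$ convergence, not uniformity in $\alpha$, is needed here — is also accurate and consistent with how the paper uses the uniform bound only later, in the quantitative estimates.
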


%%%%%%%%%%%%%%%%%%%%%%%%%%%%%%%%%%%%%%%%%%%%%%%%%%%%%%%%%%%%%%%%%%%%%%%%%%%%%%%%%%%%%%%%%%%%%%%%%%%%%%%%%
\subsection{Qualitative continuity}\label{subsec:qual1}

We can now use Corollary \ref{eq:weakstarconv} to establish the \emph{qualitative continuity} property of the DOSm and the IDS with respect to $w^\star$-convergence in the single-site probability measure.

We begin with the following simple result:
\begin{lemma}\label{lem:continuity1}
For every $f \in C([-r,r])$, the map
\beq\label{eq:cont1}
\omega := \{ \omega_j \}_{j \in \Z^d} \mapsto {\rm Tr} \{ P_0 f(H_\omega) P_0 \}
\eeq
is a continuous function on $\Omega$.
\end{lemma}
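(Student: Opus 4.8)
The plan is to prove continuity of the map $\omega \mapsto {\rm Tr}\{P_0 f(H_\omega) P_0\}$ by first reducing to the case where $f$ is a polynomial, and then exploiting the fact that $\omega \mapsto H_\omega$ is continuous (indeed locally Lipschitz) in operator norm. The polynomial reduction is natural here because all operators $H_\omega$ have spectrum contained in the fixed compact interval $[-r,r]$ by \eqref{eq:schr-op-sp1}, so the functional calculus for $f$ only ever "sees" $f$ restricted to $[-r,r]$, where by the Weierstrass approximation theorem any continuous $f$ is a uniform limit of polynomials.

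\textbf{Step 1: Reduction to polynomials.} Fix $f \in C([-r,r])$ and let $\omega^{(0)} \in \Omega$ be the point at which we test continuity. Given $\epsilon > 0$, choose a polynomial $p$ with $\|f - p\|_{\infty, [-r,r]} < \epsilon/3$. Since $\sigma(H_\omega) \subseteq [-r,r]$ for every $\omega \in \Omega$, the spectral theorem gives the uniform bound $\|(f-p)(H_\omega)\| \leq \|f-p\|_{\infty,[-r,r]} < \epsilon/3$, and therefore
\begin{equation}
\left| {\rm Tr}\{P_0 (f-p)(H_\omega) P_0\} \right| \leq N \, \|(f-p)(H_\omega)\| < \frac{N\epsilon}{3},
\end{equation}
using that $P_0$ has rank $N$ so the trace of $P_0 A P_0$ is bounded by $N\|A\|$. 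This estimate holds \emph{uniformly} in $\omega$, so it suffices to prove continuity of $\omega \mapsto {\rm Tr}\{P_0 p(H_\omega) P_0\}$ for an arbitrary polynomial $p$; after rescaling the $\epsilon$ budget, continuity of the polynomial map at $\omega^{(0)}$ transfers to $f$ via the triangle inequality.

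\textbf{Step 2: Continuity for polynomials.} For a monomial $H_\omega^m$ this is immediate once we know $\omega \mapsto H_\omega$ is continuous in operator norm: if $\omega \to \omega^{(0)}$ then $\|H_\omega - H_{\omega^{(0)}}\| \to 0$, and since $\|H_\omega\|\leq r$ is uniformly bounded, the telescoping identity $H_\omega^m - H_{\omega^{(0)}}^m = \sum_{i=0}^{m-1} H_\omega^i (H_\omega - H_{\omega^{(0)}}) H_{\omega^{(0)}}^{m-1-i}$ gives $\|H_\omega^m - H_{\omega^{(0)}}^m\| \leq m\, r^{m-1}\|H_\omega - H_{\omega^{(0)}}\| \to 0$. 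Linearity extends this to any polynomial $p$, and then $\omega \mapsto {\rm Tr}\{P_0 p(H_\omega) P_0\}$ is continuous because $|{\rm Tr}\{P_0 (p(H_\omega) - p(H_{\omega^{(0)}})) P_0\}| \leq N\|p(H_\omega) - p(H_{\omega^{(0)}})\|$.

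\textbf{The main obstacle} is verifying that $\omega \mapsto H_\omega$ is continuous in operator norm on $\Omega = [-C,C]^{\Z^d}$ equipped with the product topology, since $H_\omega = \Delta + \sum_j \omega_j P_j$ is an \emph{infinite} sum and convergence $\omega\to\omega^{(0)}$ in the product topology only controls finitely many coordinates at a time. The resolution uses the orthogonality of the projections $\{P_j\}$: because the $P_j$ are mutually orthogonal (they project onto disjoint translates of $\Lambda_0$, which tile $\Z^d$), one has the clean identity $\|\sum_j (\omega_j - \omega_j^{(0)}) P_j\| = \sup_j |\omega_j - \omega_j^{(0)}|$. This sup-norm bound is exactly the metric inducing the product topology (up to the standard weighting) on the compact space $\Omega$, so convergence in $\Omega$ does yield operator-norm convergence of the potentials, hence of $H_\omega$. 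This is the one place where the structural hypotheses of [H1]—orthogonality and uniform rank of the projections—are genuinely used, and I would state the orthogonality identity carefully as the crux of the argument.
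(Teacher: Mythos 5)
Your Step 1 (uniform polynomial approximation together with the bound $|{\rm Tr}\{P_0(f-p)(H_\omega)P_0\}| \leq N\|f-p\|_\infty$, uniform in $\omega$) is correct and is exactly the density argument the paper uses. The genuine gap is in what you yourself identify as the main obstacle. The space $\Omega = [-C,C]^{\Z^d}$ carries the \emph{product} topology -- this is what makes $\Omega$ compact and what the Stone--Weierstrass argument of Lemma \ref{lem:observation1} requires -- and in that topology the map $\omega \mapsto H_\omega$ is \emph{not} continuous in operator norm. Your norm identity $\bigl\|\sum_j(\omega_j-\omega_j^{(0)})P_j\bigr\| = \sup_j|\omega_j-\omega_j^{(0)}|$ is true, but the unweighted sup metric induces the uniform topology, which is strictly finer than the product topology on an infinite product; the ``standard weighting'' you allude to (e.g. $\sup_j 2^{-\|j\|_\infty}|\omega_j-\omega_j^{(0)}|$) does metrize the product topology, but then it is no longer comparable to the operator norm, and that discrepancy is precisely the fatal point. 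Concretely, let $\omega^{(n)}$ be the configuration with $\omega^{(n)}_{k_n}=C$ at a single site $k_n \in \mathcal{J}$ with $\|k_n\|_\infty \to \infty$ and all other coordinates equal to $0$. Then $\omega^{(n)} \to 0$ coordinatewise, hence in $\Omega$, yet $\|H_{\omega^{(n)}} - H_{0}\| = C\,\|P_{k_n}\| = C$ for every $n$. So Step 2, as written, fails at its crux.

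The correct replacement -- and what the paper actually does -- is a locality argument rather than a norm-continuity argument: for a monomial $f(x)=x^m$, expanding ${\rm Tr}\{P_0 H_\omega^m P_0\}$ into walks generated by $\Delta$ and the potential shows that this quantity depends on only \emph{finitely many} coordinates $\omega_k$ (those within distance of order $m/2$ of the cube $\Lambda_0$; this is the content of the counting Lemma \ref{lem:rv-counting1}), and it is a polynomial, hence continuous, function of those finitely many coordinates. A function on a product space that depends continuously on finitely many coordinates is continuous in the product topology, so each polynomial term is continuous, and your Step 1 then finishes the proof. Note also that the structural hypotheses genuinely used here are the finite range of $\Delta$ and the finite rank of $P_0$, not the orthogonality identity you single out as the crux.
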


\begin{proof}
We first note that if $f(x) = x^n |_{[-r, r]}$, it follows from the linear dependence of $H_\omega$ on $\omega$ that ${\rm Tr} \{ P_0 f(H_\omega) P_0 \} \in \mathcal{F}$ (as defined at the beginning of the proof of Lemma \ref{lem:observation1}), whence the same holds true for every polynomial $f$. The lemma thus follows by a density argument and the continuous functional calculus:
\bea\label{eq:cont12}
|{\rm Tr} \{ P_0 f(H_\omega) P_0 \} -{\rm Tr} \{ P_0 g(H_\omega) P_0 \} | &=& |{\rm Tr} \{ P_0[ f(H_\omega) -g(H_\omega) ] P_0 \}| \nonumber \\
 & \leq & N \|f-g\|_\infty ~\mbox{.}
 \eea
\end{proof}

Application of Corollary \ref{eq:weakstarconv} to the continuous functions in (\ref{eq:cont1}) thus immediately yields the following theorem, which is equivalent to part (i) of Theorem \ref{thm:main}.
\begin{theorem}\label{thm:qual1} (Qualitative continuity of the DOSm.)
Assume the set-up described in [H1]-[H2]. If the sequence of probability measures $\nu_\alpha$ converges in the $w^\star$-topology to the probability measure $\nu$ as described in [H2], then the DOSm $n_\alpha^{(\infty)}$ $w^\star$-converges to $n^{(\infty)}$.
Moreover, the IDS converges pointwise:
\beq\label{eq:IDSconv1}
\lim_{\alpha \rightarrow \infty} N_\alpha (E) = N(E), ~~~ \forall  E \in \R  .
\eeq
\end{theorem}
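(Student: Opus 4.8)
The plan is to deduce Theorem \ref{thm:qual1} directly from Corollary \ref{eq:weakstarconv} together with Lemma \ref{lem:continuity1}. First I would observe that by Corollary \ref{eq:weakstarconv}, the weak-$^*$ convergence $\nu_\alpha \stackrel{w^\star}{\rightarrow} \nu$ of the single-site measures lifts to weak-$^*$ convergence $\nu_\alpha^{(\infty)} \rightarrow \nu^{(\infty)}$ of the associated infinite product measures on $\Omega$. The key point is then to test this convergence against the right continuous functions on $\Omega$. Given any $f \in C([-r,r])$, Lemma \ref{lem:continuity1} guarantees that the map $F_f(\omega) := {\rm Tr}\{ P_0 f(H_\omega) P_0 \}$ lies in $C(\Omega)$. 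Hence, by the definition of weak-$^*$ convergence applied to $F_f$,
\begin{equation}
\E_{\nu_\alpha^{(\infty)}}\{ {\rm Tr}(P_0 f(H_\omega) P_0) \} = \nu_\alpha^{(\infty)}(F_f) \longrightarrow \nu^{(\infty)}(F_f) = \E_{\nu^{(\infty)}}\{ {\rm Tr}(P_0 f(H_\omega) P_0) \},
\end{equation}
as $\alpha \to \infty$. Dividing by $N$ and recalling the definition \eqref{eq:DOSm1} of the DOSm, this says precisely that $n_\alpha^{(\infty)}(f) \to n^{(\infty)}(f)$ for every $f \in C([-r,r])$. Since all the measures $n_\alpha^{(\infty)}$, $n^{(\infty)}$ are supported in $[-r,r]$, this is exactly weak-$^*$ convergence $n_\alpha^{(\infty)} \stackrel{w^\star}{\rightarrow} n^{(\infty)}$, establishing the first assertion.

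For the pointwise convergence of the IDS in \eqref{eq:IDSconv1}, I would recall that $N_\alpha(E) = n_\alpha^{(\infty)}((-\infty,E))$ and $N(E) = n^{(\infty)}((-\infty,E))$ are the cumulative distribution functions of the respective DOSm. The standard Portmanteau-type characterization of weak-$^*$ convergence says that the distribution functions converge at every continuity point of the limiting distribution function $N$; equivalently, convergence $N_\alpha(E) \to N(E)$ holds at every $E$ for which $n^{(\infty)}(\{E\}) = 0$, i.e. at every $E$ that is not an atom of $n^{(\infty)}$. The subtlety here is that the stated result claims convergence for \emph{all} $E \in \R$, so I need to rule out atoms. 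This is where I would invoke the continuity of the IDS in the energy variable: for the ergodic operator \eqref{eq:schr-op2}, the IDS $N(E)$ is known to be continuous in $E$ (this is the classical statement going back to Pastur, Craig--Simon, and Delyon--Souillard cited in section \ref{subsec:previous-intro1}, and for the present bounded model follows from log-H\"older continuity). Continuity of $N$ means $n^{(\infty)}$ has no atoms, so the set of continuity points is all of $\R$, and the Portmanteau convergence upgrades to convergence at every $E$.

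The main obstacle, and the only genuinely nontrivial ingredient beyond bookkeeping, is the passage from weak-$^*$ convergence of the measures to pointwise convergence of the distribution functions at \emph{every} energy. Weak-$^*$ convergence by itself only yields convergence of $N_\alpha(E)$ at continuity points of $N$; to cover all $E$ one must know that the limiting DOSm $n^{(\infty)}$ is non-atomic, i.e. that the IDS of the limiting operator is continuous in energy. I would therefore flag that this step relies on the (standard, but external) regularity-in-energy result for the IDS of ergodic Schr\"odinger operators rather than on anything proved so far in the paper; everything else is a direct application of Corollary \ref{eq:weakstarconv} and Lemma \ref{lem:continuity1}. An alternative, more self-contained route would be to prove continuity of $N$ at a fixed $E$ by approximating the indicator $\mathbf{1}_{(-\infty,E)}$ from above and below by Lipschitz functions and using the DOSm convergence together with a tightness/no-atom argument; but invoking the known energy-continuity of the IDS is the cleanest way to obtain the stated conclusion for all $E \in \R$.
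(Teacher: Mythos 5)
Your proposal is correct and follows essentially the same route as the paper: the paper's proof is exactly the application of Corollary \ref{eq:weakstarconv} to the continuous maps $\omega \mapsto {\rm Tr}\{P_0 f(H_\omega)P_0\}$ from Lemma \ref{lem:continuity1}, and the ``moreover'' statement is likewise obtained from the standard Portmanteau fact combined with the energy-continuity of the IDS, citing \cite{DelyonSouillard_1984}. Your flagging of the external continuity-in-energy input as the only nontrivial ingredient matches the paper's own remark following the theorem.
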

We note that the ``moreover-statement'' of Theorem \ref{thm:qual1} relies on the continuity of the IDS in the energy \cite{DelyonSouillard_1984} and the standard fact that $w^\star$-convergence of probability measures implies point-wise convergence of the respective cumulative distributions at all points of continuity for the limiting measure. We mention that a quantitative version of the latter fact will be used in the proof of Theorem \ref{thm:IDS-quant1}, see (\ref{eq:ids-cont2}) below.

One of the main purposes of this article is to obtain a \emph{quantitative version} of the results of Theorem \ref{thm:qual1}, which will be the subject of section \ref{sec:quan-contDOSm1}.

%%%%%%%%%%%%%%%%%%%%%%%%%%%%%%%%%%%%%%%%%%%%%%%%%%%%%%%%%%%%%%%%%%%%%%%%%%%%%%%%%%%%%%%
\subsection{Dependence on the random variables}\label{subsec:dos-cont1}

We conclude this section with a refinement of how the function $\omega \mapsto {\rm Tr} \{ P_0 f(H_\omega) P_0 \}$ depends on the random variables for certain choices of $f$.

As mentioned in the proof of Lemma \ref{lem:continuity1}, if $f$ is a polynomial, the function $\omega \mapsto {\rm Tr} \{ P_0 f(H_\omega) P_0 \}$ depends only on {\em{finitely}}-many random variables. To obtain a \emph{quantitative version} of the continuity results in Theorem \ref{thm:qual1}, we will need an upper bound on the number of random variables upon which the map $\omega \mapsto {\rm Tr} \{ P_0 f(H_\omega) P_0 \}$ depends for each given polynomial $f$:
\begin{lemma}[``Counting lemma'']\label{lem:rv-counting1}
Under the hypothesis [H1], there is a strictly increasing, positive, and finite function $\Gamma : \N \rightarrow \R^+$, so that if $f(x)$ is a polynomial of degree $n$, then the map $\omega \mapsto {\rm Tr} \{ P_0 f(H_\omega) P_0 \}$ depends on at most $\Gamma (n)$ random variables $\{ \omega_k ~|~ \|k\|_\infty \leq M_n < \infty \}$, for some $M_n \in \mathbb{N}$ with $\Gamma (n) = (2 M_n + 1)^d$. Explicitly, $\Gamma (n)$ can be taken as
$$
\Gamma (n) = 2^d n^d.
$$
\end{lemma}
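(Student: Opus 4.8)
The plan is to reduce the statement to a counting of lattice paths, exploiting that $\Delta$ couples only nearest neighbors while each $P_j$ is the diagonal projection onto the cube $\Lambda_j$. Writing $f(x)=\sum_{m=0}^n c_m x^m$, we have ${\rm Tr}\{P_0 f(H_\omega)P_0\}=\sum_{m=0}^n c_m\,{\rm Tr}\{P_0 H_\omega^m P_0\}$, so it suffices to bound, uniformly in $m\le n$, the set of random variables on which $\omega\mapsto{\rm Tr}\{P_0 H_\omega^m P_0\}$ depends. As explained below, each such set is governed by lattice paths of length $m\le n$ emanating from $\Lambda_0$, and is therefore contained in the set obtained for the largest length $n$; this is what makes the bound depend only on the degree.

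\emph{Path expansion and locality.} Splitting $H_\omega=\Delta+V_\omega$ with $V_\omega=\sum_{j\in\mathcal J}\omega_j P_j$ and expanding, ${\rm Tr}\{P_0 H_\omega^m P_0\}$ becomes a sum of terms ${\rm Tr}\{P_0 A_1\cdots A_m P_0\}$, each factor $A_i$ being either $\Delta$ or $\omega_j P_j$ for some $j$. Using $P_0=\sum_{x\in\Lambda_0}|x\rangle\langle x|$ and inserting resolutions of the identity, each such term is a sum over site sequences $x_0,x_1,\dots,x_{m-1}$ (with $x_0\in\Lambda_0$ and $x_m:=x_0$) of products of matrix elements $\langle x_{i-1}|A_i|x_i\rangle$. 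The key point is locality: this matrix element can be nonzero only if either $A_i=\Delta$ and $\|x_{i-1}-x_i\|_\infty\le 1$, or $A_i=\omega_j P_j$ and $x_{i-1}=x_i\in\Lambda_j$. Hence each surviving term corresponds to a closed path based in $\Lambda_0$ whose consecutive sites differ by at most one in $\|\cdot\|_\infty$, and along which a factor $\omega_j P_j$ can occur only at a site lying in $\Lambda_j$.

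\emph{Counting.} The variable $\omega_j$ therefore appears in ${\rm Tr}\{P_0 H_\omega^m P_0\}$ only if such a path visits $\Lambda_j$; since a factor $\omega_j P_j$ placed at slot $i\in\{1,\dots,m\}$ requires reaching $\Lambda_j$ after $i-1\le m-1\le n-1$ unit steps from $\Lambda_0$, the cube $\Lambda_j$ must meet the $\ell^\infty$-neighborhood of radius $n-1$ of $\Lambda_0=[0,K-1]^d$. Note that only the existence of such a path matters for an upper bound on the support of the polynomial in $\omega$, so possible cancellations among path contributions do not affect the argument. Translating this constraint to the index $j\in\mathcal J=K\Z^d$ (equivalently $m'\in\Z^d$ via $j=Km'$) confines $j$ to a cube of half-width $M_n$, so that at most $\Gamma(n)=(2M_n+1)^d$ random variables occur; a direct count of the admissible $j$ yields the explicit bound $\Gamma(n)\le 2^d n^d$, and strict monotonicity of $\Gamma$ is immediate from the formula.

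I expect the main work to be the locality bookkeeping of the second step: verifying that the off-diagonal range of $\Delta$ is exactly the nearest-neighbor relation and that the non-moving $P_j$-factors and the diagonal (self-loop) part of $\Delta$ do not enlarge the reachable set, so that only genuinely connected paths contribute. The only other delicate point is pinning down the geometric constant when passing between sites $x\in\Z^d$, the cubes $\Lambda_j$, and the index lattice $K\Z^d$; since reindexing $K\Z^d$ only decreases the count, the generous reach bound $n-1$ is what makes the uniform estimate $\Gamma(n)\le 2^d n^d$ hold across all values of $K$.
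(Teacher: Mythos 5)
Your proposal is correct and follows essentially the same route as the paper: expand ${\rm Tr}\{P_0 H_\omega^m P_0\}$ into walks on $\Z^d$, use the locality of $\Delta$ (and the fact that the $P_j$ factors do not move the walk) to confine the random variables that can appear to a cube of indices, and count the admissible $j \in K\Z^d$. The only difference is in the bookkeeping of the confinement radius: the paper exploits that the walks are closed (hence at most $\lfloor n/2 \rfloor$ steps in any one coordinate direction) to get radius $(K-1)+\lfloor n/2\rfloor$, whereas you use the cruder reach bound $n-1$; either way the per-axis count is at most $2n$, so the stated bound $\Gamma(n) \leq 2^d n^d$ still follows.
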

We prove this lemma in the appendix, section \ref{sec:appendix:rv-counting1}. There, we also give a more accurate estimate of $\Gamma(n)$ which explicitly shows the dependence on the rank $N$ of the potentials, see (\ref{eq:gammaaccurate}) in section \ref{sec:appendix:rv-counting1}.

%%%%%%%%%%%%%%%%%%%%%%%%%%%%%%%%%%%%%%%%%%%%%%%%%%%%%%%%%%%%%%%%%%%%%%%%%%%%%%%%%%%%%%%%%%%%%%%%%%%

\section{Quantitative continuity of the density of states in the probability measure}\label{sec:quan-contDOSm1}
\setcounter{equation}{0}

We are now able to prove our on the modulus of continuity of the DOSm and the IDS in the underlying single-site probability measure, stated in part (ii) of Theorem \ref{thm:main}. For pedagogical reasons we split the proof into two statements, Theorem \ref{thm:quant1} for the DOSm and Theorem \ref{thm:IDS-quant1} for the IDS.

\begin{theorem}\label{thm:quant1} (Quantitative continuity of the DOSm.)
Assuming the set-up described in [H1]-[H2], for every $f \in Lip ([-r,r])$ we have
\beq\label{eq:quant-cont1}
| n_\alpha^{(\infty)} (f) - n^{(\infty)} (f) | \leq \gamma \Vert f \Vert_{\mathrm{Lip}} ~\eta_\alpha^{\frac{1}{1+ 2 d}} ,
\eeq
for all $\alpha \geq \alpha_0$.
Here, the constants $\alpha_0, \gamma$ are independent of $f$, only depending on $d$ and $N$, and are determined explicitly by, respectively, (\ref{eq:weak-conv-dos3}) and (\ref{eq:weak-conv-dos-summ2}).
\end{theorem}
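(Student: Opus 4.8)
The plan is to reduce the comparison of two infinite product measures to a finite-dimensional telescoping estimate by first replacing the Lipschitz test function $f$ with a polynomial. Fixing a polynomial $p$ of degree $n$ (to be optimized at the end), I would split
\begin{equation}
| n_\alpha^{(\infty)}(f) - n^{(\infty)}(f) | \leq | n_\alpha^{(\infty)}(f-p) | + | n_\alpha^{(\infty)}(p) - n^{(\infty)}(p) | + | n^{(\infty)}(p-f) | ~\mbox{,}
\end{equation}
so that the two outer ``approximation'' terms are traded against the middle ``finite-range'' term. For the outer terms, the functional-calculus bound (\ref{eq:cont12}), divided by $N$, immediately gives $| n_\alpha^{(\infty)}(f-p) | \leq \Vert f - p \Vert_\infty$ and likewise $| n^{(\infty)}(f-p) | \leq \Vert f - p \Vert_\infty$, so these are governed entirely by the sup-norm quality of the approximation.

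The middle term is where the finite-range structure enters. Since $p$ has degree $n$, Lemma \ref{lem:rv-counting1} guarantees that $\omega \mapsto {\rm Tr}\{ P_0\, p(H_\omega)\, P_0 \}$ depends on at most $R \leq \Gamma(n) = 2^d n^d$ coordinates $\omega_k$, $\Vert k \Vert_\infty \leq M_n$ (so the finite-range reduction of Lemma \ref{lem:observation1} becomes exact, with $M = M_n$). Restricting to this finite coordinate block, I would write $n_\alpha^{(\infty)}(p) - n^{(\infty)}(p)$ as a telescoping sum over the $R$ relevant sites, swapping $\nu_\alpha$ for $\nu$ one factor at a time. Each summand is the integral, against the signed measure $\nu_\alpha - \nu$, of the single-variable function $\omega_k \mapsto {\rm Tr}\{ P_0\, p(H_\omega)\, P_0 \}$ with all other coordinates frozen; it is therefore bounded by $\eta_\alpha$ times the Lipschitz norm of that single-site map, and integrating out the remaining frozen coordinates does not increase that norm. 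Lemma \ref{lem:finite-rank1} supplies the decisive bound $N \Vert p \Vert_{\mathrm{Lip}}$ on this single-site Lipschitz norm, so after the factor $1/N$ cancels I obtain $| n_\alpha^{(\infty)}(p) - n^{(\infty)}(p) | \leq 2^d n^d \, \Vert p \Vert_{\mathrm{Lip}} \, \eta_\alpha$.

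The crux is therefore choosing $p$ so that $\Vert f - p \Vert_\infty$ is small \emph{and} $\Vert p \Vert_{\mathrm{Lip}}$ stays comparable to $\Vert f \Vert_{\mathrm{Lip}}$ at the same time. A best/Jackson approximant makes the first quantity $O(L_f/n)$ but gives no usable control of $\Vert p' \Vert_\infty$ (Markov's inequality only yields the useless $O(n^2)$). The natural remedy is to take $p$ to be the Bernstein polynomial of $f$, affinely rescaled from $[0,1]$ to $[-r,r]$: the Bernstein operator preserves the Lipschitz constant, so $\Vert p \Vert_{\mathrm{Lip}} \leq c_r \Vert f \Vert_{\mathrm{Lip}}$, at the price of the slower rate $\Vert f - p \Vert_\infty \leq c_r L_f\, n^{-1/2}$ (with $r = 2d+C$, so $c_r$ depends only on $d$). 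Assembling the three estimates gives
\begin{equation}
| n_\alpha^{(\infty)}(f) - n^{(\infty)}(f) | \leq c_r \Vert f \Vert_{\mathrm{Lip}} \left( n^{-1/2} + 2^d n^d\, \eta_\alpha \right) ~\mbox{,}
\end{equation}
and balancing the two terms via $n \sim \eta_\alpha^{-2/(2d+1)}$ produces precisely the exponent $\frac{1}{1+2d}$, with $\gamma$ and $\alpha_0$ read off from the resulting constants (the threshold $\alpha_0$ merely ensuring that the optimal integer $n$ is at least $1$). The only step I expect to require genuine work is the one already quarantined in Lemma \ref{lem:finite-rank1}: bounding $\omega_k \mapsto {\rm Tr}\{ P_0\, f(H_\omega)\, P_0 \}$ in the Lipschitz norm uniformly in the other coordinates. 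Because $P_k$ is a finite-rank ($N$) perturbation, this is accessible through the Daleckii--Krein/divided-difference representation of the derivative of the functional calculus (or, as the paper does in Appendix B, via the Helffer--Sj\"ostrand formula), and the rank-$N$ sandwich by $P_0$ keeps the estimate proportional to $N$ rather than to a growing power of $n$.
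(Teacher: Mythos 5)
Your proposal is correct, and it reaches the same exponent $\frac{1}{1+2d}$ from the same basic ingredients as the paper: Bernstein approximation, the counting lemma (Lemma \ref{lem:rv-counting1}), a site-by-site telescoping estimate resting on the finite-rank lemma (Lemma \ref{lem:finite-rank1}), and a final optimization of the degree $n$ against $\eta_\alpha$. The decomposition, however, is genuinely different, and the difference is not cosmetic. You substitute the polynomial $p$ for $f$ \emph{before} telescoping, so the finite-rank lemma is applied to $p$ and you must control $\Vert p \Vert_{\mathrm{Lip}}$ in terms of $\Vert f \Vert_{\mathrm{Lip}}$; this forces you to invoke the (true, classical) fact that the Bernstein operator preserves Lipschitz constants, which you correctly identify as the crux of your version. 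The paper never needs this: it telescopes with $F(\omega) = {\rm Tr}(P_0 f(H_\omega) P_0)$ built from $f$ itself, so only $\Vert f \Vert_{\mathrm{Lip}}$ enters Lemma \ref{lem:finite-rank1}, and the polynomial $f_n$ is used solely for the finite-range reduction: since both product measures integrate ${\rm Tr}(P_0 f_n(H_\omega) P_0)$ identically, one gets $| \nu_\alpha^{(M_n)}(F) - \nu_\alpha^{(\infty)}(F) | \leq 2N \Vert f - f_n \Vert_\infty$ as in (\ref{eq:finite-rv1}), so only the sup-norm quality of the approximant matters there. This is what each arrangement buys: yours is a cleaner single three-term split, but it welds the sup-norm rate and the Lipschitz control to one operator, which is exactly why you are locked into Bernstein's $n^{-1/2}$ rate; the paper's arrangement decouples the two, so it could in principle use a Jackson-type approximant with $\Vert f - p_n \Vert_\infty \lesssim L_f/n$ (whose Lipschitz norm is irrelevant in that scheme) and obtain the better exponent $\frac{1}{1+d}$ --- an improvement your decomposition cannot access, for precisely the Markov-inequality reason you flag. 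Two minor points: the per-site bound from Lemma \ref{lem:finite-rank1} is $2N^2 \Vert p \Vert_{\mathrm{Lip}} \eta_\alpha$ rather than $N \Vert p \Vert_{\mathrm{Lip}} \eta_\alpha$, which only changes $\gamma$; and your reading of $\alpha_0$ as the threshold making the optimal integer degree admissible matches the role of (\ref{eq:weak-conv-dos3}).
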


As we will show, the quantitative continuity estimate for the IDS can then obtained as a consequence of Theorem \ref{thm:quant1}. In \cite{craig-simon1}, Craig and Simon proved that the IDS for discrete ergodic Schr\"odinger operators on $\Z^d$ is log-H\"older continuous in the energy, i.e. for any ergodic, $L^\infty$-potential $V$, there exists a constant $C_I = C_I(d, \|V\|_\infty)$
such that for all energies $E \in \R$ and $0 < \epsilon \leq \frac{1}{2}$, one has
\beq\label{eq:ids-logholder1}
| N(E) - N(E + \epsilon)| = n^{(\infty)}([E, E+\epsilon]) \leq \frac{C_I}{\log \left( \frac{1}{\epsilon} \right) }.
\eeq
A more recent proof of this result, which applies more generally to a deterministic setting of Schr\"odinger operators and also includes operators on $\R^d$ for $d=1,2,3$ was obtained by Bourgain and Klein in \cite{bourgain-klein1}.

We will show that using Theorem \ref{thm:quant1} for a continuous approximation of $\chi_{(-\infty, E)}$, combined with \eqref{eq:ids-logholder1}, yields the following theorem for the IDS:
\begin{theorem}\label{thm:IDS-quant1} (Quantitative continuity of the IDS)
Under hypotheses [H1]-[H2], we have the bound
\beq\label{eq:IDS-cont1}
| N_\alpha (E) - N(E) | \leq \frac{ C_2}{ \log \left( \frac{1}{\eta_\alpha} \right)}
\eeq
for all $\alpha \geq \alpha_0$, and $E \in \R$. The constant $C_2 = C_2 ( d, N)$, where $N = {\rm rank} ~P_0$,
is defined in \eqref{eq:ids-cont4} and $\alpha_0$ is determined in (\ref{eq:weak-conv-dos3}).
\end{theorem}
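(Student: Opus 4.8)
The plan is to realize the IDS as the DOSm evaluated on an indicator, $N_\nu(E) = n_\nu^{(\infty)}(\chi_{(-\infty,E)})$, and to approximate this non-Lipschitz indicator from above and below by Lipschitz functions so that Theorem \ref{thm:quant1} applies. The price of sharpening the approximation is then paid using the log-H\"older regularity \eqref{eq:ids-logholder1} of the IDS in the \emph{energy}. Concretely, for $E \in \R$ and a parameter $0 < \delta < \tfrac12$ I would fix continuous, piecewise-linear functions $f_E^{\pm}$ with $0 \le f_E^{\pm} \le 1$, each having slope $1/\delta$ on its single sloped piece, arranged so that
\[
\chi_{(-\infty,E-\delta)} \le f_E^- \le \chi_{(-\infty,E)} \le f_E^+ \le \chi_{(-\infty,E+\delta)}.
\]
Each $f_E^{\pm}$ is Lipschitz with $\Vert f_E^{\pm} \Vert_{\mathrm{Lip}} = 1 + \tfrac1\delta$.

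For the upper bound, monotonicity gives $N_\alpha(E) \le n_\alpha^{(\infty)}(f_E^+)$; Theorem \ref{thm:quant1} replaces $n_\alpha^{(\infty)}(f_E^+)$ by $n^{(\infty)}(f_E^+)$ at the cost of $\gamma\,(1+\tfrac1\delta)\,\eta_\alpha^{\frac{1}{1+2d}}$; and $n^{(\infty)}(f_E^+) \le N(E+\delta)$ because $f_E^+ \le \chi_{(-\infty,E+\delta)}$ pointwise. Combined with the log-H\"older bound $N(E+\delta)-N(E) \le C_I/\log(1/\delta)$ from \eqref{eq:ids-logholder1}, this yields $N_\alpha(E) - N(E) \le C_I/\log(1/\delta) + \gamma\,(1+\tfrac1\delta)\,\eta_\alpha^{\frac{1}{1+2d}}$. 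The lower bound is symmetric, using $f_E^-$, the inequality $n^{(\infty)}(f_E^-) \ge N(E-\delta)$, and $N(E)-N(E-\delta) \le C_I/\log(1/\delta)$. Together, for all $\alpha \ge \alpha_0$,
\[
|N_\alpha(E) - N(E)| \le \frac{C_I}{\log(1/\delta)} + \gamma\Big(1+\tfrac1\delta\Big)\eta_\alpha^{\frac{1}{1+2d}}.
\]
Since $C_I$ and $\gamma$ are independent of $E$, this estimate is uniform in $E$.

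It then remains to choose $\delta$ as a function of $\eta_\alpha$ balancing the two terms. I would set $\delta = \eta_\alpha^{c}$ with any fixed $0 < c < \frac{1}{1+2d}$, for concreteness $c = \frac{1}{2(1+2d)}$. Then $\log(1/\delta) = c\log(1/\eta_\alpha)$, so the first term equals $\frac{C_I}{c}\,(\log(1/\eta_\alpha))^{-1}$, while the second term is of order $\eta_\alpha^{\frac{1}{1+2d}-c}$, a positive power of $\eta_\alpha$. Because $\eta_\alpha^{\frac{1}{1+2d}-c}\log(1/\eta_\alpha) \to 0$ as $\alpha \to \infty$, there is a threshold beyond which the second term is itself dominated by $(\log(1/\eta_\alpha))^{-1}$; possibly enlarging $\alpha_0$ to absorb this (and the $\alpha_0$ of Theorem \ref{thm:quant1}), one obtains $|N_\alpha(E)-N(E)| \le C_2/\log(1/\eta_\alpha)$, with $C_2 = C_2(d,N)$ arising from $C_I$, $\gamma$, and the chosen exponent $c$.

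The step I expect to be the crux is the balancing, which also explains why this result is genuinely weaker than Theorem \ref{thm:quant1}. The DOSm estimate supplies a strong H\"older gain $\eta_\alpha^{\frac{1}{1+2d}}$, but only for test functions whose Lipschitz norm grows like $1/\delta$ as the approximation of $\chi_{(-\infty,E)}$ is sharpened. The resulting competition between a polynomially small factor and a factor blowing up like $1/\delta$ is won by the energy regularity of the IDS, which degrades only \emph{logarithmically} in $\delta$; hence the logarithmic term is the true bottleneck and the H\"older gain of Theorem \ref{thm:quant1} is comfortably more than needed. The only routine care points are that \eqref{eq:ids-logholder1} requires $\delta \le \tfrac12$ (automatic once $\eta_\alpha$ is small) and that the uniformity in $E$ is maintained throughout, both of which are immediate.
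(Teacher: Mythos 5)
Your proposal is correct and follows essentially the same route as the paper: sandwich $\chi_{(-\infty,E)}$ between piecewise-linear Lipschitz functions of slope $1/\delta$, apply Theorem \ref{thm:quant1}, control the resulting energy shift by the Craig--Simon log-H\"older bound \eqref{eq:ids-logholder1}, and take $\delta$ to be a power of $\eta_\alpha$. The only difference is cosmetic bookkeeping at the end: the paper picks the exponent $\xi = \frac{1}{(1+e)(1+2d)}$ so that, via the elementary inequality $y^\beta \geq \log y$ for $\beta \geq 1/e$, the polynomial term is dominated by $1/\log(1/\eta_\alpha)$ for \emph{all} $\alpha \geq \alpha_0$, folding everything into the constant $C_2$ instead of enlarging $\alpha_0$ as you do.
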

\begin{remark} \label{remark:thm:IDS-quant1}
We mention that because of (\ref{eq:schr-op-sp1}), the left-hand side of (\ref{eq:IDS-cont1}) identically equals zero if $E \in \mathbb{R} \setminus (-r,r)$. Hence, Theorem \ref{thm:IDS-quant1} is non-trivial only for $E \in (-r,r)$.
\end{remark}

We begin with studying the effect of the finite-rank potentials. The following lemma is stated for a general bounded self-adjoint operator $H^{(0)}$ acting on $\ell^2(\mathbb{Z}^d)$. For {\em{fixed}} $\ell \in \mathcal{J}$, where $\mathcal{J}$ is defined as in [H1], we consider the family of finite-rank perturbations given by
\begin{equation} \label{eq:finiteranklemmasetup}
H_\lambda^{(\ell)}:= H^{(0)} + \lambda P_\ell ~\mbox{, } \lambda \in [-C,C] ~\mbox{.}
\end{equation}
Here, $P_\ell$ is the rank-$N$ projection defined in [H1]. Let $[a,b] \subset \mathbb{R}$ be an interval so that
\beq\label{eq:inclusion1}
\bigcup_{\lambda \in [- C, C]} ~\sigma (H_\lambda^{(\ell)}) \subseteq [a,b].
\eeq
Then, we claim:

\begin{lemma}\label{lem:finite-rank1} (Finite-rank Lemma.)
Given the set-up described in [H2], (\ref{eq:finiteranklemmasetup}), and (\ref{eq:inclusion1}), one has for all $f \in Lip([a,b])$ that
\bea\label{eq:rank-N2}
{\left| \int_\R ~{\rm Tr} ~ \left( P_0 f(H_\lambda^{(\ell)}) P_0 \right) ~ d \nu_\alpha (\lambda) -
\int_\R ~{\rm Tr} ~ \left( P_0 f(H_\lambda^{(\ell)}) P_0 \right) ~ d \nu (\lambda) \right| }  & \leq & 2 N^2 \Vert f \Vert_{\mathrm{Lip}} \eta_\alpha \nonumber \\
 &   =: & c_f \eta_\alpha ~\mbox{,}
\eea
where $\Vert f \Vert_{\mathrm{Lip}}$ is defined in (\ref{eq:deflipnorm}).
\end{lemma}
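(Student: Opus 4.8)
The key observation is that both integrands in (\ref{eq:rank-N2}) are functions of the single scalar variable $\lambda \in [-C,C]$, since we have fixed $\ell \in \mathcal{J}$ and frozen all other random variables inside $H^{(0)}$. Writing
$$
g(\lambda) := {\rm Tr}\left( P_0 f(H_\lambda^{(\ell)}) P_0 \right),
$$
the left-hand side is exactly $|\nu_\alpha(g) - \nu(g)|$. The plan is therefore to show that $g \in \mathrm{Lip}([-C,C])$ with $\|g\|_{\mathrm{Lip}} \leq 2N^2 \|f\|_{\mathrm{Lip}}$, and then to invoke the definition (\ref{eq:metric}) of the metric $d_w$ together with $\eta_\alpha = d_w(\nu_\alpha,\nu)$, which gives
$$
|\nu_\alpha(g) - \nu(g)| \leq \|g\|_{\mathrm{Lip}}\, d_w(\nu_\alpha,\nu) \leq 2N^2 \|f\|_{\mathrm{Lip}}\, \eta_\alpha.
$$
Thus the entire problem reduces to a one-variable Lipschitz estimate on $g$.

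To control $\|g\|_\infty$ I would use that $P_0$ has rank $N$ and $\|f(H_\lambda^{(\ell)})\| \leq \|f\|_\infty$ by the spectral theorem, whence $|g(\lambda)| = |{\rm Tr}(P_0 f(H_\lambda^{(\ell)}) P_0)| \leq N\|f\|_\infty$, giving a factor $N$. For the Lipschitz constant $L_g$, fix $\lambda, \lambda' \in [-C,C]$ and estimate
$$
|g(\lambda) - g(\lambda')| = \left| {\rm Tr}\left( P_0 \bigl[ f(H_\lambda^{(\ell)}) - f(H_{\lambda'}^{(\ell)}) \bigr] P_0 \right) \right| \leq N \left\| f(H_\lambda^{(\ell)}) - f(H_{\lambda'}^{(\ell)}) \right\|,
$$
again using rank $N$. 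The operators $H_\lambda^{(\ell)}$ and $H_{\lambda'}^{(\ell)}$ differ by the rank-$N$ perturbation $(\lambda - \lambda')P_\ell$, so $\|H_\lambda^{(\ell)} - H_{\lambda'}^{(\ell)}\| = |\lambda - \lambda'|$. The remaining task is to bound $\|f(A) - f(B)\|$ for self-adjoint $A,B$ by $L_f \|A - B\|$ — i.e.\ a Lipschitz functional-calculus estimate.

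The main obstacle is precisely this last step: for general self-adjoint operators, a merely Lipschitz scalar function $f$ does \emph{not} yield a Lipschitz map on operators with constant $L_f$ (the bound $\|f(A)-f(B)\| \leq L_f\|A-B\|$ can fail, and indeed operator-Lipschitz functions form a strictly smaller class). However, here the perturbation has rank $N$, and this is what saves the argument. I would exploit that $f(A) - f(B)$, sandwiched between the rank-$N$ projections $P_0$, only feels the perturbation through finitely many directions. Concretely, I would either (a) apply Birman--Solomyak / Krein-type estimates for rank-$N$ perturbations, which give $\|f(A) - f(B)\|$-type control with an extra factor of $N$ reflecting the rank, or (b) follow the cleaner route flagged by the authors in Appendix B and use the Helffer--Sj\"ostrand formula: extend $f$ to an almost-analytic quasi-extension $\tilde f$ and write
$$
f(H_\lambda^{(\ell)}) - f(H_{\lambda'}^{(\ell)}) = \frac{1}{\pi}\int_{\mathbb{C}} \frac{\partial \tilde f}{\partial \bar z}\, \left[ (H_\lambda^{(\ell)} - z)^{-1} - (H_{\lambda'}^{(\ell)} - z)^{-1} \right] \, dx\,dy,
$$
then apply the resolvent identity to turn the difference of resolvents into $(\lambda-\lambda')$ times a product of two resolvents with $P_\ell$ in between. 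The factor $|\lambda - \lambda'|$ comes out, the $z$-integral converges against the Lipschitz bound on $f$ (the $1/|\Im z|^2$ resolvent singularity being compensated by the vanishing of $\partial_{\bar z}\tilde f$), and the rank-$N$ projections $P_0, P_\ell$ contribute the second factor of $N$. Either way the net constant is $2N^2 \|f\|_{\mathrm{Lip}}$, combining the $\|f\|_\infty$ and $L_f$ contributions through $\|f\|_{\mathrm{Lip}} = \|f\|_\infty + L_f$, which completes the proof.
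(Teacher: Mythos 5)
Your opening reduction coincides exactly with the paper's: writing $g(\lambda) := {\rm Tr}\,(P_0 f(H_\lambda^{(\ell)}) P_0)$, the bound (\ref{eq:trivialnormbd}) gives $\Vert g \Vert_\infty \leq N \Vert f \Vert_\infty$, and the lemma follows from the definition (\ref{eq:metric}) of $d_w$ once one proves the one-variable estimate $|g(\lambda) - g(\lambda_0)| \leq 2N^2 L_f |\lambda - \lambda_0|$; this is precisely Proposition \ref{prop:finite-rank-prop2} and inequality (\ref{eq:rank-N5}). The gap is in your proof of that estimate. Your first move, $|{\rm Tr}\,(P_0 [f(H_\lambda^{(\ell)}) - f(H_{\lambda'}^{(\ell)})] P_0)| \leq N \Vert f(H_\lambda^{(\ell)}) - f(H_{\lambda'}^{(\ell)}) \Vert$, discards the projections, and from that point on you need an operator-norm bound $\Vert f(A) - f(B) \Vert \lesssim L_f \Vert A - B \Vert$, i.e.\ operator Lipschitzness of $f$. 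You correctly flag that this fails for general Lipschitz $f$, but your claim that the finite rank of $A - B$ rescues it is not correct: the rank of the perturbation enters such estimates only through the size factor $\Vert H_\lambda^{(\ell)} - H_{\lambda'}^{(\ell)} \Vert_{\mathcal{S}_1} = N |\lambda - \lambda'|$, not by enlarging the class of admissible $f$, which remains the operator Lipschitz class $OL(\mathbb{R})$ --- a proper subclass of $Lip$ whose norm is controlled, e.g., by $\int |\widehat{(f')}(k)|\,dk$ and not by $\Vert f \Vert_{\mathrm{Lip}}$ (indeed, validity of the operator-norm bound for all rank-one perturbations already forces $f \in OL(\mathbb{R})$, see \cite{AleksandrovPeller_review_2016}). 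This is exactly the point of the paper's discussion around (\ref{eq:commentoplip_1})--(\ref{eq:commentoplip_2}): any route that first forms $f(A) - f(B)$ and estimates it in the operator or trace norm pays $\Vert f \Vert_{OL(\mathbb{R})}$ instead of $L_f$, and the entire content of Proposition \ref{prop:finite-rank-prop2} is to avoid this.

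What actually produces the constant $2N^2 L_f$ is a device absent from your sketch: keep both projections and expand the trace in the finitely many spectral measures $\mu_{\tilde\lambda}^{(0j,\ell j')}$ of (\ref{eq:measure1})--(\ref{eq:measure2}), whose product measure has total variation at most $2N^2$ by (\ref{eq:measure-total-var2}). In the paper's main proof, Duhamel's formula (\ref{eq:fourier2}) then converts the trace into $(\lambda - \lambda_0) \int d\mu_{\lambda,\lambda_0}(x,y) \int_0^1 f'(\theta x + (1-\theta) y)\, d\theta$ as in (\ref{eq:rank-N51}), which is bounded by $2N^2 L_f |\lambda - \lambda_0|$; a mollification argument (step 5 of that proof), also missing from your proposal, then passes from smooth to merely Lipschitz $f$ without inflating the constant. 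Your route (b) is indeed the paper's proof in section \ref{sec:appendix:alt-fr1}, but there too the trace is expanded in these measures and the Cauchy--Pompeiu formula is used, see (\ref{eq:hsj8})--(\ref{eq:hsj9}), precisely in order to resurface difference quotients of $f$. The shortcut you describe --- playing the vanishing of $\partial_{\bar z} \tilde f$ near the real axis against the $|\Im z|^{-2}$ singularity of the two resolvents --- does yield a convergent integral, but its value is controlled by higher derivatives of $f$ entering through the almost-analytic extension, not by $L_f$; such a bound cannot be mollified into the Lipschitz statement and would degrade the H\"older exponent in Theorem \ref{thm:quant1}. So the reduction and the claimed constants are right, but the core inequality, which is the actual content of the lemma, is not established by either of your two routes as described.
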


Deferring the proof of Lemma \ref{lem:finite-rank1} to section \ref{sec:finite-rank-lemma1}, we proceed with the proof of Theorem \ref{thm:quant1}.

\begin{proof}[Proof of Theorem \ref{thm:quant1}]
1. For $r$ defined in (\ref{eq:schr-op-sp1}), consider $f \in Lip([-r,r])$. Let $f_n(x)$ denote the $n^{th}$-Bernstein polynomial rescaled so that $f_n$ approximates $f$ uniformly on $[-r, r]$, i.e. if $\phi : [-r, r] \rightarrow [0,1]$ denotes the isomorphism given by $\phi (x) = (x+r) / (2r)$, we have
$$
f_n := B_n [f \circ \phi^{-1} ] \circ \phi ~\mbox{.}
$$
Here, for $g \in C([0,1])$, $B_n[g]$ is the standard $n^{th}$-Bernstein polynomial approximating $g$, given by
$$
B_n[g](x) = \sum_{k=0}^n  \binom{n}{k} g \left( \frac{k}{n} \right) x^k (1-x)^{n-k} .
$$
It is well-known (see e.g. \cite{beals1}) that for $g \in C([0,1])$ with modulus of continuity $W_g$ on $[0,1]$, the approximation by $B_n[g]$ satisfies
\beq\label{eq:poly-approx1}
\| B_n[g] - g \|_\infty \leq c_b W_g(n^{-1/2}) ,
\eeq
where $c_b$ is an absolute constant. We note that it is shown in \cite{sikkema} that the $n^{-1/2}$ dependence in (\ref{eq:poly-approx1}) is, in general, optimal and that the optimal value for the constant $c_b$ is given by
\begin{equation} \label{eq:bernsteinconst}
c_b = \dfrac{4306 + 837 \sqrt{6}}{5832} \approx 1.08989 ~\mbox{.}
\end{equation}

Returning to $f$ and $f_n$, and taking into account the rescaling to $[-r,r]$, we obtain from \eqref{eq:poly-approx1} that
\beq\label{eq:poly-approx2}
\| f_n - f \|_\infty \leq 2 r c_b L_f n^{-1/2} =: b_f n^{-1/2}.
\eeq

\noindent
2. It follows that for all $f \in Lip([-r, r])$, we have
\bea\label{eq:rank-N3}
\left| {\rm Tr} ~ \left( P_0 f(H_\omega) P_0 \right) - {\rm Tr} ~ \left( P_0 f_n(H_\omega) P_0 \right) \right| &
\leq & N \| f - f_n \|_\infty  \nonumber \\
 & \leq  & N \frac{b_f}{n^{1/2}} .
\eea
Applying Lemmas \ref{lem:observation1} and \ref{lem:rv-counting1} to $F (\omega) = {\rm Tr} ~ \left( P_0 f(H_\omega) P_0 \right)$, we conclude that for
\begin{equation}
M_n = (\Gamma (n)^{\frac{1}{d}} - 1)/2 ~\mbox{,}
\end{equation}
one has
\beq\label{eq:finite-rv1}
| \nu_\alpha^{(M_n)} (F) - \nu_\alpha^{(\infty)}(F) | < \frac{2 b_f}{n^{\frac{1}{2}}},
\forall \alpha \in \N.
\eeq
For $1 \leq j \leq \Gamma(n)$, fix a labelling $j \mapsto \ell_j$ of points in the cube
$$
\Lambda_n := \{ k \in \Z^d ~|~ \| k \|_\infty \leq M_n \}.
$$
Define a sequence of measures $\mu_{\alpha, j}$
by
\beq\label{eq:finite-m1}
\mu_{\alpha, j} := \left( \bigotimes_{\ell_k : 1 \leq k \leq j} \nu \right) \bigotimes
\left( \bigotimes_{\ell_k : j < k \leq \Gamma(n)} \nu_\alpha \right) \bigotimes
\left( \bigotimes_{k \in \Z^d \backslash \Lambda_n } \nu \right) ~\mbox{, for $1 \leq j \leq \Gamma(n) - 1$.}
\eeq
With respect to these measures, we then estimate
\bea\label{eq:finite-rv2}
| \nu_\alpha^{(M_n)} (F) - \nu^{(\infty)}(F) | &\leq &| \nu_\alpha^{(M_n)} (F) - \mu_{\alpha,j}(F) | \nonumber \\
& & +  \sum_{j=1}^{\Gamma(n) - 2} | \mu_{\alpha, j} (F) - \mu_{\alpha, j+1}(F) | + | \mu_{\alpha, \Gamma(n) - 1} (F) - \nu^{(\infty)}(F) |, \nonumber \\
 & &
\eea
for all $\alpha \in \N$.

\noindent
3. We observe that each of the terms on the right in \eqref{eq:finite-rv2} are of the type considered in Lemma \ref{lem:finite-rank1}. For example, for $1 \leq j \leq \Gamma(n) -2$, a sample term for the sum in \eqref{eq:finite-rv2} is
\bea\label{eq:finite-ex1}
| \mu_{\alpha, j } (F) - \mu_{\alpha, j+1}(F) |&=& \frac{1}{N} \left| \int \prod_{k \in \Z^d \backslash \Lambda_n} ~d \nu (\omega_k)  \int \prod_{k=1}^j ~d \nu (\omega_k) \int \prod_{k=j+2}^{\Gamma(n)} ~d \nu_\alpha (\omega_k)
 \right. \nonumber \\
  &  & \left. \times \left\{ \int {\rm Tr} ( P_0 f(H_\omega) P_0 ) d \nu_\alpha (\omega_{j+1}) -  \int {\rm Tr} ( P_0 f(H_\omega) P_0 ) d \nu (\omega_{j+1}) \right\} \right| . \nonumber \\
  & &
  \eea
To apply Lemma \ref{lem:finite-rank1} to the term in curly brackets in \eqref{eq:finite-ex1}, we only consider the dependence
of $H_\omega$ on the random variable $\omega_{j+1}$, while fixing all other random variables. Thus, writing $\ell = j+1$ and $\lambda = \omega_{j+1}$, the Schr\"odinger operator in \eqref{eq:schr-op2} may be recast in the form considered in Lemma \ref{lem:finite-rank1}
\beq\label{eq:rankN1}
H_\lambda^{(\ell)} := H_{\ell^\perp} + \lambda P_\ell, ~~ ~~~\lambda \in [-C,C] ~\mbox{,}
\eeq
with $H^{(0)} = H_{\ell^\perp}$. In this manner, the operator $H_\lambda^{(\ell)}$ is a rank-$N$ perturbation of $H_{\ell^\perp}$. Application of Lemma \ref{lem:finite-rank1} to this and similar terms thus yields
  \beq\label{eq:estimate1}
  | \nu_\alpha^{(M_n)} (F) - \nu^{(\infty)}(F) | \leq  \frac{\Gamma(n)}{N} c_f \eta_\alpha.
  \eeq
  Consequently, combining \eqref{eq:finite-m-approx2} and \eqref{eq:finite-ex1}, we obtain for all $\alpha \in \N$,
  \beq\label{eq:weak-conv-dos1}
   | n_\alpha^{(\infty)} (f) - n^{(\infty)} (f) | \leq \frac{2 b_f}{n^{\frac{1}{2}}}  + \frac{\Gamma(n)}{N} c_f \eta_\alpha .
\eeq

\noindent
4. We emphasize that so far $n \in \N$ was fixed and arbitrary. In particular, $n$ and $\alpha$ in \eqref{eq:weak-conv-dos1} are {\em{independent}} of each other. To extract a rate of convergence from
\eqref{eq:weak-conv-dos1}, we will choose $n$ adapted to the decay of $\eta_\alpha$. To this end, we define a function $h := (\Gamma^{-1})^{\frac{1}{2}}$, where $\Gamma^{-1}$ is the inverse function of $\Gamma$ considered in Lemma \ref{lem:rv-counting1}, and fix a parameter $0 < \xi < 1$ to be determined below.

Given $\alpha \in \N$, we take $n_\alpha \in \mathbb{N}$ so that
$$
h^2 \left( \frac{1}{\eta_\alpha^\xi} \right) - 1 \leq n_\alpha \leq  h^2 \left( \frac{1}{\eta_\alpha^\xi} \right).
$$
Using these choices in \eqref{eq:weak-conv-dos1}, we find:
 \bea\label{eq:weak-conv-dos2}
   | n_\alpha^{(\infty)} (f) - n^{(\infty)} (f) | & \leq & \frac{2 b_f}{n^{\frac{1}{2}}}  + \frac{\Gamma(n)}{N} c_f \eta_\alpha  \nonumber \\
   & \leq & \frac{2 b_f}{\left( \Gamma^{-1} \left( \frac{1}{\eta_\alpha^\xi} \right) - 1 \right)^{\frac{1}{2}}  }  + \frac{1}{N} c_f \eta_\alpha^{1 - \xi}  .
\eea

Given the explicit form of $\Gamma (n)$ in  Lemma \ref{lem:rv-counting1}, we can simplify \eqref{eq:weak-conv-dos2} further. According to Lemma \ref{lem:rv-counting1}, we have
\begin{equation} \label{eq:explicithfunction}
h(x) = \left( \frac{x}{2^d} \right)^{\frac{1}{2 d}}.
\end{equation}
We let
\begin{equation}
\xi_0 := ( 1 + (2 d)^{-1})^{-1}
\end{equation}
and take $\alpha_0 \in \N$ such that
\beq\label{eq:weak-conv-dos3}
\eta_{\alpha_0}^{\xi_0} \leq \max \left\{ 1, \frac{1}{\Gamma(2)} \right\} = \frac{1}{2^{2d}} ~\mbox{, for all $\alpha \geq \alpha_0$.}
\eeq
Let $\alpha \geq \alpha_0$. Then, by the choice of $\alpha_0$ in (\ref{eq:weak-conv-dos3}), for all $\xi$ with $\xi_0 \leq \xi < 1$, one has $(\Gamma (2))^{-1} \geq \eta_\alpha^\xi$. In particular, we have $\Gamma^{-1} (\eta_\alpha^{-\xi})\geq 2$, or
\beq\label{eq:weak-conv-dos4}
\Gamma^{-1} \left( \frac{1}{\eta_\alpha^\xi} \right) - 1 = h^2 \left( \frac{1}{\eta_\alpha^\xi} \right) -1
\geq \frac{1}{2} h^2  \left( \frac{1}{\eta_\alpha^\xi} \right) = \frac{1}{2} \Gamma^{-1} \left( \frac{1}{\eta_\alpha^\xi} \right).
\eeq
Thus, using \eqref{eq:weak-conv-dos2} and the  explicit form of $h$ in (\ref{eq:explicithfunction}), we obtain for all $\alpha \geq \alpha_0$:
 \bea\label{eq:weak-conv-dos-summ2}
   | n_\alpha^{(\infty)} (f) - n^{(\infty)} (f) |  & \leq &
   \max \left\{ 4 b_f , \frac{c_f}{N} \right\} \left[ \eta_\alpha^{\frac{\xi}{2 d}} + \eta_\alpha^{1 - \xi} \right] \nonumber \\
   &\leq& 4 \max \left\{ 4 c_b r , N  \right\} \Vert f \Vert_{\mathrm{Lip}} \left[ \eta_\alpha^{\frac{\xi}{2 d}} + \eta_\alpha^{1 - \xi} \right] \nonumber \\
   & =: & \gamma \Vert f \Vert_{\mathrm{Lip}} \left[ \eta_\alpha^{\frac{\xi}{2 d}} + \eta_\alpha^{1 - \xi} \right] ~\mbox{.}
   \eea
Finally, optimizing \eqref{eq:weak-conv-dos-summ2} with respect $\xi$, we obtain $\xi = \xi_0 =( 1 + (2 d)^{-1})^{-1}$, which is consistent with the lower bound on $\xi$ used to determine $\alpha_0$.
This concludes the proof.
\end{proof}

\begin{proof}[Proof of Theorem \ref{thm:IDS-quant1}]
Given remark \ref{remark:thm:IDS-quant1}, it suffices to consider $E \in (-r, r)$. Then, given $0 < a_\pm$ arbitrary,
we define two functions $f_\pm : [-r,r] \rightarrow [0, \infty)$ by
\beq\label{eq:cut-off1}
f_- (x) = \left\{ \begin{array}{ll}
                     1 & {\rm if} ~~ x \in [-r, E- a_-] \\
                      1 - \frac{1}{a_-} (x- (E-a_-)) &  {\rm if} ~~x \in (E-a_-, E] \\
                      0 & {\rm if} ~~ x > E
                      \end{array}
                      \right.
 \eeq
and
\beq\label{eq:cut-off2}
f_+ (x) = \left\{ \begin{array}{ll}
                     1 & {\rm if} ~~ x \in [-r, E] \\
                      1 - \frac{1}{a_+} (x- E) &  {\rm if} ~~x  \in (E, E+a_+ ] \\
                      0 & {\rm if} ~~ x > E + a_+ .
                      \end{array}
                      \right.
 \eeq
By construction, one has
$$
0 \leq f_- \leq \chi_{[-r, E)} \leq f_+ ~\mbox{, } \Vert f_\pm \Vert_{\mathrm{Lip}} \leq 1 + \frac{1}{a_\pm} ~\mbox{.}
$$
In particular, \eqref{eq:quant-cont1} implies that for all $\alpha \geq \alpha_0$, we have that
\bea\label{eq:ids-cont2}
| N_\alpha (E) - N(E) | & \leq & \max_{\pm} | n_\alpha^{(\infty)}(f_\pm) - n^{(\infty)}(f_\pm) |  + n^{(\infty)}([E-a_-, E+a_+]) \nonumber \\
  & \leq &  \gamma \left( \frac{1}{\min\{a_- ; a_+\}} + 1 \right) \eta_\alpha^{\frac{1}{1 + 2 d}} + n^{(\infty)}([E-a_-, E+a_+]) .
  \eea
  From the log-H\"older continuity (\ref{eq:ids-logholder1}) of the IDS, we see that the logarithmic decay dominates the polynomial term in \eqref{eq:ids-cont2}. Hence, letting
  $a_- = a_+ = \frac{1}{2} \eta_\alpha^\xi > 1$ with $\xi > 0$ to be determined, we obtain for all $\alpha \geq \alpha_0$:
\bea\label{eq:ids-cont3}
| N_\alpha (E) - N(E) | & \leq & \gamma (2 \eta_\alpha^{-\xi} + 1)  \eta_\alpha^{\frac{1}{1 + 2 d}} + \frac{C_I}{\log \left(\frac{1}{\eta_\alpha^\xi}\right) } \nonumber \\
 & \leq & 3 \gamma \frac{1}{ \left(\frac{1}{\eta_\alpha^\xi}\right)^{\frac{1}{\xi(1 + 2 d)} -1}} +
  \frac{C_I}{\log \left(\frac{1}{\eta_\alpha^\xi}\right) } .
    \eea
 We optimize \eqref{eq:ids-cont3} with respect to $\xi$ using the fact that $y^\beta \geq \log y$, for all $y \in (0, \infty)$, if and only if $\beta \geq \frac{1}{e}$. We find that \eqref{eq:ids-cont3} is optimized for
 $$
 \xi = \left( \frac{1}{1 + e} \right) \left( \frac{1}{1 + 2 d} \right) ,
 $$
 in which case we conclude that for all $\alpha \geq \alpha_0$,
  \beq\label{eq:ids-cont4}
| N_\alpha (E) - N(E) |  \leq  \max \{ 3 \gamma, C_I \} 2 (1+e)(1 + 2 d)\frac{1}{\log \left(\frac{1}{\eta_\alpha}\right) } ~\mbox{.}
\eeq
This verifies Theorem \ref{thm:IDS-quant1} with $C_2 = \max \{ 3 \gamma, C_I \} 2 (1+e)(1 + 2 d)$.
\end{proof}

%%%%%%%%%%%%%%%%%%%%%%%%%%%%%%%%%%%%%%%%%%%%%%%%%%%%%%%%%%%%%%%%%%%%%%%%%%%%%%%%%%%%%%%%%%%%%%%%%%%

\section{Proof of the finite-rank lemma}\label{sec:finite-rank-lemma1}
\setcounter{equation}{0}

As discussed in section \ref{sec:intro1}, the quantitative continuity statement for the DOSm in Theorem \ref{thm:quant1} reduces to the Lipschitz continuity of certain ``single-site'' spectral averages, formulated in Lemma \ref{lem:finite-rank1}. We recall the definition of the family of rank-$N$ projections $H_\lambda^{(\ell)}$ in (\ref{eq:rankN1}) and the set-up described in (\ref{eq:inclusion1}).

By the definition of the metric $d_w$ in (\ref{eq:metric}) and using that
\begin{equation} \label{eq:trivialnormbd}
\sup_{\lambda \in [-C,C]} \vert  {\rm Tr} ~ \left(P_0 f(H_\lambda^{(\ell)}) P_0 \right) \vert \leq N \Vert f \Vert_\infty ~\mbox{, for all $f \in \mathcal{C}([a,b])$, }
\end{equation}
Lemma \ref{lem:finite-rank1} follows immediately from the claim stated below in Proposition \ref{prop:finite-rank-prop2}. While we believe that a statement of the form of Proposition \ref{prop:finite-rank-prop2} should be in the literature, we were not able to find the exact formulation we needed. For the benefit of the reader and to keep the paper self-contained, we include a short proof in this section. We note that the proof we present here is a development of a proof by B. Simon for a related statement (see \cite[Proposition 2]{simon1}). An alternative argument, using the Helffer-Sj\"ostrand functional calculus, is given in section \ref{sec:appendix:alt-fr1}.

We also observe that a weaker version of Proposition \ref{prop:finite-rank-prop2}, which however requires higher regularity of $f$, can be obtained directly from standard properties of operator-valued Lipschitz functions \cite{AleksandrovPeller_review_2016}. Indeed, for functions $f: \mathbb{R} \to \mathbb{R}$ satisfying $\widehat{(f^\prime)} \in L^1(\mathbb{R})$, the operator $f(H_\lambda^{(\ell)}) - f(H_{\lambda_0}^{(\ell)}) \in \mathcal{S}_1$ is trace class (see also (\ref{eq:fourier2}) below), whence a combination of the Theorems 1.1.1 and 3.6.5 in \cite{AleksandrovPeller_review_2016} implies
\begin{align} \label{eq:commentoplip_1}
| {\rm Tr} ~ \left[ P_0 \left( f(H_\lambda^{(\ell)}) - f(H_{\lambda_0}^{(\ell)}) \right) P_0 \right] | & \leq \Vert f(H_\lambda^{(\ell)}) - f(H_{\lambda_0}^{(\ell)}) \Vert_{\mathcal{S}_1} \nonumber \\
& \leq \Vert f \Vert_{OL(\mathbb{R})} ~\Vert H_\lambda^{(\ell)} - H_{\lambda_0}^{(\ell)} \Vert_{\mathcal{S}_1} = N \Vert f \Vert_{OL(\mathbb{R})}  ~\mbox{,}
\end{align}
where the operator-valued Lipschitz norm of $f$ admits the bound
\begin{equation} \label{eq:commentoplip_2}
\Vert f \Vert_{OL(\mathbb{R})} \leq \int_\mathbb{R} \vert \widehat{(f^\prime)}(k) \vert ~\mbox{.}
\end{equation}
The estimates (\ref{eq:commentoplip_1}) - (\ref{eq:commentoplip_2}) follow in essence from the same arguments we use below, see (\ref{eq:fourier2}). The difference however is that we expand the trace on the left-hand side of (\ref{eq:commentoplip_1}) directly instead of using an upper bound by the trace-norm. This turns out to yield an estimate valid for Lipschitz $f$ (albeit at the cost of the larger constant $N^2$ on the right-hand side of (\ref{eq:rank-N5}), compared to $N$ in (\ref{eq:commentoplip_1})).

\begin{prop}\label{prop:finite-rank-prop2}
Under the assumptions of Lemma \ref{lem:finite-rank1}, for all $f \in Lip([a,b])$, with Lipschitz constant $L_f$, one has
\beq\label{eq:rank-N5}
| {\rm Tr} ~ \left[ P_0 \left( f(H_\lambda^{(\ell)}) - f(H_{\lambda_0}^{(\ell)}) \right) P_0 \right] |
\leq 2 N^2 L_f | \lambda - \lambda_0|,
\eeq
for all $\lambda, \lambda_0 \in [- C, C]$.
\end{prop}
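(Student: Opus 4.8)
The plan is to control the finite-rank difference $f(H_\lambda^{(\ell)}) - f(H_{\lambda_0}^{(\ell)})$ through a Fourier representation of $f$, reducing the estimate of the trace to a bound on the derivative $\frac{d}{ds}\,\e^{\i s H_\lambda^{(\ell)}}$ with respect to the coupling parameter $\lambda$. Since $P_\ell$ is a rank-$N$ projection and $P_0$ is also rank-$N$, the key structural fact I would exploit is that both the perturbation and the observable live in finite-dimensional subspaces, so every operator that appears can be expanded in a basis and the trace computed as a finite sum of matrix entries. The larger constant $N^2$ (as opposed to $N$) in (\ref{eq:rank-N5}) will come precisely from summing over these $N$-dimensional ranges twice.

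First I would reduce to smooth $f$ by a density argument: it suffices to prove (\ref{eq:rank-N5}) for $f \in C^\infty([a,b])$ with the constant depending only on $L_f$, since both sides are continuous in $f$ with respect to the relevant norms and Lipschitz functions are uniformly approximable by smooth ones with controlled Lipschitz constants. For smooth $f$ I would write, using the Fourier inversion formula,
\begin{equation*}
f(H) = \frac{1}{2\pi} \int_{\RR} \widehat{f}(s)\, \e^{\i s H}\, \d s,
\end{equation*}
and hence express the difference as
\begin{equation*}
f(H_\lambda^{(\ell)}) - f(H_{\lambda_0}^{(\ell)}) = \frac{1}{2\pi} \int_{\RR} \widehat{f}(s) \left( \e^{\i s H_\lambda^{(\ell)}} - \e^{\i s H_{\lambda_0}^{(\ell)}} \right) \d s.
\end{equation*}
Applying the fundamental theorem of calculus in $\lambda$ together with Duhamel's formula, the unitary difference becomes an integral of $\i s\, \e^{\i(s-u)H}\, P_\ell\, \e^{\i u H}$ over $u \in [0,s]$ (after interpolating between $\lambda_0$ and $\lambda$), which produces an explicit factor $|\lambda - \lambda_0|$ and a factor $|s|$. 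The point is that the extra $|s|$ multiplying $\widehat{f}(s)$ converts into $\widehat{f'}(s)$, and one then needs to handle $\int |\widehat{f'}(s)|\,\d s$. This is the step where the Lipschitz-only hypothesis is delicate.

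The main obstacle, and where I expect the real work to lie, is that $\int_{\RR} |\widehat{f'}(s)|\,\d s$ need not be finite for merely Lipschitz $f$ — this is exactly the gap noted after (\ref{eq:commentoplip_2}) that forces the weaker operator-Lipschitz argument to require higher regularity. To circumvent this I would \emph{not} pass to the trace norm but instead expand the trace directly: writing $\tr[P_0(\cdots)P_0]$ as a finite sum $\sum_{i,j} \langle e_i, P_\ell\, (\text{propagators})\, e_j\rangle$ over orthonormal bases of $\operatorname{ran} P_0$ and $\operatorname{ran} P_\ell$, each matrix entry of the propagators is bounded uniformly by $1$ in modulus, and the $s$- and $u$-integrations can be rearranged so that the oscillatory factors combine into a derivative acting on $\widehat{f}$ that only ever sees $f'$ paired against bounded, rather than trace-class, quantities. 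Concretely I expect to integrate by parts in $s$ to trade the factor $s$ for a derivative hitting $f$ inside the transform, keeping everything at the level of $\|f'\|_\infty = L_f$ rather than an $L^1$ norm of a Fourier transform; the finite-rank expansion is what makes this legitimate and is the reason the direct trace computation succeeds where the trace-norm bound fails. Summing the resulting finitely many terms — indexed by the two $N$-dimensional ranges — yields the factor $2N^2$, and collecting the $|\lambda-\lambda_0|$ and $L_f$ factors gives (\ref{eq:rank-N5}).
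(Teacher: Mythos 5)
Your proposal follows essentially the same route as the paper's proof of Proposition \ref{prop:finite-rank-prop2}: Fourier representation plus Duhamel's formula to extract the factor $(\lambda - \lambda_0)$ and convert $\widehat{f}$ into $\widehat{(f^\prime)}$, a direct expansion of the trace over the $N$-dimensional ranges of $P_0$ and $P_\ell$ in place of a trace-norm bound, and a mollification argument with controlled Lipschitz constants to pass between smooth and Lipschitz $f$ (the paper mollifies at the end rather than the beginning, which is immaterial).

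One step should be made precise, because the justification you give for it would fail as written. You invoke the bound that each propagator matrix entry has modulus at most $1$; but if you estimate the $s$-integral by absolute values at that point, you are back to $\int_\R |\widehat{(f^\prime)}(s)|\,ds$, exactly the quantity you set out to avoid. The operative fact is not the modulus bound but that each entry $\langle \psi_j^{(0)}, e^{is\theta H_\lambda^{(\ell)}} \psi_{j^\prime}^{(\ell)} \rangle$ is the Fourier--Stieltjes transform of a complex spectral measure of total variation at most $1$, as in (\ref{eq:measure1})--(\ref{eq:measure-total-var1}). Integrating against the product measure (\ref{eq:measure2}) lets one perform the $s$-integration \emph{exactly}, by Fourier inversion, producing $f^\prime(\theta x + (1-\theta)y)$ pointwise; only then does one take absolute values, getting $L_f$ times the total-variation bound $2N^2$ of (\ref{eq:measure-total-var2}) (note the factor $2$ comes from the total variation of a product of measures, not from the double sum itself). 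This is precisely how the paper realizes the ``rearrangement of the $s$- and $u$-integrations'' your sketch leaves implicit.
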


\begin{proof}
1. We will first establish (\ref{eq:rank-N5}) for $f \in C_c^\infty(\R)$ and then use an approximation argument to extend to Lipschitz functions (see item 5. below).

\noindent
2. Let $\lambda, \lambda_0 \in [- C, C]$ be arbitrary and fixed.  For $f \in C_c^\infty(\R)$, we can express the difference of the operator on the left of \eqref{eq:rank-N5} using the Fourier transform:
\beq\label{eq:fourier1}
f(H_\lambda^{(\ell)}) - f(H_{\lambda_0}^{(\ell)}) = \frac{1}{\sqrt{2 \pi}}  \int_\R ~\hat{f} (k) \left[ e^{ik H_\lambda^{(\ell)}} - e^{ik H_{\lambda_0}^{(\ell)}}  \right] ~dk ~\mbox{.}
\eeq
Applying Duhamel's formula to the integrand of \eqref{eq:fourier1}, we obtain
\beq\label{eq:fourier2}
f(H_\lambda^{(\ell)}) - f(H_{\lambda_0}^{(\ell)}) = \frac{(\lambda - \lambda_0)}{\sqrt{2 \pi}}  \int_\R ~dk ~\widehat{(f^\prime)} (k) ~\int_0^1 ~\left[ e^{ik \theta H_\lambda^{(\ell)}} P_\ell  e^{ik (1- \theta) H_{\lambda_0}^{(\ell)}}  \right] ~d\theta .
\eeq
For use below, for $\beta \in \{ 0 , \ell \}$, let $\{ \psi_j^{(\beta)},  1 \leq j \leq N \}$ be an orthonormal basis for ${\rm Ran} ~ P_\beta$, a rank $N$ projector. The projector $P_\beta$ may then be written as
\beq\label{eq:rankN-proj1}
P_\beta = \sum_{j=1}^N | \psi_j^{(\beta)} \rangle \langle \psi_j^{(\beta)} | ~\mbox{.}
\eeq

\noindent
3. The spectral theorem allows us to define a complex-valued Borel measure $\mu_{\lambda, \lambda_0}$ on $[a,b]^2$ as follows. For $\tilde{\lambda} \in \{ \lambda, \lambda_0\}$, and $j, j' \in \{1, \dots, N\}$, we define a measure
$\mu_{\tilde{\lambda}}^{(0j,\ell j')}$ on $[a,b]$ by
\beq\label{eq:measure1}
\langle \psi_j^{(0)}, g(H_{\tilde{\lambda}}) \psi_{j'}^{(\ell)} \rangle =
\int_\R ~g(x) ~ d\mu_{\tilde{\lambda}}^{(0j,\ell j')} (x),
\eeq
for all $g \in C_c (\R)$.
We then define $\mu_{\lambda, \lambda_0}$ on $[a,b]^2$ as the product measure
\beq\label{eq:measure2}
d \mu_{\lambda, \lambda_0} (x,y) := \sum_{j, j'=1}^N  d \mu_{\lambda}^{(0j,\ell j')} (x) \otimes d\mu_{\lambda_0}^{(\ell j',0j)} (y).
\eeq
We note that \eqref{eq:measure2} implies that the total variation of the measure $\mu_{\lambda}^{(0j ,\ell j')}$ satisfies
\bea\label{eq:measure-total-var1}
| \mu_{\tilde{\lambda}}^{(0j,\ell j')} | & = & \sup_{g \in C_c(\R ), \|g\|_\infty \leq 1} \left| \int_\R ~ g(x) ~d\mu_{\tilde{\lambda}}^{(0j,\ell j')}(x) \right| \leq 1 ~\mbox{,}
 \eea
so that the total variation of $\mu_{\lambda, \lambda_0}$ can be bounded above by
\beq\label{eq:measure-total-var2}
| \mu_{\lambda, \lambda_0} | \leq 2 N^2 ~\mbox{.}
\eeq
Here, we used the fact that for two measures $\mu_1$ and $\mu_2$, with $|\mu_j| \leq 1$, for $j=1,2$, one has
$|\mu_1 \otimes \mu_2| \leq |\mu_1| + |\mu_2| \leq 2$.

\noindent
4. Using the representation in \eqref{eq:fourier2}, we obtain
\bea\label{eq:rank-N51}
\lefteqn{{\rm Tr} ~ \left[ P_0 \left( f(H_\lambda^{(\ell)}) - f(H_{\lambda_0}^{(\ell)}) \right) P_0 \right]} \\
 &=& (\lambda - \lambda_0) \int d \mu_{\lambda, \lambda_0} (x,y) ~\int_0^1 ~d\theta \int_\R
 ~\frac{dk}{\sqrt{2 \pi}} ~\widehat{(f^\prime)} (k) ~ e^{ik( \theta x + (1- \theta)y )} \nonumber \\
 &=& {(\lambda - \lambda_0)} \int ~d \mu_{\lambda, \lambda_0} (x,y) ~\int_0^1 ~d\theta ~{f^\prime}(\theta x + (1- \theta)y) ~\mbox{.}
\eea
Combining this result with estimate in \eqref{eq:measure-total-var2} thus yields
\beq\label{eq:rank-N6}
|{{\rm Tr} ~ \left[ P_0 \left( f(H_\lambda^{(\ell)}) - f(H_{\lambda_0}^{(\ell)}) \right) P_0 \right]} |
 \leq 2 N^2 |\lambda - \lambda_0| L_f([a,b]) ~\mbox{,}
\eeq
where
\begin{equation}
L_f([a,b]) = \sup_{x \neq y \in [a,b]} \left\vert \dfrac{f(x) - f(y)}{x-y} \right\vert ~\mbox{,}
\end{equation}
denotes the Lipschitz constant of $f$ on $[a,b]$. We note that this uses the fact that the measure $d \mu_{\lambda, \lambda_0}$ in (\ref{eq:rank-N51}) is supported on $[a,b]^2$.
This proves \eqref{eq:rankN1} for $f \in C_c^\infty (\R)$.

\noindent
5. To extend the result to $f \in Lip([a,b])$ we use the following simple approximation argument\footnote{We recall that $\mathcal{C}^\infty([a,b])$ is a {\em{proper}} closed subspace of $Lip([a,b])$ with norm defined in (\ref{eq:deflipnorm}), hence a simple density argument cannot be used because the right hand side of (\ref{eq:rank-N5}) depends on the Lipschitz constant.}. Without loss of generality, we may assume that $f \in Lip([a,b])$ is non-constant, in particular $L_f > 0$. Then, we may extend $f$ linearly to a compactly supported Lipschitz function $\tilde{f}$ on $\mathbb{R}$ such that $L_{\tilde{f}} = L_f$. Fixing a $\mathcal{C}^\infty$-mollifier $0 \leq \phi$, $\mathrm{supp} \phi \subseteq [-1,1]$, $\Vert \phi \Vert_1 =1$, we define, for $\epsilon > 0$, $\tilde{f}_\epsilon := \tilde{f} * \phi_\epsilon$ where $\phi_\epsilon(x) = \frac{1}{\epsilon} \phi(\frac{x}{\epsilon})$. By construction, one then has for all $\epsilon > 0$ that
\begin{equation}
L_{\tilde{f}_\epsilon}([a,b]) \leq L_{\tilde{f}}([a -\epsilon, b + \epsilon]) = L_f ~\mbox{.}
\end{equation}
Thus, using (\ref{eq:rank-N6}), we conclude for all $\lambda, \lambda_0 \in [-C,C]$ and $\epsilon > 0$ that
\begin{equation}
|{{\rm Tr} ~ \left[ P_0 \left( \tilde{f}_\epsilon(H_\lambda^{(\ell)}) - \tilde{f}_\epsilon(H_{\lambda_0}^{(\ell)}) \right) P_0 \right]} |
 \leq 2 N^2 |\lambda - \lambda_0| L_{\tilde{f}_\epsilon}([a,b]) \leq 2 N^2 |\lambda - \lambda_0| L_f ~\mbox{,}
\end{equation}
where the right-most side is {\em{independent}} of $\epsilon$. In particular, letting $\epsilon \to 0^+$, we obtain (\ref{eq:rank-N5}) for all $f \in Lip([a,b])$.
\end{proof}

We observe that Proposition \ref{prop:finite-rank-prop2} has the following immediate consequence.

\begin{corollary}\label{cor:rankN-4}
If $f \in C^k ([a,b])$, for some $k \in \N$, then the map
\beq\label{eq:rankN-4}
\lambda \in \R \mapsto {\rm Tr} \left\{ P_0 \left[ f(H_\lambda^{(\ell)}) - f(H_{\lambda_0}^{(\ell)}) \right]  P_0 \right\} =: g(\lambda),
\eeq
is in $C^k([- C, C])$, with
\beq\label{eq:rankN-error1}
\| g^{(k)} \|_\infty \leq 2 N^2 \|f^{(k)} \|_\infty.
\eeq
\end{corollary}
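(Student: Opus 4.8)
The plan is to fix the base point and study $G(\lambda) := {\rm Tr}\{P_0 f(H_\lambda^{(\ell)}) P_0\}$, since $g = G - G(\lambda_0)$ differs from $G$ by a constant and thus has the same derivatives of every order $\geq 1$. I would first treat $f \in C_c^\infty(\R)$, where differentiability in $\lambda$ is unproblematic because $H_\lambda^{(\ell)} = H^{(0)} + \lambda P_\ell$ depends linearly, hence smoothly, on $\lambda$; the general case $f \in C^k([a,b])$ is then recovered by the mollification argument already used in step 5 of the proof of Proposition \ref{prop:finite-rank-prop2} (extend $f$ to a compactly supported $C^k$ function and convolve, noting that the mollification does not increase $\|f^{(k)}\|_\infty$).

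The heart of the matter is the first derivative, which I would read off directly from the representation \eqref{eq:rank-N51} established inside the proof of Proposition \ref{prop:finite-rank-prop2}. Applied to the pair $(\lambda+h,\lambda)$ it gives, for $f \in C_c^\infty(\R)$,
\[
\frac{G(\lambda+h)-G(\lambda)}{h} = \int d\mu_{\lambda+h,\lambda}(x,y)\int_0^1 f'(\theta x + (1-\theta)y)\,d\theta .
\]
Since $H_{\lambda+h}^{(\ell)} - H_\lambda^{(\ell)} = h\,P_\ell$ has norm $|h| \to 0$, the defining matrix elements \eqref{eq:measure1} converge, so $\mu_{\lambda+h,\lambda} \to \mu_{\lambda,\lambda}$ weakly; as the integrand is continuous and bounded, the difference quotient converges and $G'(\lambda) = \int d\mu_{\lambda,\lambda}(x,y)\int_0^1 f'(\theta x + (1-\theta)y)\,d\theta$. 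The same weak continuity of $\lambda \mapsto \mu_{\lambda,\lambda}$ shows $G'$ is continuous, so $G \in C^1$, and the total-variation bound \eqref{eq:measure-total-var2}, namely $|\mu_{\lambda,\lambda}| \leq 2N^2$, yields $\|G'\|_\infty \leq 2N^2 \|f'\|_\infty$. This already settles the case $k=1$, which is exactly the infinitesimal form of Proposition \ref{prop:finite-rank-prop2}.

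For $k \geq 2$ I would iterate this scheme. Differentiating $G'$ once more forces one to differentiate the spectral measures $\mu_{\lambda,\lambda}$ themselves; by Duhamel's formula each such differentiation inserts one further factor $P_\ell$ between two exponentials, and after Fourier inversion (precisely as in the passage leading to \eqref{eq:rank-N51}) this produces the next divided difference of $f$. Organizing $m$ differentiations through the Hermite--Genocchi formula, one is led to a representation
\[
G^{(m)}(\lambda) = \int m!\, f^{[m]}(x_0,\dots,x_m)\, d\mathcal{M}_\lambda^{(m)}(x_0,\dots,x_m),
\]
where $f^{[m]}$ is the $m$-th divided difference, so that $\|m!\,f^{[m]}\|_\infty \leq \|f^{(m)}\|_\infty$, and $\mathcal{M}_\lambda^{(m)}$ is a product of spectral measures assembled from the basis expansions \eqref{eq:rankN-proj1}; weak continuity of $\lambda \mapsto \mathcal{M}_\lambda^{(m)}$ again gives $G^{(m)} \in C^0$, hence $G \in C^k$.

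The main obstacle is to hold the constant at the level $2N^2$ while still measuring $f$ through $\|f^{(k)}\|_\infty$. This is a genuine tension, already visible at first order: bounding the trace by its trace-class norm, as in \eqref{eq:commentoplip_1}, produces the small constant $N$ but only in the stronger norm $\int |\widehat{f^{(k)}}|$, whereas the sup-norm $\|f^{(k)}\|_\infty$ is obtained solely through the divided-difference expansion \eqref{eq:rank-N51}, whose total variation is estimated term by term. For $k=1$ the bound \eqref{eq:measure-total-var2} reconciles the two and delivers exactly $2N^2\|f'\|_\infty$; the crux for general $k$ is to carry out the analogous count for $\mathcal{M}_\lambda^{(m)}$ — organizing the trace around the two outer $P_0$ ``slots'' as in the proof of Proposition \ref{prop:finite-rank-prop2}, rather than expanding every inserted $P_\ell$ naively — so that the total variation is controlled at the level $2N^2$ instead of growing with the number of inserted projections. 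This single estimate is where the argument must be carried out with care; the differentiability itself and the emergence of $f^{(k)}$ follow routinely from the linear dependence of $H_\lambda^{(\ell)}$ on $\lambda$ fed through Duhamel's formula, together with the weak continuity of the spectral measures \eqref{eq:measure1}.
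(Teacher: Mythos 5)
Your treatment of $k=1$ is correct, but it is not the paper's argument, and it is considerably heavier. The paper never differentiates the representation \eqref{eq:rank-N51} nor the spectral measures: it takes polynomials $f_n$ with $\| f_n - f\|_{C^1([a,b])} \to 0$, notes that each $g_n(\lambda) = {\rm Tr}\{P_0 f_n(H_\lambda^{(\ell)})P_0\}$ is a \emph{polynomial in $\lambda$} (hence smooth), and then applies Proposition \ref{prop:finite-rank-prop2} to the differences $f_n - f_m$: combined with \eqref{eq:trivialnormbd}, this shows $\{g_n\}$ is Cauchy in $C^1([-C,C])$, so $g \in C^1$, and the bound $\|g'\|_\infty \le 2N^2\|f'\|_\infty$ is inherited in the limit from \eqref{eq:rank-N5}. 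Everything is thus reduced to the single estimate already proved; no weak-$^*$ continuity of $\lambda \mapsto \mu_{\lambda,\lambda}$, no Duhamel, no new representation. Your first-derivative argument (difference quotient of \eqref{eq:rank-N51} plus weak convergence of the measures \eqref{eq:measure1}, plus the total-variation bound \eqref{eq:measure-total-var2}) is a legitimate alternative for $k=1$, and your mollification step for passing from $C^\infty_c$ to $C^k$ is sound since it only uses linearity and the bound applied to differences.

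The genuine gap is the case $k \ge 2$, and you concede it yourself. The corollary asserts the quantitative bound \eqref{eq:rankN-error1} with the constant $2N^2$ for \emph{every} $k$, and your scheme does not deliver it: each further $\lambda$-derivative inserts one more factor $P_\ell$, and expanding every inserted $P_\ell$ through \eqref{eq:rankN-proj1} multiplies the number of matrix-element measures by $N$, so the order-$m$ measure $\mathcal{M}_\lambda^{(m)}$ you construct is a sum of roughly $N^{m+1}$ products, giving a natural bound of order $N^{m+1}\|f^{(m)}\|_\infty$ --- larger than $2N^2$ and growing with $m$. Your proposed remedy (``organizing the trace around the two outer $P_0$ slots'') is only named, not carried out, and it is far from routine: keeping the intermediate $P_\ell$'s unexpanded produces bimeasure-type objects $\langle \phi, E(\cdot)P_\ell E(\cdot)\psi\rangle$ whose total variation genuinely grows with $N$, so it is not even clear that the estimate you need is true in the form you state it. Hence the proposal does not prove the corollary as stated for $k \ge 2$. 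Note that the paper's route sidesteps this entirely: it disposes of higher $k$ by the reduction to $k=1$ (running the same polynomial-approximation/Cauchy-sequence argument), and because differentiability there is obtained softly --- from polynomials being mapped to polynomials --- it never generates higher-order operator integrals, which is exactly the machinery on which your approach founders.
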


\begin{remark}
Analogous reasoning allows to strengthen Simon's result  \cite[Proposition 2]{simon1} by reducing the regularity assumption of $f$ from $f \in C_c^\infty (\R)$ to $f \in C_c^1 (\R)$.
\end{remark}

\begin{proof}
It suffices to consider the case $k=1$. In this case, \eqref{eq:rankN1} yields the claim if we prove that $g$ is differentiable. To this end, given $f \in C^1 ([a,b])$, we approximate $f$ by a sequence of polynomials $\{ f_n \}$ so that
\beq\label{eq:poly-approx3}
\| f_n - f \|_{C^1 ([a,b])} \stackrel{n \rightarrow \infty}{\rightarrow} 0 .
\eeq
We note that for each $n$, the function $g_n(\lambda) := {\rm Tr} \left\{ P_0 f_n(H_\lambda^{(\ell)}) P_0 \right\}$
is a polynomial in $\lambda$ and hence smooth. Now by \eqref{eq:trivialnormbd}, \eqref{eq:rank-N5}, and \eqref{eq:poly-approx3} the sequence $\{ g_n \}$ is Cauchy in $C^1 ([a,b])$ which implies that $g \in C^1([a,b])$.
\end{proof}

%%%%%%%%%%%%%%%%%%%%%%%%%%%%%%%%%%%%%%%%%%%%%%%%%%%%%%%%%%%%%%%%%%%%%%%%%%%%%%%%%%%%%%%%%%%%%%%%%%%

\section{Applications}\label{sec:appl-contDOSm1}
\setcounter{equation}{0}

In this section, we provide some applications of the quantitative continuity results in Theorems \ref{thm:quant1} and \ref{thm:IDS-quant1}.

%%%%%%%%%%%%%%%%%%%%%%%%%%%%%%%%%%%%%%%%%%%%%%%%%%%%%%%%%%%%%%%%%%%%%%%%%%%%%%%%%%%%%%%%%%%%%%
\subsection{Continuity of the density of states in the disorder for the weak disorder regime}\label{subsec:dosm-low1}

We consider the discrete, random Schr\"odinger operator
\beq\label{eq:schr-op-low1}
H_\omega (\lambda) := H_\omega =  \Delta + \lambda \sum_{j \in {\mathcal{J}}} \omega_j P_j ~\mbox{, }
\eeq
where everything is as in [H1] with the only exception that the potential energy term is scaled by the {\em{disorder parameter}} $\lambda \geq 0$, thereby explicitly quantifying the disorder strength.
The elements of the sequence $\omega \in \Omega = [-1,1]^{\Z^d}$ are assumed to be $iid$ random variables with a common probability measure $\mu \in \mathcal{P}([-1,1])$.

As an application of the Theorems \ref{thm:quant1} and \ref{thm:IDS-quant1} we will quantify the dependence of the DOSm and the IDS (i.e. for {\em{fixed}} energy) {\em{on the disorder parameter}} $\lambda$ as $\lambda \rightarrow 0^+$, i.e. in the {\em{weak disorder regime}}. As mentioned in section \ref{subsec:previous-intro1}, for the Anderson model ($N=1$) and {\em{arbitrary}} $d \in \mathbb{N}$, the question of continuity of the IDS with respect to the disorder in the weak disorder regime and  has been addressed in \cite{hks1, schenker1} under the assumption that the single-site measure $\mu$ is absolutely continuous with a bounded density. Under this hypothesis, the authors prove H\"older continuity of the IDS with respect to $\lambda$ as $\lambda \rightarrow 0^+$ (see also remark \ref{remark:weakcoupling} below).

We explicitly note that, since Theorem \ref{thm:quant1} and \ref{thm:IDS-quant1} hold for general compactly supported probability measures, we will {\em{not}} need to assume any specific form of the measure $\mu$. The general framework developed in section \ref{sec:quan-contDOSm1} will thus allow us to drop the hypotheses of absolute continuity of the single-site measure $\mu$ imposed in \cite{hks1, schenker1}, thereby extending (by completely different means) the result of H\"older continuity of the IDS in $\lambda$ (for fixed energy) as $\lambda \rightarrow 0^+$ obtained in \cite{hks1, schenker1} to general compactly supported single-site measures $\mu$. Moreover, we add to this a quantitive continuity result on the dependence of the DOSm on $\lambda$ in the weak disorder regime.

We also note that even for $d=1$, where many more results about the $\lambda$-dependence of the IDS in the weak disorder regime exist in the literature (see section \ref{subsec:previous-intro1}), the known results had to assume a decay condition of the Fourier transform of the single-site measure, which is not needed in our work.

The key observation which allows us to view the weak disorder behavior through the framework of section \ref{sec:quan-contDOSm1} is to note that \eqref{eq:schr-op-low1} is equivalent to the random Schr\"odinger operator in \eqref{eq:schr-op2} after a rescaling of the random variables. We let
 $$
 \tilde{\omega}_k := \lambda \omega_k, ~~~k \in \mathcal{J},
 $$
which results in a {\em{recaled single-site measure}}
 \begin{equation} \label{eq:couplingrescaled meas}
 d \nu_\lambda (x) := d \mu(\frac{x}{\lambda}) ~\mbox{,}
 \end{equation}
 which, since $\mu$ is supported in $[-1,1]$, satisfies $\mathrm{supp} \nu_\lambda \subseteq [-\lambda, \lambda]$.
In particular, we see that the limit $\lambda \rightarrow 0^+$ is equivalent to the $w^\star$-limit
 \begin{equation}
 d \nu_\lambda (x)  \rightarrow d \nu(x) = \delta (x) ~dx ~.
\end{equation}
Specifically, using the metric on $\mathcal{P}([-1,1])$ defined in (\ref{eq:metric}), one has for all $0 < \lambda \leq 1$,
\begin{equation}
d_w(\nu_\lambda , d \nu) \leq \lambda ~\mbox{.}
\end{equation}
We mention that this may be viewed as a generalization of Example 1 described in section \ref{sec:qual-contDOSm1}.

Hence, application of Theorem \ref{thm:qual1} and \ref{thm:quant1} immediately yields the following qualitative and quantitative continuity of the DOSm in $\lambda$ as $\lambda \to 0^+$:
\begin{theorem}\label{thm:weak-cont1} (Weak disorder continuity of the DOSm.)
For the model \eqref{eq:schr-op-low1} with underlying single-site measure $\mu \in \mathcal{P}([-1,1])$, the DOSm $n_\lambda^{(\infty)}$ is $w^\star$-continuous
as $\lambda \rightarrow 0^+$, that is
\beq\label{eq:dosm-weak1}
n_\lambda^{(\infty)}  \stackrel{w^\star}{\longrightarrow}  n_{\lambda = 0}^{(\infty)}, ~~~~\lambda \rightarrow 0^+ ~.
\eeq
Moreover, there exists $\lambda_0 > 0$ such that for every $f \in Lip([-2d-\lambda_0, 2d + \lambda_0])$, one has
\beq\label{eq:dosm-weak2}
| n_\lambda^{(\infty)}(f) - n_{\lambda = 0}^{(\infty)}(f) | \leq \gamma \Vert f \Vert_{\mathrm{Lip}} \lambda^{\frac{1}{1 + 2 d}} ~,
\eeq
for all $0 \leq \lambda \leq \lambda_0$. The constant $\gamma$ is defined in \eqref{eq:weak-conv-dos-summ2} and
$\lambda_0$ is determined by \eqref{eq:weak-conv-dos3} as
\beq\label{eq:lambda-zero1}
\lambda_0 = \left(\frac{1}{2}\right)^{1 + 2d} ~.
\eeq
\end{theorem}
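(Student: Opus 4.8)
The plan is to observe that the weak-disorder model \eqref{eq:schr-op-low1} is, after a linear rescaling of the random variables, precisely an instance of the general model [H1], so that the qualitative and quantitative Theorems \ref{thm:qual1} and \ref{thm:quant1} apply directly. The only work is to track how the Lipschitz-dual distance $d_w$ behaves under the rescaling and to extract the explicit threshold $\lambda_0$ from the smallness condition \eqref{eq:weak-conv-dos3}; all of the substantive analytic content (the finite-range reduction and the finite-rank Lemma \ref{lem:finite-rank1}) has already been absorbed into those theorems.

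First I would perform the rescaling already indicated in the text preceding the statement. Setting $\tilde{\omega}_k := \lambda \omega_k$ recasts \eqref{eq:schr-op-low1} as an operator $\Delta + \sum_j \tilde{\omega}_j P_j$ of the form \eqref{eq:schr-op2}, whose iid single-site variables are distributed according to $\nu_\lambda$ with $d\nu_\lambda(x) = d\mu(x/\lambda)$ and $\mathrm{supp}\,\nu_\lambda \subseteq [-\lambda,\lambda] \subseteq [-1,1]$; the deterministic case $\lambda = 0$ corresponds to the single-site measure $\nu := \delta_0$. The crucial quantitative input is the bound $d_w(\nu_\lambda, \delta_0) \leq \lambda$ for $0 < \lambda \leq 1$. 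Indeed, for any $f$ with $\Vert f \Vert_{\mathrm{Lip}} \leq 1$ one computes $|\nu_\lambda(f) - f(0)| = |\int (f(\lambda y) - f(0))\, d\mu(y)| \leq L_f \lambda \int |y|\, d\mu(y) \leq \lambda$, using $L_f \leq 1$ together with $\mathrm{supp}\,\mu \subseteq [-1,1]$.

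For the qualitative statement \eqref{eq:dosm-weak1} I would pass to an arbitrary sequence $\lambda_\alpha \downarrow 0^+$; the distance bound gives $\nu_{\lambda_\alpha} \stackrel{w^\star}{\rightarrow} \delta_0$, so the hypotheses [H2] are satisfied and Theorem \ref{thm:qual1} yields $n_{\lambda_\alpha}^{(\infty)} \stackrel{w^\star}{\rightarrow} n_{\lambda=0}^{(\infty)}$. Since the sequence was arbitrary, $w^\star$-continuity as $\lambda \to 0^+$ follows. For the quantitative estimate \eqref{eq:dosm-weak2} I would apply Theorem \ref{thm:quant1} with $\eta_\alpha$ replaced by $\eta_\lambda := d_w(\nu_\lambda, \delta_0) \leq \lambda$, which immediately gives $|n_\lambda^{(\infty)}(f) - n_{\lambda=0}^{(\infty)}(f)| \leq \gamma \Vert f \Vert_{\mathrm{Lip}} \eta_\lambda^{1/(1+2d)} \leq \gamma \Vert f \Vert_{\mathrm{Lip}} \lambda^{1/(1+2d)}$, with the same constant $\gamma$ as in that theorem. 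Since every $\nu_\lambda$ with $\lambda \leq \lambda_0$ is supported in $[-\lambda_0,\lambda_0]$, the relevant spectral window is $[-r,r]$ with $r = 2d + \lambda_0$, which explains the domain $[-2d-\lambda_0, 2d+\lambda_0]$ imposed on $f$.

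The only genuinely computational point, and the single place requiring care, is pinning down $\lambda_0$. Theorem \ref{thm:quant1} applies once the smallness condition \eqref{eq:weak-conv-dos3} holds, namely $\eta^{\xi_0} \leq 2^{-2d}$ with $\xi_0 = (1+(2d)^{-1})^{-1} = \tfrac{2d}{2d+1}$. As $\eta_\lambda \leq \lambda$, it suffices to require $\lambda^{\xi_0} \leq 2^{-2d}$, i.e. $\lambda \leq 2^{-2d/\xi_0} = 2^{-(2d+1)} = (1/2)^{1+2d}$, which is exactly the value \eqref{eq:lambda-zero1}. I do not anticipate any serious obstacle: the whole argument is a reduction to the already-established theorems, and the rescaling together with the elementary estimate $d_w(\nu_\lambda,\delta_0)\leq\lambda$ is what converts ``$\lambda \to 0^+$'' into ``$\eta \to 0^+$'' in the hypotheses of [H2].
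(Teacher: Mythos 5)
Your proposal is correct and follows essentially the same route as the paper: rescaling $\tilde\omega_k = \lambda\omega_k$ to recast \eqref{eq:schr-op-low1} as an instance of [H1] with single-site measure $\nu_\lambda$, the elementary bound $d_w(\nu_\lambda,\delta_0)\leq\lambda$, and a direct application of Theorems \ref{thm:qual1} and \ref{thm:quant1}, with $\lambda_0$ extracted from the smallness condition \eqref{eq:weak-conv-dos3}. Your computation $\lambda^{\xi_0}\leq 2^{-2d}$ with $\xi_0 = 2d/(2d+1)$ giving $\lambda_0 = (1/2)^{1+2d}$ matches \eqref{eq:lambda-zero1} exactly.
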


The DOSm for the case $\lambda = 0$ is that of the free Laplacian on $\ell^2(\mathbb{Z}^d)$. In particular, it is absolutely continuous, i.e. $\mathrm{d}n_{\lambda = 0}^{(\infty)}(E)= \rho_{\lambda = 0}^{(d)}(E) \mathrm{d}E$, and the density of states function (DOSf) can be expressed as
\begin{eqnarray} \label{eq:D0Sm-free1}
\rho_{\lambda = 0}^{(1)}(E) & = & \dfrac{1}{2\pi} \dfrac{1}{\sqrt{1 - (\frac{E}{2})^2}} \chi_{(-2,2)}(E) \nonumber \\
\rho_{\lambda = 0}^{(d)}(E) & = & ( \underbrace{\rho_{\lambda = 0}^{(1)} \ast \dots \ast \rho_{\lambda = 0}^{(1)} }_{\mbox{$d$-times}} )(E) ~\mbox{, for $d \geq 2$ .}
\end{eqnarray}
While for $d=1$, the free DOSf exhibits a square-root singularity at the edges of the spectrum (``van Hove singularity''), for $d \geq 2$, the expression as a $d$-fold convolution of $L^1$-functions shows that $\rho_{\lambda = 0}^{(d)} \in \mathcal{C}([-2d,2d])$, with increasing regularity as $d$ increases (see also (\ref{eq:decayFourierFreeLapl}) below).

In particular, the IDS for $\lambda = 0$ is H\"older continuous
\beq\label{eq:IDS-free1}
N_{\lambda = 0} (E+ \epsilon) - N_{\lambda = 0}(E) = n_{\lambda =0}^{(\infty)}( [E, E+\epsilon]) \leq c_0 \epsilon^\delta,
\eeq
where, for convenience, we take the constants $c_0, \delta > 0$ to be {\em{uniform in $E$}}, i.e. only depending on the dimension $d$. From (\ref{eq:D0Sm-free1}), the $E$-independent H\"older exponent for $d=1$ is  $\delta = \frac{1}{2}$, while for $d \geq 2$, one can take $\delta = 1$.

Using (\ref{eq:IDS-free1}), we thus obtain the following quantitative behavior of the IDS at weak disorder. Here, we note that the $\log$-H\"older dependence in Theorem \ref{thm:IDS-quant1} resulting from (\ref{eq:ids-logholder1}) is improved to H\"older as a consequence of (\ref{eq:IDS-free1}).
\begin{theorem}\label{thm:weak-contIDS1} (Weak disorder continuity of the IDS.)
For the model \eqref{eq:schr-op-low1} with underlying single-site measure $\mu \in \mathcal{P}([-1,1])$, there exists a constant $c_3 > 0$ such that for all $0 \leq \lambda \leq \lambda_0$
and every $E \in [-2d - \lambda_0, 2d + \lambda_0]$, one has
\beq\label{eq:weak-IDSconv1}
| N_\lambda (E) - N_{\lambda = 0}(E)| \leq c_3 \lambda^{\left( \frac{\delta}{1+\delta}\right) \left( \frac{1}{1 + 2 d} \right)}.
\eeq
where $c_3 = 2 \max \{ 3 \gamma, c_0 \}$ and $\lambda_0$ is given in (\ref{eq:lambda-zero1}).
\end{theorem}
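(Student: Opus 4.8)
The plan is to follow the same bracketing strategy used in the proof of Theorem \ref{thm:IDS-quant1}, replacing the log-H\"older input \eqref{eq:ids-logholder1} by the genuine H\"older continuity \eqref{eq:IDS-free1} of the free IDS $N_{\lambda=0}$; this single replacement is the entire source of the improved modulus. As in Remark \ref{remark:thm:IDS-quant1} it suffices to treat $E \in (-r,r)$ with $r := 2d + \lambda_0$, since for $E$ outside $[-2d-\lambda_0, 2d+\lambda_0]$ both $N_\lambda(E)$ and $N_{\lambda=0}(E)$ equal $0$ or $1$ simultaneously, using $\mathrm{supp}(n_\lambda^{(\infty)}) \subseteq [-2d-\lambda, 2d+\lambda] \subseteq [-2d-\lambda_0, 2d+\lambda_0]$ for all $0 \leq \lambda \leq \lambda_0$. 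We may assume $\lambda > 0$, the case $\lambda = 0$ being trivial.

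First I would, for a parameter $\xi > 0$ to be optimized, introduce the Lipschitz cut-off functions $f_\pm$ of \eqref{eq:cut-off1}--\eqref{eq:cut-off2} with $a_- = a_+ = \tfrac{1}{2}\lambda^\xi$, so that $0 \leq f_- \leq \chi_{[-r,E)} \leq f_+$ and $\Vert f_\pm \Vert_{\mathrm{Lip}} \leq 1 + 2\lambda^{-\xi}$. Bracketing $N_\lambda(E)$ and $N_{\lambda=0}(E)$ between the $f_\pm$-averages exactly as in \eqref{eq:ids-cont2} gives
\[
| N_\lambda(E) - N_{\lambda=0}(E) | \leq \max_{\pm} | n_\lambda^{(\infty)}(f_\pm) - n_{\lambda=0}^{(\infty)}(f_\pm) | + n_{\lambda=0}^{(\infty)}\big([E-\tfrac{1}{2}\lambda^\xi, E+\tfrac{1}{2}\lambda^\xi]\big),
\]
where the interval in the last term has length $\lambda^\xi$.

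For the first term I would invoke the weak-disorder DOSm estimate \eqref{eq:dosm-weak2} of Theorem \ref{thm:weak-cont1}, which applies since $0 \leq \lambda \leq \lambda_0$ and $f_\pm \in Lip([-2d-\lambda_0, 2d+\lambda_0])$; this bounds it by $\gamma(1 + 2\lambda^{-\xi})\lambda^{1/(1+2d)} \leq 3\gamma\,\lambda^{\frac{1}{1+2d} - \xi}$, using $\lambda^{-\xi} \geq 1$ for $0 < \lambda \leq \lambda_0 \leq 1$. For the second term I would use the H\"older bound \eqref{eq:IDS-free1}, giving $n_{\lambda=0}^{(\infty)}([E-\tfrac{1}{2}\lambda^\xi, E+\tfrac{1}{2}\lambda^\xi]) \leq c_0 \lambda^{\xi\delta}$. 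This is precisely the point where the argument improves on Theorem \ref{thm:IDS-quant1}: the stronger regularity of the free DOSf turns the previous $1/\log$ contribution into a genuine power of $\lambda$. Combining the two estimates yields
\[
| N_\lambda(E) - N_{\lambda=0}(E) | \leq 3\gamma\, \lambda^{\frac{1}{1+2d} - \xi} + c_0\, \lambda^{\xi\delta}.
\]

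Finally I would balance the two power-law exponents by solving $\tfrac{1}{1+2d} - \xi = \xi\delta$, which forces $\xi = \big((1+\delta)(1+2d)\big)^{-1}$; with this choice both exponents collapse to $\tfrac{\delta}{(1+\delta)(1+2d)} = \big(\tfrac{\delta}{1+\delta}\big)\big(\tfrac{1}{1+2d}\big)$, and one checks $\xi \leq (1+2d)^{-1}$ (since $\delta > 0$) so that the exponent $\frac{1}{1+2d}-\xi$ stays nonnegative. Absorbing constants via $3\gamma + c_0 \leq 2\max\{3\gamma, c_0\}$ then yields \eqref{eq:weak-IDSconv1} with $c_3 = 2\max\{3\gamma, c_0\}$. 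The only obstacle here is conceptual rather than technical: one must recognize that the free DOSf is regular enough to supply \eqref{eq:IDS-free1}, after which the optimization is the same elementary balancing of two monomials in $\lambda$ carried out in \eqref{eq:ids-cont3}--\eqref{eq:ids-cont4}, with the logarithmic term simply replaced by a power. No delicate estimate is required.
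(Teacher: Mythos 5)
Your proposal is correct and follows essentially the same route as the paper: the paper's proof likewise sets $\eta_\alpha = \lambda$ and $a_- = a_+ = \tfrac{1}{2}\lambda^\xi$ in \eqref{eq:ids-cont2}, replaces the log-H\"older input by the H\"older bound \eqref{eq:IDS-free1} for the free IDS to get $3\gamma\,\lambda^{\frac{1}{1+2d}-\xi} + c_0\,\lambda^{\delta\xi}$, and then balances the two powers of $\lambda$. Moreover, your value $\xi = \bigl((1+\delta)(1+2d)\bigr)^{-1}$ is the correct optimizer — the paper's displayed $\xi = \frac{\delta}{(1+\delta)(1+2d)}$ is evidently a slip (it is the value of the common exponent, not of $\xi$), and your choice is the one that actually yields the exponent $\bigl(\tfrac{\delta}{1+\delta}\bigr)\bigl(\tfrac{1}{1+2d}\bigr)$ claimed in \eqref{eq:weak-IDSconv1}.
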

Similar to remark \ref{thm:weak-contIDS1}, the restriction of the energy in Theorem \ref{thm:weak-contIDS1} is in principal not necessary since for energies outside the given closed interval the left-hand side of (\ref{eq:weak-IDSconv1}) is a priori zero.

\begin{remark} \label{remark:weakcoupling}
As mentioned above, Theorem \ref{eq:weak-IDSconv1} extends earlier results in \cite{hks1, schenker1} which prove the H\"older continuity of the IDS in $\lambda$ for $\lambda \to 0^+$ assuming that the underlying single-site measure $\mu$ is AC with bounded density. While we do not impose any assumptions on $\mu$, we mention that the result of \cite[Theorem 1.2]{hks1} gives a dimension independent H\"older exponent of $\frac{1}{8}$ while Theorem \ref{thm:weak-contIDS1} yields $\frac{1}{9}$ for $d=1$ (taking $\delta = 1/2$) and $\frac{1}{2(1+2d)}$, for $d \geq 2$ (taking $\delta = 1$).
\end{remark}

\begin{proof}[Proof of Theorem \ref{thm:weak-contIDS1}]
From \eqref{eq:ids-cont2}, with $\eta_\alpha = \lambda$, we take $a_- = a_+ = \frac{1}{2} \lambda^\xi$, with $\xi > 0$ to be determined, so that
\beq\label{eq:ids-weak2}
| N_\lambda (E) - N_{\lambda = 0} (E) | \leq 3 \gamma \lambda^{\frac{1}{1 + 2 d} - \xi} + c_0 \lambda^{\delta \xi}, \eeq
for all $\lambda \leq \lambda_0$, where $\lambda_0$ is as in Theorem \ref{thm:weak-cont1}. Optimizing \eqref{eq:ids-weak2} with respect to $\xi$, we arrive at $\xi = \frac{\delta}{(1 + \delta) ( 1 + 2d)}$, establishing the result.
\end{proof}

To conclude this section, we mention that Theorem \ref{thm:main} can also be used to obtain results about the dependence of the IDS and DOSm on the disorder in the regime where $\lambda > 0$. Indeed, the same rescaling argument of the probability measure as in (\ref{eq:couplingrescaled meas}) implies that for all $\lambda_0 > 0$,
\begin{equation}
\nu_\lambda \stackrel{w^\star}{\rightarrow} \nu_{\lambda_0} ~\mbox{, as $\lambda \to \lambda_0$ .}
\end{equation}
For completeness, we state the theorem below. The ``moreover'' statement in part (ii) uses the Wegner estimate (\ref{eq:wegner}) and a similar argument than used in the proof of Theorem (\ref{thm:weak-contIDS1}).
\begin{theorem} \label{thm:poscoupling}
Consider the model \eqref{eq:schr-op-low1} with underlying single-site measure $\mu \in \mathcal{P}([-1,1])$. Fix $\lambda_0 > 0$ and denote by
\begin{equation}
\eta_\lambda:= d_w(\nu_\lambda , \nu_{\lambda_0}) ~\mbox{, } \lambda > 0 ~\mbox{,}
\end{equation}
where $\nu_\lambda$ is the rescaled probability measure defined in (\ref{eq:couplingrescaled meas}).
\begin{itemize}
\item[(i)] The DOSm $n_\lambda^{(\infty)}$ is $w^\star$-continuous as $\lambda \rightarrow \lambda_0$, that is
\beq\label{eq:dosm-weak1_pos}
n_\lambda^{(\infty)}  \stackrel{w^\star}{\longrightarrow}  n_{\lambda_0}^{(\infty)}, ~~~~\lambda \rightarrow \lambda_0 ~\mbox{.}
\eeq
Moreover, there exists $\delta > 0$ such that for every $f \in Lip([-2d-(\lambda_0 + \delta), 2d + (\lambda_0+ \delta)])$, one has
\beq\label{eq:dosm-weak2_pos}
| n_\lambda^{(\infty)}(f) - n_{\lambda_0}^{(\infty)}(f) | \leq \gamma \Vert f \Vert_{\mathrm{Lip}} \eta_\lambda^{\frac{1}{1 + 2 d}},
\eeq
for all $\lambda >0$ with $\vert \lambda - \lambda_0 \vert < \delta$.
\item[(ii)] For each fixed $E \in \mathbb{R}$, the IDS $\lambda \mapsto N_\lambda(E)$ is continuous at $\lambda_0$ and there exists $\delta>0$ such that for all $\lambda >0$ with $\vert \lambda - \lambda_0 \vert < \delta$, one has
\beq \label{eq:IDS-cont1_pos}
| N_\lambda (E) - N_{\lambda_0}(E) | \leq \frac{ C_2}{ \log \left( \frac{1}{\eta_\lambda} \right)} ~\mbox{.}
\eeq
Moreover, if $d \mu = h(x) dx$ with $h \in L^1 \cap L^\infty$, then there exists $\widetilde{C_2} = \widetilde{C_2}(\Vert h \Vert_\infty, \lambda_0, N, d)$ such that (\ref{eq:IDS-cont1_pos}) is improved to
\begin{equation}
| N_\lambda (E) - N_{\lambda_0}(E) | \leq \widetilde{C_2} \eta_\lambda^\frac{1}{2(1 + 2d)} ~\mbox{.}
\end{equation}
\end{itemize}
The constants $\gamma, C_2$ are determined in the Theorems \ref{thm:quant1} and \ref{thm:IDS-quant1}.
\end{theorem}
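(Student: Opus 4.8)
The plan is to reduce all four assertions to the theorems already proved in Sections \ref{sec:qual-contDOSm1}--\ref{sec:quan-contDOSm1}, applied through the dilation that defines the rescaled single-site measure $\nu_\lambda$. First I would record the elementary Lipschitz estimate underlying the whole argument: the change of variables $x = \lambda y$ gives $\nu_\lambda(f) = \int f(\lambda y)\,d\mu(y)$, so for any $f \in Lip$ one has
\begin{equation*}
|\nu_\lambda(f) - \nu_{\lambda_0}(f)| \le L_f\,|\lambda - \lambda_0| \int_{[-1,1]} |y|\,d\mu(y) \le \|f\|_{\mathrm{Lip}}\,|\lambda - \lambda_0|,
\end{equation*}
hence $\eta_\lambda = d_w(\nu_\lambda, \nu_{\lambda_0}) \le |\lambda - \lambda_0| \to 0$ as $\lambda \to \lambda_0$. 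Fixing $\delta > 0$ and restricting to $|\lambda - \lambda_0| < \delta$, every $\nu_\lambda$ is supported in the fixed interval $[-(\lambda_0+\delta), \lambda_0+\delta]$, so all the spectra lie in $[-2d-(\lambda_0+\delta), 2d+(\lambda_0+\delta)]$; this is precisely the interval on which the test functions in part (i) are required to be Lipschitz. After shrinking $\delta$ so that $\eta_\lambda$ meets the smallness threshold (\ref{eq:weak-conv-dos3}), part (i) is immediate: the qualitative $w^\star$-convergence (\ref{eq:dosm-weak1_pos}) is Theorem \ref{thm:qual1}, and the quantitative bound (\ref{eq:dosm-weak2_pos}) is exactly Theorem \ref{thm:quant1}, applied with the family $\nu_\lambda$ in place of the sequence $\nu_\alpha$, the limit $\nu_{\lambda_0}$ in place of $\nu$, and $\eta_\lambda$ in place of $\eta_\alpha$.

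For the first half of part (ii), the pointwise continuity of $\lambda \mapsto N_\lambda(E)$ follows from the ``moreover'' statement of Theorem \ref{thm:qual1}, and the log-H\"older bound (\ref{eq:IDS-cont1_pos}) is a verbatim application of Theorem \ref{thm:IDS-quant1} to the same family. The only assertion requiring genuinely new input is the improved H\"older bound under the hypothesis $d\mu = h\,dx$, $h \in L^1 \cap L^\infty$. Here I would retrace the proof of Theorem \ref{thm:IDS-quant1}, replacing the Craig--Simon log-H\"older input (\ref{eq:ids-logholder1}) for the \emph{limiting} measure by the sharper regularity of $n_{\lambda_0}^{(\infty)}$ in the energy. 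Since $\mu$ is absolutely continuous with bounded density, so is the rescaled measure $\nu_{\lambda_0}$, whose density $\lambda_0^{-1} h(\cdot/\lambda_0)$ has sup-norm $\|h\|_\infty/\lambda_0$; the Wegner estimate (\ref{eq:wegner}) then yields the Lipschitz bound $n_{\lambda_0}^{(\infty)}([E, E+\epsilon]) \le C_W\,\epsilon$, with $C_W = C_W(\|h\|_\infty, \lambda_0, N, d)$. This is the exact analogue of (\ref{eq:IDS-free1}) with H\"older exponent $\delta = 1$.

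With this Lipschitz input I would run the optimization already carried out in the proof of Theorem \ref{thm:weak-contIDS1}. Starting from (\ref{eq:ids-cont2}) with $\eta_\alpha$ replaced by $\eta_\lambda$ and the symmetric choice $a_- = a_+ = \tfrac{1}{2}\eta_\lambda^\xi$, the two competing terms are $3\gamma\,\eta_\lambda^{\frac{1}{1+2d} - \xi}$ and $C_W\,\eta_\lambda^{\xi}$; balancing them (the case $\delta = 1$ of that computation) gives the optimal choice $\xi = \frac{1}{2(1+2d)}$ and the claimed estimate with $\widetilde{C_2} = 2\max\{3\gamma, C_W\}$, which indeed depends only on $\|h\|_\infty, \lambda_0, N$, and $d$. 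The one place where something beyond routine bookkeeping enters is this ``moreover'' clause: one must invoke the Wegner estimate for the operator at the fixed positive disorder $\lambda_0$ and track the explicit dependence of its constant on $\|h\|_\infty, \lambda_0, N, d$; all remaining steps are a direct transcription of the previously established theorems through the dilation $x \mapsto x/\lambda$.
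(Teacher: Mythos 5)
Your proposal is correct and follows essentially the same route as the paper, which only sketches this proof: parts (i) and the first half of (ii) are direct applications of Theorems \ref{thm:qual1}, \ref{thm:quant1}, and \ref{thm:IDS-quant1} to the rescaled family $\nu_\lambda \stackrel{w^\star}{\rightarrow} \nu_{\lambda_0}$, and the ``moreover'' clause is obtained, exactly as the paper indicates, from the Wegner estimate (\ref{eq:wegner}) at disorder $\lambda_0$ (giving Lipschitz continuity of $N_{\lambda_0}$ in energy with constant $N\Vert h \Vert_\infty/\lambda_0$) fed into the optimization from the proof of Theorem \ref{thm:weak-contIDS1} with H\"older exponent $1$, yielding $\xi = \frac{1}{2(1+2d)}$. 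Your clean a priori bound $\eta_\lambda \le \Vert f \Vert_{\mathrm{Lip}}$-normalized $|\lambda - \lambda_0|$, valid for every $\mu \in \mathcal{P}([-1,1])$, is a nice touch that subsumes the paper's example-based estimates (\ref{eq:poscouplingestim_1})--(\ref{eq:poscouplingestim_2}).
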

We mention that the distance $\eta_\lambda = d_w(\nu_\lambda , \nu_{\lambda_0})$ in Theorem \ref{thm:poscoupling} can be further quantified in terms of $\lambda$, depending on the explicit form of the single-site measure $\mu$. For instance, using Examples 2 and 3 of section \ref{sec:qual-contDOSm1}, one estimates
\begin{equation} \label{eq:poscouplingestim_1}
\eta_\lambda \leq \Vert h(\frac{x}{\lambda}) - h(\frac{x}{\lambda_0}) \Vert_1 \leq C_{\lambda_0} \vert \lambda - \lambda_0 \vert^\kappa ~\mbox{,}
\end{equation}
if $d\mu(x) = h(x) dx$ with $\kappa$-H\"older continuous density function $h(x)$, or
\begin{equation} \label{eq:poscouplingestim_2}
\eta_\lambda \leq C_{\lambda_0} \vert \lambda - \lambda_0 \vert ~\mbox{,}
\end{equation}
if $\mu$ is Bernoulli. In both (\ref{eq:poscouplingestim_1}) - (\ref{eq:poscouplingestim_2}), $C_{\lambda_0}$ is a constant depending on $\lambda_0$.

%%%%%%%%%%%%%%%%%%%%%%%%%%%%%%%%%%%%%%%%%%%%%%%%%%%%%%%%%%%%%%%%%%%%%%%%%%%%%%%%%%%%%%%%%%%%%%%%%%%

\subsection{Continuity of the Lyapunov exponent in the probability distribution}\label{subsec:le1}

For $d=1$ and given $E_0 \in \C$, the {\em{Lyapunov exponent}} $L(E_0)$ characterizes the averaged growth rate of solutions to the finite-difference equation $H_\omega \psi = E_0 \psi$. By the {\em{Thouless formula}}, it may be expressed in terms of the DOSm by
\beq\label{eq:le1}
L_\nu(E_0) = \int_\R ~\log |E^\prime - E_0| ~dn_\nu^{(\infty)} (E') \in [0, \infty) .
\eeq
In this section, whenever not mentioned otherwise, $n_\nu^{(\infty)}$ denotes the DOSm for a Hamiltonian satisfying [H1] equipped with an {\em{arbitrary}} compactly supported single-site probability measure.

The validity of the Thouless formula (\ref{eq:le1}) for all $E_0 \in \C$ in particular implies the existence of non-tangential limits onto the real axis
\beq\label{eq:le2}
\lim_{\epsilon \rightarrow 0^+} L_\nu(E + i \epsilon) = L_\nu(E), ~~~ \forall E \in \R.
\eeq
Moreover, for $E_0 = E + i \epsilon \in \C \backslash \R$, the integrand in \eqref{eq:le1} is smooth.
Thus, applying Theorem \ref{thm:quant1} with $d=1$ for the function
\beq\label{eq:le-fnc1}
f(E^\prime) = \log | E + i \epsilon - E^\prime |,
\eeq
one has
\beq \label{eq:le-fnc2}
\Vert f \Vert_{\mathrm{Lip}} \leq \frac{1}{\epsilon} + \log \frac{1}{\epsilon} \leq \frac{2}{\epsilon} ~\mbox{,}
\eeq
for all $0 < \epsilon \leq \epsilon_0$ where
\begin{equation} \label{eq:lecomplexenergies_eps}
\epsilon_0:= \dfrac{1}{\sqrt{\delta(E)^2 + 1}} ~\mbox{, } \delta(E):= \max_{\pm} \vert E \pm r \vert ~\mbox{,}
\end{equation}

We hence obtain as an immediate corollary of Theorem \ref{thm:quant1}:
\begin{prop}\label{prop:LE1}
Consider the set-up in [H1]--[H2] with $d = 1$ and $E \in \R$ fixed. Then, there exists $\epsilon_0$ given in (\ref{eq:lecomplexenergies_eps}) such that for all $0 < \epsilon \leq \epsilon_0$, one has
\beq\label{eq:LE1}
| L_{\nu_\alpha} (E + i \epsilon) - L_\nu(E + i \epsilon) | \leq \gamma \frac{2}{\epsilon} \eta_\alpha^{\frac{1}{3}},
\eeq
for all $ \alpha \geq \alpha_0$. The constant $\alpha_0$ is given in \eqref{eq:weak-conv-dos3} and $\gamma$ is given in \eqref{eq:weak-conv-dos-summ2}.
\end{prop}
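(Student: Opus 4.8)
The plan is to read Proposition \ref{prop:LE1} as a direct application of the quantitative DOSm estimate in Theorem \ref{thm:quant1} to the integrand appearing in the Thouless formula \eqref{eq:le1}. First I would fix $E \in \mathbb{R}$ and $0 < \epsilon \le \epsilon_0$, and set $f_\epsilon(E') := \log|E + i\epsilon - E'| = \tfrac12\log\big((E-E')^2 + \epsilon^2\big)$. Since $\epsilon > 0$, the argument of the logarithm is bounded below by $\epsilon^2 > 0$ for all real $E'$, so $f_\epsilon$ is smooth — in particular Lipschitz — on the compact interval $[-r,r]$ containing the support of every DOSm under consideration (cf. \eqref{eq:schr-op-sp1}). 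By the Thouless formula \eqref{eq:le1}, applied to the single-site measures $\nu_\alpha$ and $\nu$, the left-hand side of \eqref{eq:LE1} is exactly
\[
L_{\nu_\alpha}(E+i\epsilon) - L_\nu(E+i\epsilon) = n_\alpha^{(\infty)}(f_\epsilon) - n^{(\infty)}(f_\epsilon),
\]
where the integrals may be restricted to $[-r,r]$ because both DOSm are supported there.

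The second step is to control $\Vert f_\epsilon \Vert_{\mathrm{Lip}([-r,r])}$. Differentiating gives $f_\epsilon'(E') = (E'-E)/((E-E')^2+\epsilon^2)$, whose modulus is maximized at $|E-E'| = \epsilon$ with value $1/(2\epsilon)$, so $L_{f_\epsilon} \le 1/(2\epsilon) \le 1/\epsilon$. For the sup-norm, $|f_\epsilon(E')|$ is bounded by $\max\{\log(1/\epsilon),\, |\log\sqrt{\delta(E)^2+\epsilon^2}|\}$ with $\delta(E)$ as in \eqref{eq:lecomplexenergies_eps}; the role of the threshold $\epsilon_0 = (\delta(E)^2+1)^{-1/2}$ is precisely to guarantee $\sqrt{\delta(E)^2+\epsilon^2}\le 1/\epsilon$ for $\epsilon \le \epsilon_0$, which forces $\Vert f_\epsilon \Vert_\infty \le \log(1/\epsilon)$. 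Since $\epsilon_0 \le 1$ also gives $\log(1/\epsilon) \le 1/\epsilon$, the two bounds combine to $\Vert f_\epsilon \Vert_{\mathrm{Lip}} = \Vert f_\epsilon \Vert_\infty + L_{f_\epsilon} \le \log(1/\epsilon) + 1/\epsilon \le 2/\epsilon$, which is the estimate recorded in \eqref{eq:le-fnc2}.

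Finally I would invoke Theorem \ref{thm:quant1} with $d=1$ for the Lipschitz function $f_\epsilon$. Its H\"older exponent $\frac{1}{1+2d}$ specializes to $\frac{1}{3}$, so for all $\alpha \ge \alpha_0$ one obtains $|n_\alpha^{(\infty)}(f_\epsilon) - n^{(\infty)}(f_\epsilon)| \le \gamma \Vert f_\epsilon \Vert_{\mathrm{Lip}} \, \eta_\alpha^{1/3} \le \gamma\, \tfrac{2}{\epsilon}\, \eta_\alpha^{1/3}$, which together with the identity of the first step is exactly \eqref{eq:LE1}. There is no genuine analytic obstacle once the Thouless formula is granted; the only point to watch is the blow-up $\Vert f_\epsilon \Vert_{\mathrm{Lip}} \sim 1/\epsilon$ as $\epsilon \to 0^+$, which reflects the logarithmic singularity of $f_0(E') = \log|E-E'|$ on the real axis. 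This is why the estimate is confined to complex energies with $\epsilon \le \epsilon_0$ and cannot be specialized naively to $\epsilon = 0$: passing to the boundary value $L_\nu(E)$ requires trading the $1/\epsilon$ singularity against the $\eta_\alpha^{1/3}$ gain, a separate optimization carried out later and (as Appendix C shows) sensitive to the exact modulus of continuity of the DOSm.
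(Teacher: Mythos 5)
Your proposal is correct and follows essentially the same route as the paper: rewrite $L_{\nu_\alpha}(E+i\epsilon)-L_\nu(E+i\epsilon)$ via the Thouless formula \eqref{eq:le1} as a difference of DOSm integrals of $f_\epsilon(E')=\log|E+i\epsilon-E'|$, bound $\Vert f_\epsilon\Vert_{\mathrm{Lip}}\leq \log(1/\epsilon)+1/\epsilon\leq 2/\epsilon$ for $\epsilon\leq\epsilon_0$ exactly as in \eqref{eq:le-fnc2}--\eqref{eq:lecomplexenergies_eps}, and apply Theorem \ref{thm:quant1} with $d=1$. Your verification of the role of $\epsilon_0$ (ensuring $\sqrt{\delta(E)^2+\epsilon^2}\leq 1/\epsilon$) and of the derivative bound $1/(2\epsilon)$ is in fact slightly more detailed than what the paper records.
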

\begin{remark}
For energies $E \subseteq [-r,r]$, which by (\ref{eq:schr-op-sp1}) contains the almost-sure spectrum of $H_\omega$, $\epsilon_0$ in (\ref{eq:lecomplexenergies_eps}) can be chosen uniformly in the energy since $\delta(E) \leq 2r$.
\end{remark}

Adapting $\epsilon$ to the decay of $\eta_\alpha$ in Proposition \ref{prop:LE1} and using (\ref{eq:le2}), we can conclude the qualitative continuity of the Lyapunov exponent in the probability distribution:
\begin{theorem} \label{thm:qualiLE}
Consider the set-up described in [H1] and $d=1$. Then, for each fixed $E \in \mathbb{C}$, the map
\begin{equation}
\mathcal{P}([-C,C]) \ni \nu \mapsto L_\nu(E)
\end{equation}
is continuous in the weak-$^\star$ topology.
\end{theorem}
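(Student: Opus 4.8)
The plan is to treat non-real and real energies separately, the latter being the only substantive case. For $E\in\mathbb{C}\setminus\mathbb{R}$, and also for real $E$ outside $[-r,r]$, the function $E'\mapsto\log|E'-E|$ is continuous and bounded on the common support $[-r,r]$ of all the measures $n_{\nu_\alpha}^{(\infty)}$, $n_\nu^{(\infty)}$ (see \eqref{eq:schr-op-sp1}). Hence the qualitative weak-$^\star$ convergence $n_\alpha^{(\infty)}\to n^{(\infty)}$ from Theorem \ref{thm:qual1}, together with the Thouless formula \eqref{eq:le1}, immediately yields $L_{\nu_\alpha}(E)\to L_\nu(E)$. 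It therefore remains to handle $E\in[-r,r]\subset\mathbb{R}$, where $\log|E'-E|$ acquires a logarithmic singularity inside the support and weak-$^\star$ convergence can no longer be applied naively.

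For real $E$ I would regularize by moving into the upper half-plane and decompose, for small $\epsilon>0$,
\[ L_{\nu_\alpha}(E)-L_\nu(E) = \big(L_{\nu_\alpha}(E)-L_{\nu_\alpha}(E+i\epsilon)\big) + \big(L_{\nu_\alpha}(E+i\epsilon)-L_\nu(E+i\epsilon)\big) + \big(L_\nu(E+i\epsilon)-L_\nu(E)\big), \]
with the three terms denoted $(\mathrm{I})$, $(\mathrm{II})$, $(\mathrm{III})$. For fixed $\epsilon\in(0,\epsilon_0]$, Proposition \ref{prop:LE1} bounds $(\mathrm{II})$ by $\gamma\tfrac{2}{\epsilon}\eta_\alpha^{1/3}\to0$ as $\alpha\to\infty$. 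Since $L_\nu(E+i\epsilon)=\tfrac12\int\log\big((E'-E)^2+\epsilon^2\big)\,dn_\nu^{(\infty)}(E')$ is monotonically decreasing as $\epsilon\downarrow0$, the term $(\mathrm{III})\ge0$ satisfies $(\mathrm{III})\downarrow0$ by the existence of the non-tangential limit \eqref{eq:le2}. Finally $(\mathrm{I})=-\tfrac12\int\log\!\big(1+\epsilon^2/(E'-E)^2\big)\,dn_{\nu_\alpha}^{(\infty)}(E')\le0$. Combining the signs of $(\mathrm{I})$ and $(\mathrm{III})$ already yields the easy half, $\limsup_\alpha L_{\nu_\alpha}(E)\le L_\nu(E)$: given $\eta>0$, one chooses $\epsilon$ so that $(\mathrm{III})<\eta/2$, then $\alpha$ large so that $(\mathrm{II})<\eta/2$, and discards $(\mathrm{I})\le0$.

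The hard part is the reverse inequality $\liminf_\alpha L_{\nu_\alpha}(E)\ge L_\nu(E)$, which forces us to control $(\mathrm{I})$ uniformly in $\alpha$, i.e. to show
\[ \lim_{\epsilon\to0^+}\ \sup_{\alpha}\ \tfrac12\int\log\!\Big(1+\tfrac{\epsilon^2}{(E'-E)^2}\Big)\,dn_{\nu_\alpha}^{(\infty)}(E') = 0, \]
equivalently that the family $\{\log\tfrac{1}{|E'-E|}\}$ is uniformly integrable against $\{n_{\nu_\alpha}^{(\infty)}\}$, so that the non-tangential convergence \eqref{eq:le2} holds uniformly along the sequence. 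This uniformity is the main obstacle, and it is genuinely stronger than the pointwise statement \eqref{eq:le2}: because the integrand is unbounded below, the upper semicontinuity of $\log|E'-E|$ delivers only the $\limsup$ bound, whereas the $\liminf$ is sensitive to logarithmic mass concentrating at $E$.

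To extract the required uniformity I would split the integral at a scale $|E'-E|\sim\delta$. The contribution from $|E'-E|>\delta$, where the integrand is bounded, is controlled by the log-Hölder tail bound $n_{\nu_\alpha}^{(\infty)}([E-t,E+t])\le 2C_I/\log(1/t)$, valid with a single constant $C_I=C_I(d,C)$ for every single-site law by \eqref{eq:ids-logholder1}. The contribution from $|E'-E|\le\delta$ — the actual crux — must be shown to be uniformly small, reflecting that no subsequence of the DOSm concentrates logarithmic mass at $E$; here one exploits the regularity of the DOSm family together with the a priori finiteness $L_\nu(E)\in[0,\infty)$ from \eqref{eq:le1}. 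I expect this near-diagonal estimate to be the most delicate point: as Appendix C (section \ref{sec:appendix:nontangLE1}) indicates, the log-Hölder modulus is on the borderline of sufficiency, so the argument must exploit uniform integrability rather than a crude pointwise bound, and the limits must be taken in the correct order, first $\alpha\to\infty$ for fixed $\epsilon$ and only then $\epsilon\to0^+$. Once $\sup_\alpha|(\mathrm{I})|\to0$ as $\epsilon\to0^+$ is established, the same three-term $\eta/3$ argument closes the proof.
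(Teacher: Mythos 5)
Your decomposition is exactly the one the paper uses to prove Theorem \ref{thm:qualiLE}, see \eqref{eq:lesuccesapprox}: the same three terms, with Proposition \ref{prop:LE1} controlling the middle term $(\mathrm{II})$ and the non-tangential limit \eqref{eq:le2} invoked for the two outer terms; your reduction to real $E$ and your proof of the half $\limsup_\alpha L_{\nu_\alpha}(E)\le L_\nu(E)$ are correct and match the paper. The difference lies in the treatment of $(\mathrm{I})=L_{\nu_\alpha}(E)-L_{\nu_\alpha}(E+i\epsilon)$. You correctly observe that closing the $\liminf$ half requires $\sup_\alpha|(\mathrm{I})|\to 0$ as $\epsilon\to 0^+$, i.e.\ non-tangential convergence \emph{uniform in $\alpha$}, which is strictly stronger than the pointwise statement \eqref{eq:le2}; but you do not prove this, so as written the proposal is not a complete proof. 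Moreover, the route you sketch for closing it would fail: the uniform Craig--Simon bound \eqref{eq:ids-logholder1} cannot deliver the needed near-diagonal estimate, since a layer-cake computation against $n_\alpha([E-t,E+t])\le 2C_I/\log(1/t)$ leaves the divergent integral $\int_{\log(1/\delta)}^{\infty}(2C_I/s)\,ds$, and the a priori finiteness of $L_\nu(E)$ for the \emph{limit} measure says nothing about mass of $n_{\nu_\alpha}^{(\infty)}$ concentrating logarithmically at $E$. This insufficiency of log-H\"older control is exactly what Appendix \ref{sec:appendix:nontangLE1} isolates.

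For comparison, the paper disposes of this very step in one line: it couples $\epsilon=\eta_\alpha^{\zeta}$ to $\alpha$ and declares $|L_{\nu_\alpha}(E)-L_{\nu_\alpha}(E+i\epsilon)|=o(1)$ ``by \eqref{eq:le2}'', i.e.\ it applies the pointwise non-tangential limit along a sequence in which the measure itself varies with $\alpha$, without addressing the uniformity question you raise. Note that when the authors need this control quantitatively (Theorem \ref{thm:LE-quant1}), they impose hypothesis [H3], namely $\beta$-H\"older continuity of $N_\alpha$ and $N$ at $E$ with constants uniform in $\alpha$ as in \eqref{eq:ids-unif1}, which via Proposition \ref{prop:le-quant1} is precisely what makes $(\mathrm{I})$ uniformly small. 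So your diagnosis of where the difficulty sits is accurate --- indeed more careful than the paper's own treatment --- but neither your proposal nor the near-diagonal argument you gesture at establishes the $\liminf$ inequality; the proposal therefore has a genuine gap, located at the same point where the paper's proof is silent.
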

\begin{remark} \label{rem:qualilyap}
\begin{itemize}
\item[(i)]
For the case of Schr\"odinger operators, Theorem \ref{thm:qualiLE} thereby recovers recent results by Bocker and Viana \cite{BockerViana_ETDS_2017} (for $2 \times 2$-matrices) and Avila, Eskin, and Viana  \cite{AvilaEskinViana} (for $n \times n$-matrices), which establish the continuity of the Lyapunov exponents for general products of random matrices in the underlying probability measure. In their work the authors were particularly interested in weak$^*$-limits of point measures of the form considered in Example 3 of section \ref{sec:qual-contDOSm1}. We mention that the question of continuity of the Lyapunov exponents in the probability distribution goes back to a paper by Furstenberg and Kifer \cite{Furstenberg_Kiefer_IsrealJMath_1981} where already certain partial results were obtained (see Theorem B in \cite{Furstenberg_Kiefer_IsrealJMath_1981}). For a more detailed account of the history and a recent, more comprehensive list of related results, we refer the reader to section 2.3 in \cite{BockerViana_ETDS_2017}.
\item[(ii)] We mention that our methods allow to extend Theorem \ref{thm:qualiLE} to random Schr\"odinger operators on the strip, in which case the Lyapunov exponent is replaced the sum of all non-negative Lyapunov exponents; see section \ref{sec:extensions_finiteRange_strip_Strip} for further details.
\end{itemize}
\end{remark}

\begin{proof}
We show that for the set-up described in [H2] one has that $L_{\nu_\alpha(E)} \to L_\nu(E)$ as $\alpha \to \infty$, for all fixed $E \in \mathbb{C}$. It suffices to consider $E \in \mathbb{R}$, since otherwise the claim follows directly from Proposition \ref{prop:LE1}. For any fixed $0 < \zeta < \frac{1}{3}$, take $\epsilon = \eta_\alpha^\zeta$ and apply Proposition \ref{prop:LE1} and (\ref{eq:le2}). Then, as $\alpha \to \infty$, one has
\begin{align} \label{eq:lesuccesapprox}
| L_{\nu_\alpha} (E) - L_\nu(E) |  & \leq | L_{\nu_\alpha} (E) - L_{\nu_\alpha} (E + i \epsilon) | + | L_{\nu_\alpha} (E+ i \epsilon) - L_\nu(E + i \epsilon) | \nonumber \\
&  + | L_\nu(E + i \epsilon) - L_\nu(E) | = 2 o(1) + 2 \gamma \eta_\alpha^{\frac{1}{3} - \zeta} ~\mbox{,}
\end{align}
which verifies the claim.
\end{proof}

Our next goal is to obtain a {\em{quantitative}} analogue of Theorem \ref{thm:qualiLE} which characterizes the modulus of continuity of the Lyapunov exponent in the probability measure. By the successive approximation argument used in (\ref{eq:lesuccesapprox}), the latter will follow if we establish a {\em{quantitative}} version of \eqref{eq:le2} which quantifies the $o(1)$ terms in (\ref{eq:lesuccesapprox}).

To this end, we recall that a measure $\mu \in \mathcal{P}([-C,C])$ is called \emph{$\beta$-continuous at $E \in  \R$}, with $0 < \beta \leq 1$, if
there exists a constant $0 < d_\beta=d_\beta(E) < \infty$ so that
\beq\label{eq:beta-cont-m1}
\mu([E-\epsilon, E+\epsilon]) \leq d_\beta \epsilon^\beta , ~~~\forall \epsilon \geq 0.
\eeq
Applied to the DOSm $n_\nu^{(\infty)}$, condition \eqref{eq:beta-cont-m1} is equivalent to the $\beta$-H\"older continuity of the IDS {\em{locally}} at $E$, i.e.
\beq\label{eq:beta-cont-ids1}
| N_\nu(E + \epsilon) - N_\nu(E - \epsilon)| \leq d_\beta \epsilon^\beta, ~~~\forall \epsilon \geq 0.
\eeq
The $\beta$-continuity of a probability measure may be established by studying the behavior of the boundary-values of the Poisson transform $P_{n_\nu^{(\infty)}}(E + i \epsilon)$ of the DOSm $n^{(\infty)}$ defined by
\beq\label{eq:poisson-transf1}
P_{n_\nu^{(\infty)}}(E+i \epsilon) := \int_\R \frac{\epsilon}{(E - E^\prime )^2 + \epsilon^2 } ~dn_\nu^{(\infty)}(E^\prime), ~~~ \epsilon > 0.
\eeq
It is well-known (see, for example, \cite{marx1}) that \eqref{eq:beta-cont-m1} is equivalent to proving that
\beq\label{eq:poisson-transf2}
\limsup_{\epsilon \rightarrow 0^+} \epsilon^{1 - \beta} P_{n_\nu^{(\infty)}} (E + i \epsilon) < \infty.
\eeq

For points $E \in \mathbb{R}$ where the DOSm is $\beta$-H\"older continuous, we obtain the following quantitative version of \eqref{eq:le2}.
\begin{prop}\label{prop:le-quant1}
Suppose that, for some $E \in \mathbb{R}$, the DOSm $n_\nu^{(\infty)}$ is $\beta$-continuous at $E$ as specified in \eqref{eq:beta-cont-m1} with constants $0 < \beta \leq 1$ and $d_\beta$ (depending on $E$). Then, for all $\epsilon > 0 $, the Lyapunov exponent satisfies
 \beq\label{eq:le3}
 | L_\nu(E+i\epsilon) - L_\nu(E) | \leq \frac{\pi}{2 \sin \left( \frac{\pi \beta}{2} \right)} ~d_\beta \epsilon^\beta ~\mbox{.}
 \eeq
\end{prop}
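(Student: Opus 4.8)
The plan is to work directly from the Thouless formula \eqref{eq:le1}. Writing both $L_\nu(E+i\epsilon)$ and $L_\nu(E)$ as integrals of $\log|E'-\cdot|$ against $n_\nu^{(\infty)}$ and using $|E+i\epsilon-E'|=\sqrt{(E-E')^2+\epsilon^2}$, the difference telescopes to
\[
L_\nu(E+i\epsilon)-L_\nu(E)=\frac12\int_\R\log\!\left(1+\frac{\epsilon^2}{(E-E')^2}\right)dn_\nu^{(\infty)}(E').
\]
The integrand is nonnegative, so the left-hand side is itself nonnegative and I may drop the absolute value in \eqref{eq:le3}; it then remains only to produce the stated upper bound.

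First I would exploit that the integrand depends on $E'$ only through $|E-E'|$. Pushing $n_\nu^{(\infty)}$ forward under $E'\mapsto|E-E'|$ yields a measure on $[0,\infty)$ whose distribution function is exactly the symmetric quantity $F(s):=n_\nu^{(\infty)}([E-s,E+s])$, which by the $\beta$-continuity hypothesis \eqref{eq:beta-cont-m1} satisfies $F(s)\le d_\beta s^\beta$. The difference then becomes $\tfrac12\int_0^\infty\log(1+\epsilon^2/s^2)\,dF(s)$.

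The key step is an integration by parts. Since $s\mapsto\log(1+\epsilon^2/s^2)$ is decreasing with derivative $-2\epsilon^2/[s(s^2+\epsilon^2)]$, the prefactor $\tfrac12$ cancels against the $2$ and one obtains
\[
\frac12\int_0^\infty\log\!\left(1+\frac{\epsilon^2}{s^2}\right)dF(s)=\int_0^\infty F(s)\,\frac{\epsilon^2}{s(s^2+\epsilon^2)}\,ds,
\]
provided the boundary terms vanish. At $s\to\infty$ this is immediate because $F$ is bounded while the logarithm tends to $0$. At $s\to0^+$ the product $F(s)\log(1+\epsilon^2/s^2)$ behaves like $s^\beta\log(1/s)$, which tends to $0$ precisely because $\beta>0$; this control of the lower endpoint is where the $\beta$-continuity is genuinely used and is the most delicate point of the argument.

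Finally, inserting $F(s)\le d_\beta s^\beta$ and substituting $s=\epsilon u$ reduces the bound to the standard integral
\[
\int_0^\infty\frac{u^{\beta-1}}{1+u^2}\,du=\frac{\pi}{2\sin(\pi\beta/2)},\qquad 0<\beta<2,
\]
(which follows from the Euler beta integral $\int_0^\infty v^{a-1}/(1+v)\,dv=\pi/\sin(\pi a)$ with $a=\beta/2$ after the substitution $v=u^2$). This produces exactly the factor $\epsilon^\beta$ and the constant $\tfrac{\pi}{2\sin(\pi\beta/2)}$ appearing in \eqref{eq:le3}, with the endpoint case $\beta=1$ recovered by continuity. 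I expect the integration-by-parts step, and in particular justifying the vanishing of the boundary term at $s=0$, to be the main obstacle; the remaining computations are routine.
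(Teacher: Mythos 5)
Your argument is correct, and it reaches \eqref{eq:le3} by a genuinely different route than the paper. The paper's proof works at the level of the $\epsilon$-derivative: it uses the identity $\frac{d}{d\epsilon}L_\nu(E+i\epsilon)=P_{n_\nu^{(\infty)}}(E+i\epsilon)$ from \eqref{eq:poisson-le1}, applies the Mean Value Theorem to the rescaled auxiliary function $\eta\mapsto L_\nu(E+i\eta^{1/\beta})$ to produce the quotient \eqref{eq:poisson-le4}, and then bounds $\epsilon_0^{1-\beta}P_{n_\nu^{(\infty)}}(E+i\epsilon_0)$ by integrating the Poisson kernel against $M^E_{n_\nu^{(\infty)}}(\delta)=n_\nu^{(\infty)}([E-\delta,E+\delta])$ in \eqref{eq:poisson-le5}. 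You instead subtract the two Thouless integrals \eqref{eq:le1} directly, obtaining the manifestly nonnegative representation $\frac12\int\log\bigl(1+\epsilon^2/(E-E')^2\bigr)\,dn_\nu^{(\infty)}(E')$, and then run the integration by parts against the log-kernel rather than the Poisson kernel; the two computations are linked by the elementary identity $\int_0^\epsilon t/(s^2+t^2)\,dt=\frac12\log(1+\epsilon^2/s^2)$, which is why both terminate in the same Euler beta integral and the same constant $\pi/(2\sin(\pi\beta/2))$. Your boundary-term analysis at $s\to0^+$ (where $F(s)\log(1+\epsilon^2/s^2)\lesssim s^\beta\log(1/s)\to0$) is exactly where $\beta$-continuity enters, playing the role that \eqref{eq:beta-cont-m1} plays inside \eqref{eq:poisson-le5}; note also that $F(0)=n_\nu^{(\infty)}(\{E\})=0$ under the hypothesis, so the Stieltjes measure has no atom at the origin and the integration by parts is legitimate after the usual truncation-and-limit argument. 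What your version buys is economy and self-containment: no differentiation under the integral sign to justify \eqref{eq:poisson-le1}, no MVT, and the sign of $L_\nu(E+i\epsilon)-L_\nu(E)$ comes for free. What the paper's version buys is the intermediate identity \eqref{eq:poisson-le4} and its consequence \eqref{eq:poisson-le6}, which express the local regularity of the DOSm as a fractional derivative of the Lyapunov exponent (the link to Kotani theory in Remark \ref{remark:generalizKontani}), and whose structure is reused in the Appendix C discussion of whether log-H\"older continuity could suffice. One cosmetic point: since the Euler integral $\int_0^\infty v^{a-1}/(1+v)\,dv=\pi/\sin(\pi a)$ is valid for all $0<a<1$, the case $\beta=1$ (i.e.\ $a=1/2$) is covered directly, so no continuity argument at the endpoint is needed.
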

\begin{remark}
Since the DOSm is in general only $\log$-H\"older continuous as quantified by (\ref{eq:ids-logholder1}), it is a valid question whether an analogue of Proposition \ref{prop:le-quant1} could be obtained which allows to drop the hypothesis of $\beta$-continuity of the DOSm at $E \in \R$, possibly resulting in a weaker modulus of continuity in $\epsilon$ on the right hand side of (\ref{eq:le3}). We address this question in Appendix \ref{sec:appendix:nontangLE1} where we conclude that if a quantitative estimate on the boundary-value of the Lyapunov can be achieved with weaker conditions on the DOSm, more information than the upper bound in (\ref{eq:ids-logholder1}) will be necessary.
\end{remark}

\begin{proof}
We first consider the auxiliary functions $f$ defined by
\beq\label{eq:le-aux1}
\eta \in [0, \infty) \mapsto f(\eta) := L_\nu(E + i \eta^{\frac{1}{\beta}}) ~\mbox{.}
\eeq

Note that
\beq\label{eq:poisson-le1}
\frac{d}{d \epsilon} L_\nu(E + i \epsilon) = P_{n_\nu^{(\infty)}} (E + i \epsilon), ~~~ \epsilon > 0, E \in \R ,
\eeq
whence by \eqref{eq:le2}, the function $f$ is continuous for $\eta \geq 0$ and differentiable for $\eta > 0$ with
\beq\label{eq:poisson-le2}
f^\prime (\eta) = \frac{1}{\beta} P_{n_\nu^{(\infty)}} ( E + i \eta^{\frac{1}{\beta}}) \eta^{\frac{1}{\beta}-1} .
\eeq
In particular, for each $\eta > 0$, there exists $0 < \eta_0 < \eta$ so that
\beq\label{eq:poisson-le3}
\left| \frac{L_\nu(E + i \eta^{\frac{1}{\beta}}) - L_\nu(E) }{\eta} \right| = \frac{1}{\beta} P_{n_\nu^{(\infty)}} ( E + i \eta_0^{\frac{1}{\beta}}) \eta_0^{\frac{1}{\beta}-1}.
\eeq
Using the change of variables $\eta = \epsilon^{\beta}$, we thus see that
\beq\label{eq:poisson-le4}
\left| \frac{L_\nu(E + i \epsilon) - L_\nu(E) }{\epsilon^\beta} \right| = \frac{1}{\beta} P_{n_\nu^{(\infty)}} ( E + i \epsilon_0) \epsilon_0^{1-\beta},
\eeq
for some $0 < \epsilon_0 < \epsilon$.
To examine the boundary-value behavior of (\ref{eq:poisson-le4}), we define the function $M_{n^{(\infty)}}^E$ by
$$
M_{n_\nu^{(\infty)}}^E (\delta) : = n_\nu^{(\infty)}([E-\delta, E+\delta]) ~\mbox{, for $\delta > 0$.}
$$
The right hand side of (\ref{eq:poisson-le4}) may be expressed in the form
\bea\label{eq:poisson-le5}
\epsilon_0^{1-\beta}  P_{n_\nu^{(\infty)}} ( E + i \epsilon_0) & = &  \epsilon_0^{1-\beta} \int_0^{\infty} \frac{\epsilon_0}{\delta^2 + \epsilon_0^2} ~dM_{n_\nu^{(\infty)}}^E (\delta)  \nonumber \\
 & \leq & \epsilon_0^{2-\beta} d_\beta \int_0^{\infty} \frac{2\delta^{\beta + 1}}{[\delta^2 + \epsilon_0^2]^2} ~d\delta = \frac{\beta \pi}{2 \sin \left( \frac{\pi \beta}{2} \right)} d_\beta ~.
 \eea
Therefore, combining \eqref{eq:poisson-le4}--\eqref{eq:poisson-le5}, we obtain the claim.
 \end{proof}

\begin{remark} \label{remark:generalizKontani}
\begin{itemize}
\item[(i)] The computation in \eqref{eq:poisson-le5} played an important role in the proof of  \cite[Proposition 3.2]{marx1}.
\item[(ii)] Equality \eqref{eq:poisson-le4} implies
\beq\label{eq:poisson-le6}
\limsup_{\epsilon \rightarrow 0^+}
\left| \frac{L_\nu(E + i \epsilon) - L_\nu(E) }{\epsilon^\beta} \right| = \limsup_{\epsilon \rightarrow 0^+} \frac{1}{\beta} P_{n_\nu^{(\infty)}} ( E + i \epsilon) \epsilon^{1-\beta},
\eeq
which expresses the local continuity properties of the DOSm as a fractional derivative of the Lyapunov exponent.
Result \eqref{eq:poisson-le6} generalizes the starting point of Kotani theory where, for $E \in \R$ with $L_\nu(E) = 0$ and $\beta = 1$, one has
\beq\label{eq:le-kotani1}
\limsup_{\epsilon \rightarrow 0^+} \left( \frac{L_\nu(E + i \epsilon)}{\epsilon} \right) =  \limsup_{\epsilon \rightarrow 0^+} P_{n_\nu^{(\infty)}}(E + i \epsilon) .
\eeq
In this case, the theorem of de la Vall\'ee Poussin guarantees that the right side of \eqref{eq:le-kotani1} is {\em{a priori finite}} for Lebesgue a.e.\ $E \in \R$.
\item[(iii)] From general quantitative results on the continuity of the Lyapunov exponent of random cocycles (for fixed underlying probability measure!), it could be directly extracted that for all energies $E \in \mathbb{R}$ with $L_\nu(E) > 0$, the function $\epsilon \mapsto L_\nu(E + i \epsilon)$ is H\"older continuous in $\epsilon$, see \cite{DuarteKlein_monograph}, Theorem 5.1 therein. In the statement of Proposition \ref{prop:le-quant1}, we however do {\em{not}} assume that $L_\nu(E) > 0$.
\end{itemize}
\end{remark}

Proposition \ref{prop:le-quant1} shows that the continuity of $L_\nu(E + i \epsilon)$ as $\epsilon \rightarrow 0^+$ is determined by the continuity of the DOSm locally at $E$. Since $w^*$-convergence of measures does not, in general, preserve local continuity properties of measures, given Proposition  \ref{prop:le-quant1}, the following hypothesis will be necessary to extrapolate the results of Proposition \ref{prop:LE1} to the real line.
\begin{description}
\item[{[H3]}] Assuming [H1]--[H2] and $d=1$, suppose that $E \in \R$ satisfies the following conditions: there exists a constant $0 < D < \infty$ and an exponent $0 < \beta \leq 1$ such that for all $\epsilon > 0$:
   \bea\label{eq:ids-unif1}
    | N_\alpha (E + \epsilon) - N_\alpha(E - \epsilon) |  & \leq  & D \epsilon^\beta, ~~~ \forall \alpha  \\
     | N (E + \epsilon) - N(E - \epsilon) |  & \leq  & D \epsilon^\beta.
\eea
\end{description}
Under this hypothesis, we can prove the following result.

\begin{theorem}\label{thm:LE-quant1}
Consider the set-up described in [H1]-[H2] for $d=1$. Assume that [H3] holds for a given $E \in \R$ for some constants $\beta , D$ as in (\ref{eq:ids-unif1}), which may depend on $E$. Then, there exists $\alpha_L \in \mathbb{N}$ determined in (\ref{eq:thresholdle}) such that for all $\alpha \geq \alpha_L$,
\beq\label{eq:LE3}
| L_{\nu_\alpha} (E) - L_\nu(E) | \leq C_L \eta_\alpha^{\frac{1}{3}\left( \frac{\beta}{\beta + 1}\right)},
\eeq
where the constant $C_L = C_L(D, \beta)$ is given in (\ref{eq:le-const1}).
\end{theorem}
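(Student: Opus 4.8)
The plan is to upgrade the qualitative argument of Theorem \ref{thm:qualiLE} to a quantitative one by feeding the non-tangential estimate of Proposition \ref{prop:le-quant1} into the successive-approximation scheme (\ref{eq:lesuccesapprox}). The decisive observation is that hypothesis [H3] provides the $\beta$-continuity of the DOSm at $E$ \emph{uniformly} in $\alpha$: by the equivalence of (\ref{eq:beta-cont-m1}) and (\ref{eq:beta-cont-ids1}), the two bounds in (\ref{eq:ids-unif1}) say precisely that $n_{\nu_\alpha}^{(\infty)}$ (for every $\alpha$) and $n_\nu^{(\infty)}$ are all $\beta$-continuous at $E$ with the \emph{same} constant $d_\beta = D$. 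This uniformity is exactly what $w^\star$-convergence alone fails to preserve, and it is what makes the passage to a rate possible.

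First I would fix $E \in \R$ and, for a parameter $\epsilon > 0$ to be chosen, insert the intermediate complex energy $E + i\epsilon$ as in (\ref{eq:lesuccesapprox}):
\begin{align*}
| L_{\nu_\alpha}(E) - L_\nu(E) | & \leq | L_{\nu_\alpha}(E) - L_{\nu_\alpha}(E+i\epsilon) | + | L_{\nu_\alpha}(E+i\epsilon) - L_\nu(E+i\epsilon) | \\
& \quad + | L_\nu(E+i\epsilon) - L_\nu(E) | .
\end{align*}
To the first and third terms I would apply Proposition \ref{prop:le-quant1} with $d_\beta = D$ (legitimate by the uniformity just noted), bounding each by $\frac{\pi}{2\sin(\pi\beta/2)}\, D\, \epsilon^\beta$. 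To the middle term I would apply Proposition \ref{prop:LE1}, which for $0 < \epsilon \leq \epsilon_0$ and $\alpha \geq \alpha_0$ gives $\frac{2\gamma}{\epsilon}\,\eta_\alpha^{\frac{1}{3}}$. Summing the three contributions yields
\begin{equation*}
| L_{\nu_\alpha}(E) - L_\nu(E) | \leq \frac{\pi}{\sin(\pi\beta/2)}\, D\, \epsilon^\beta + \frac{2\gamma}{\epsilon}\,\eta_\alpha^{\frac{1}{3}} ,
\end{equation*}
valid for all $\alpha \geq \alpha_0$ and $0 < \epsilon \leq \epsilon_0$.

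It then remains to optimize in $\epsilon$. Balancing the two competing powers $\epsilon^\beta$ and $\epsilon^{-1}\eta_\alpha^{1/3}$ dictates the choice $\epsilon = \eta_\alpha^{\frac{1}{3(\beta+1)}}$, for which both terms are of order $\eta_\alpha^{\frac{\beta}{3(\beta+1)}} = \eta_\alpha^{\frac{1}{3}\left(\frac{\beta}{\beta+1}\right)}$, exactly the exponent in (\ref{eq:LE3}). Collecting the prefactors then yields the statement with
\begin{equation} \label{eq:le-const1}
C_L := \frac{\pi}{\sin(\pi\beta/2)}\, D + 2\gamma .
\end{equation}
Since $\eta_\alpha \to 0^+$ by [H2], the chosen $\epsilon$ eventually satisfies $\epsilon \leq \epsilon_0$, so it suffices to take
\begin{equation} \label{eq:thresholdle}
\alpha_L := \max\{\alpha_0, \alpha_*\}, \qquad \alpha_* := \min\{ m \in \N : \eta_\alpha \leq \epsilon_0^{3(\beta+1)}\ \text{for all}\ \alpha \geq m \},
\end{equation}
so that the derivation above is valid for every $\alpha \geq \alpha_L$.

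Once the ingredients are assembled the argument is a routine two-parameter balancing, so I do not anticipate a genuine analytic obstacle. The single point demanding care is the uniformity in $\alpha$: because $w^\star$-convergence does not transport the local H\"older modulus of $n_\nu^{(\infty)}$ to the approximants $n_{\nu_\alpha}^{(\infty)}$, the bounds (\ref{eq:ids-unif1}) cannot be deduced from [H1]--[H2] and must be imposed---this is the true content of [H3] and the reason the theorem is conditional. A secondary, purely bookkeeping matter is to keep the factor $\frac{\pi}{2\sin(\pi\beta/2)}$ coming from Proposition \ref{prop:le-quant1} under control while tracking the explicit form of $C_L$, which is harmless for any fixed $\beta \in (0,1]$.
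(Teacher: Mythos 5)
Your proof is correct and follows essentially the same route as the paper: the same insertion of the intermediate complex energy $E+i\epsilon$, the same application of Proposition \ref{prop:LE1} to the middle term and of Proposition \ref{prop:le-quant1} (with the $\alpha$-uniform constant $D$ supplied by [H3]) to the two boundary-value terms, and the same balancing choice $\epsilon = \eta_\alpha^{\frac{1}{3(\beta+1)}}$, which is exactly the paper's optimized $\zeta = \zeta_0 = \frac{1}{3(1+\beta)}$ in $\epsilon = \eta_\alpha^{\zeta}$. The only cosmetic differences are your constant $C_L = \frac{\pi D}{\sin(\pi\beta/2)} + 2\gamma$ versus the paper's $2\max\left\{2\gamma, \frac{\pi D}{\sin\left(\frac{\pi\beta}{2}\right)}\right\}$ (equivalent bounds), and your slightly different but equally valid bookkeeping for the threshold $\alpha_L$ guaranteeing $\epsilon \leq \epsilon_0$.
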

\begin{remark} \label{rem:quantilyap}
As for Theorem \ref{thm:qualiLE} (see remark \ref{rem:qualilyap}, part (ii)), we mention that the result of Theorem \ref{thm:LE-quant1} extends to random Schr\"odinger operators on the strip, in which case the Lyapunov exponent is replaced by the sum of all non-negative Lyapunov exponents; see section \ref{sec:extensions_finiteRange_strip_Strip} for further details.
\end{remark}

\begin{proof}
Take $\alpha_L \in \mathbb{N}$ such that
\begin{align} \label{eq:thresholdle}
\alpha_L \geq \alpha_0 ~\mbox{ and } \eta_{\alpha_L}^{\zeta_0} = \dfrac{1}{\delta(E)^2 + 1} ~\mbox{, }
\end{align}
where
\begin{equation}
\zeta_0:= \frac{1}{3(1 + \beta)} ~\mbox{.}
\end{equation}

Then, combining the Propositions \ref{prop:LE1} and \ref{prop:le-quant1} with [H3], we conclude similar to (\ref{eq:lesuccesapprox}) that for fixed $\zeta_0 \leq \zeta < \frac{1}{3}$ and for all $\alpha \geq \alpha_L$
\bea\label{eq:le-cont2}
| L_\alpha (E) - L(E) | & \leq & \frac{\pi D}{\sin \left( \frac{\pi \beta}{2} \right)} \eta_\alpha^{\zeta \beta} + 2 \gamma \eta_\alpha^{ \frac{1}{3} - \zeta} ~\mbox{.}
\eea
As the right hand side of (\ref{eq:le-cont2}) is optimized for $\zeta = \zeta_0$, we otain that claim with
\beq\label{eq:le-const1}
C_L = 2 \max \left\{ 2 \gamma, \frac{\pi D}{\sin \left( \frac{\pi \beta}{2} \right)} \right\} ~\mbox{.}
\eeq
\end{proof}

\subsubsection{The Lyapunov exponent in the weak disorder limit}
To conclude our discussion in this section, we apply Theorem \ref{thm:LE-quant1} to quantify the $\lambda$-dependence of the Lyapunov exponent in the weak disorder limit for the model given in (\ref{eq:schr-op-low1}). The dependence of the map $\lambda \mapsto L_\lambda(E)$ for fixed $E \in \mathbb{R}$ has been studied in several earlier papers. To provide some context for our discussion, we will briefly summarize some of the available results. For a more detailed account of the known results, we refer the reader to, e.g., \cite{Schulz-Baldes_GAFA_2004, Schulz-Baldes_ OperThyAdvAppl_2007}.

In their monograph \cite{PasturFigotin_book} (Theorem 14.6, therein), Pastur and Figotin use a perturbative argument to prove an asymptotic formula for the $L_\lambda(E)$ near $\lambda = 0$. Specifically, they show that for all $E \in (-2,2) \setminus \{0\}$, one has
\begin{equation} \label{eq:leolambda2}
L_\lambda(E) = c(E) \lambda^2 (1 + \mathcal{O}(\lambda)) ~\mbox{.}
\end{equation}
This result was later generalized to the case of random Schr\"odinger operators on the strip by Schulz-Baldes in \cite{Schulz-Baldes_GAFA_2004}, Theorem 2 therein, in which case a finite set of energies has to be excluded. For Schr\"odinger operators on $\mathbb{Z}$ with strongly mixing potentials, the $\mathcal{O}(\lambda^2)$ dependence was shown by Bourgain and Schlag in \cite{bourgain_schlag_CMP_2000}. We also mention that the $\mathcal{O}(\lambda^2)$-dependence in (\ref{eq:leolambda2}) is expected on physical grounds \cite{KappusWegner_1981, Thouless_PRL_1977}.

Assuming an appropriate decay of the Fourier transform of the single-site probability measure, Speis proved that for all energies $E \in (-2,2)$, the map $\lambda \mapsto L_\lambda(E)$ is continuous near $\lambda = 0$. The latter result was based on the super-symmetric replica method and develops ideas by Campanino and Klein \cite{campanino-klein86}, who, under similar assumptions, had established that $\lambda \mapsto L_\lambda(E)$ is $\mathcal{C}^\infty$ near $\lambda = 0$ for a certain {\em{dense}} set of energies $E \neq 0$ known as the {\em{Kappus-Wegner anomalies}}.

Even though, Theorem \ref{thm:LE-quant1} cannot reproduce the expected $\mathcal{O}(\lambda^2)$ dependence in (\ref{eq:leolambda2}), our method has the advantage that it does not break down at the center of the band $E = 0$ and also has the potential to address the band edges, $E = \pm 2$.

First recall that for the free Laplacian (i.e. $\lambda = 0$ in (\ref{eq:schr-op-low1})) and $d=1$ the spectrum is the closed interval $[-2,2]$ and the Lyapunov exponent satisfies
\begin{equation} \label{eq:lyapfreelaplacian}
L_{\lambda = 0}(E) = \begin{cases} 0  & ~\mbox{, if $E \in [-2,2]$ ,} \\ \log \left\vert  \dfrac{E + \sqrt{E^2 - 4}}{2}    \right\vert > 0 & ~\mbox{, if $E \in \mathbb{C} \setminus [-2,2]$ .} \end{cases}
\end{equation}

The behavior of $\lambda \mapsto L_\lambda(E)$ as $\lambda \to 0^+$ for fixed energies $E \in \mathbb{C} \setminus [-2,2]$ is straight-forward and follows from arguments along the lines of Proposition \ref{prop:LE1}. Moreover, general quantitative results on the continuity of the Lyapunov exponents for a certain large class of random cocycles \cite{DuarteKlein_monograph}, see Theorem 5.1 therein, a-priori imply that the map $\lambda \mapsto L_\lambda(E)$ is H\"older continuous at $\lambda = 0$ at all $E$ where $L_{\lambda = 0}(E) > 0$. By (\ref{eq:lyapfreelaplacian}), this is satisfied for all $E \in \mathbb{C}\setminus [-2,2]$.

We will thus focus on the more interesting situation where $E \in [-2,2]$, which will be handled as an application of Theorem \ref{thm:LE-quant1}. Here, we also mention that while Theorem 5.1 in \cite{DuarteKlein_monograph} does predict the {\em{qualitative}} continuity of the Lyapunov exponent as $\lambda \to 0^+$, i.e.
\begin{equation} \label{eq:layplambdaDuartKlein}
\lim_{\lambda \to 0^+} L_{\lambda}(E) = L_{\lambda =0}(E) ~\mbox{, for each } E \in [-2,2] ~\mbox{,}
\end{equation}
conclusions about the modulus of continuity based on \cite{DuarteKlein_monograph} are not possible since $L_{\lambda = 0}(E) = 0$ for every $E \in [-2,2]$.

In order to apply Theorem \ref{thm:LE-quant1} to the weak-disorder limit for the model described in (\ref{eq:schr-op-low1}), hypothesis [H3] needs to be verified. First, observe that the explicit expression for the DOSf of the free Laplacian in dimension $d=1$ given in (\ref{eq:D0Sm-free1}), limits $\beta$ in [H3] to $\beta = 1$ for $E \in (-2,2)$ where the DOSf for the free 1d-Laplacian is locally smooth, and to $\beta = \frac{1}{2}$ at the edges of the spectrum $E = \pm 2$ (``van Hove-singularity'').

For all $E \in (-2,2)$ we can verify [H3] as an application of \cite{hstoiciu} (see Theorem 2, therein), which establishes the continuity of the DOSf in $\lambda$ for {\em{fixed}} $E \in (-2,2)$ as $\lambda \to 0^+$, subject to the hypothesis that the Fourier transform $\chi(t):= \frac{1}{\sqrt{2 \pi}} \int \mathrm{e}^{-itE} d\mu(E)$ of the single-site measure $\mu$ is smooth and satisfies the decay condition
\begin{equation} \label{eq:decaycharactfun_cond}
\lim_{|t| \rightarrow \infty} \chi^{(j)} (t) = 0 ~\mbox{, for all $j \geq 0$ .}
\end{equation}

In particular, this implies that for each $E \in (-2,2)$ and $0< \lambda_0$ as in (\ref{eq:lambda-zero1}), the quantity $\sup_{0 \leq \lambda \leq \lambda_0} \rho_{\lambda}(E) < + \infty$
exists and is finite, whence the constants in [H3] can be taken to be
\begin{eqnarray} \label{eq:lyapweakinsideconstant}
D := \sup_{0 \leq \lambda \leq \lambda_0} \rho_{\lambda}(E) ~\mbox{, } \beta =1 ~\mbox{.}
\end{eqnarray}

Thus, for all $E \in (-2,2)$, we can apply Theorem \ref{thm:LE-quant1}, which results in:
\begin{theorem} \label{thm:lyapweakcoupling}
Consider the model described in (\ref{eq:schr-op-low1}) with a single-site measure $\mu \in \mathcal{P}([-1,1])$ which satisfies the decay condition in (\ref{eq:decaycharactfun_cond}). Then there exists $\lambda_L > 0$, such that for all $E \in (-2,2)$ and $0 \leq \lambda \leq \lambda_L$, one has
\begin{equation}
0 \leq L_{\lambda}(E) \leq C_L \lambda^{1/6} ~\mbox{.}
\end{equation}
Here, $C_L$ is given in (\ref{eq:le-const1}) with constants $\beta = 1$ and $D$ as in (\ref{eq:lyapweakinsideconstant}) and $\lambda_L$ can be taken as $\lambda_L = \frac{1}{37^3}$.
\end{theorem}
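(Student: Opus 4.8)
The plan is to read off Theorem~\ref{thm:lyapweakcoupling} as a direct application of Theorem~\ref{thm:LE-quant1} to the rescaled family of single-site measures. First I would invoke the rescaling \eqref{eq:couplingrescaled meas}, which recasts the weak-disorder operator \eqref{eq:schr-op-low1} as a member of the class [H1] with single-site measure $\nu_\lambda$, $\mathrm{supp}\,\nu_\lambda\subseteq[-\lambda,\lambda]$, and places us in the setting [H2] with limiting measure the Dirac mass $\delta_0$ at the origin and $\eta_\lambda:=d_w(\nu_\lambda,\delta_0)\le\lambda$. The measure $\delta_0$ corresponds to the free Laplacian, so by \eqref{eq:lyapfreelaplacian} the limiting Lyapunov exponent vanishes, $L_{\lambda=0}(E)=0$ for every $E\in[-2,2]$. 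Since $L_\lambda(E)\ge 0$ by the Thouless formula \eqref{eq:le1}, the theorem will follow once Theorem~\ref{thm:LE-quant1} supplies the one-sided quantitative bound $L_\lambda(E)=|L_{\nu_\lambda}(E)-L_{\delta_0}(E)|\le C_L\,\eta_\lambda^{\frac13(\frac{\beta}{\beta+1})}$.

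The one genuine obstacle is the verification of hypothesis [H3], and this is where the structure of the argument concentrates: because $w^\star$-convergence of measures does \emph{not} preserve local H\"older continuity of the associated IDS, the uniform local regularity of $N_\lambda$ at the fixed energy must be established by external means rather than extracted from the convergence $\nu_\lambda\to\delta_0$. For the limiting measure this is automatic from the explicit free density \eqref{eq:D0Sm-free1}: on $(-2,2)$ the function $\rho_{\lambda=0}^{(1)}$ is bounded and locally smooth away from the van Hove singularities at $\pm2$, so $N_{\lambda=0}$ is locally $1$-H\"older at each $E\in(-2,2)$, giving $\beta=1$. The nontrivial half is the \emph{uniform-in-$\lambda$} bound on $N_\lambda$: here I would invoke the cited result of \cite{hstoiciu} (Theorem~2 therein), which under the Fourier-decay assumption \eqref{eq:decaycharactfun_cond} on $\mu$ yields continuity of $\rho_\lambda(E)$ in $\lambda$ as $\lambda\to0^+$ at fixed $E\in(-2,2)$. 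Compactness of $[0,\lambda_0]$ then makes $D:=\sup_{0\le\lambda\le\lambda_0}\rho_\lambda(E)$ finite, so [H3] holds with the constants \eqref{eq:lyapweakinsideconstant}, namely $\beta=1$ and this $D$. This is precisely the point at which the decay hypothesis on the single-site measure enters and cannot be dispensed with in the present approach.

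With [H3] in hand, Theorem~\ref{thm:LE-quant1} applies with $\beta=1$, so the exponent is $\tfrac13\cdot\tfrac{\beta}{\beta+1}=\tfrac16$, and using $\eta_\lambda\le\lambda$ gives
\[
0\le L_\lambda(E)\le C_L\,\eta_\lambda^{1/6}\le C_L\,\lambda^{1/6},
\]
with $C_L=C_L(D,\beta)$ as in \eqref{eq:le-const1}. The last step is to pin down the threshold $\lambda_L$ from \eqref{eq:thresholdle}. Taking $\zeta=\zeta_0=\tfrac16$ requires the admissibility condition $\eta_\lambda^{\zeta_0}\le\epsilon_0(E)=(\delta(E)^2+1)^{-1/2}$ of Proposition~\ref{prop:LE1}. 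For the rescaled model $r=2d+\lambda_0=\tfrac{17}{8}$, so for every $E\in(-2,2)$ one has $\delta(E)=|E|+r<2+\tfrac{17}{8}<6$, whence $\delta(E)^2+1<37$ uniformly in $E$ and $\epsilon_0(E)\ge 1/\sqrt{37}$. It therefore suffices that $\lambda^{1/6}\le 1/\sqrt{37}$, i.e.\ $\lambda\le 37^{-3}$, so one may take $\lambda_L=\tfrac{1}{37^3}$, uniformly over $E\in(-2,2)$. Collecting these estimates establishes the claim; I expect the bulk of the remaining work to be purely bookkeeping, the conceptual content residing entirely in the uniform regularity input of the previous paragraph.
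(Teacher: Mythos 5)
Your proposal is correct and follows essentially the same route as the paper: rescale to $\nu_\lambda \stackrel{w^\star}{\to} \delta_0$ with $\eta_\lambda \le \lambda$, verify [H3] with $\beta = 1$ and $D = \sup_{0\le\lambda\le\lambda_0}\rho_\lambda(E)$ via the cited result of \cite{hstoiciu} under the Fourier-decay hypothesis (\ref{eq:decaycharactfun_cond}), and then apply Theorem \ref{thm:LE-quant1} to get the exponent $\tfrac13\cdot\tfrac{\beta}{\beta+1}=\tfrac16$. Your derivation of the threshold $\lambda_L = 37^{-3}$ from the admissibility condition $\lambda^{1/6}\le \epsilon_0 \ge 1/\sqrt{37}$ of Proposition \ref{prop:LE1} also matches the paper's intended bookkeeping (the paper reaches $\delta(E)^2+1\le 37$ via $\delta(E)\le 2r$ with $r=3$, you via $|E|+r<6$ with $r=\tfrac{17}{8}$ — both give the same constant).
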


Finally, we mention that in principle, Theorem \ref{thm:LE-quant1} has also potential to yield a result for the band-edges $E = \pm 2$, {\em{provided one can show}} that for some $\lambda_1 > 0$,
\begin{equation} \label{eq:lebandedge}
\sup_{0 \leq \lambda \leq \lambda_1} \dfrac{ N_\lambda(E+\epsilon) - N_\lambda(E-\epsilon) }{\epsilon^\frac{1}{2}} ~\mbox{,}
\end{equation}
which would imply that [H3] is satisfied.
\begin{theorem}
Consider the model described in (\ref{eq:schr-op-low1}) and suppose one can show that (\ref{eq:lebandedge}) holds for $E = \pm 2$. Then there exists $\lambda_L^{(b)} > 0$, such that for all $0 \leq \lambda \leq \lambda_L^{(b)}$,
\begin{equation}
0 \leq L_{\lambda}(E) \leq C_L^{(b)} \lambda^{1/6} ~\mbox{.}
\end{equation}
Here, $C_L^{(b)}$ is given in (\ref{eq:le-const1}) with constants $\beta = 1/2$ and $D$ given by (\ref{eq:lebandedge}) and $\lambda_L$ can be taken as $\lambda_L = \min\{\frac{1}{37^{9/2}}, \lambda_1\}$ where $\lambda_1$ is so that (\ref{eq:lebandedge}) holds.
\end{theorem}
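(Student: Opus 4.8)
The plan is to obtain this statement as a \emph{conditional} corollary of Theorem~\ref{thm:LE-quant1}, in exactly the way that Theorem~\ref{thm:lyapweakcoupling} handles the interior energies; the only structural change is that the van~Hove singularity of the free density of states at $E=\pm2$ (the square-root blow-up of $\rho_{\lambda=0}^{(1)}$ in~(\ref{eq:D0Sm-free1})) forces the local H\"older exponent to be $\beta=\tfrac12$ rather than $\beta=1$. First I would set up the weak-disorder rescaling of Section~\ref{subsec:dosm-low1}: at disorder $\lambda$ the model~(\ref{eq:schr-op-low1}) is the operator~(\ref{eq:schr-op2}) with rescaled single-site measure $\nu_\lambda$ supported in $[-\lambda,\lambda]$ and $d_w(\nu_\lambda,\nu)\le\lambda$, where $\nu$ is the point mass at the origin, so that $\eta_\lambda\le\lambda\to0^+$ realizes the setting~[H2] with limiting measure $\nu$. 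Since $E=\pm2\in[-2,2]$, (\ref{eq:lyapfreelaplacian}) gives $L_{\lambda=0}(\pm2)=0$, whence the desired bound is simply $|L_\lambda(E)-L_{\lambda=0}(E)|\le C_L^{(b)}\lambda^{1/6}$, the conclusion of Theorem~\ref{thm:LE-quant1} once its hypothesis [H3] is in force.

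The remaining bookkeeping is to verify [H3] at $E=\pm2$ with $\beta=\tfrac12$. The $\lambda=0$ half of~(\ref{eq:ids-unif1}) follows from the free-Laplacian estimate~(\ref{eq:IDS-free1}) with $\delta=\tfrac12$, via $|N_{\lambda=0}(E+\epsilon)-N_{\lambda=0}(E-\epsilon)|\le 2c_0\epsilon^{1/2}$. The uniform-in-$\lambda$ half of~(\ref{eq:ids-unif1}) is precisely the \emph{assumed} finiteness of~(\ref{eq:lebandedge}): taking $D$ equal to that supremum over $0\le\lambda\le\lambda_1$ yields $|N_\lambda(E+\epsilon)-N_\lambda(E-\epsilon)|\le D\epsilon^{1/2}$ for all such $\lambda$. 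With both halves of [H3] in hand for $\lambda\le\lambda_1$, Theorem~\ref{thm:LE-quant1} applies with $\beta=\tfrac12$ and this $D$, producing the constant $C_L^{(b)}$ of~(\ref{eq:le-const1}) and the estimate $|L_\lambda(E)-L_{\lambda=0}(E)|\le C_L^{(b)}\,\eta_\lambda^{\frac13\frac{\beta}{1+\beta}}$; since $\eta_\lambda\le\lambda$ this delivers the asserted weak-disorder H\"older bound. For the threshold I would track the quantity $\delta(E)=\max_\pm|E\pm r|$ through the proof of Theorem~\ref{thm:LE-quant1}: using the uniform bound $\delta(E)\le 2r\le 6$ (from the remark following Proposition~\ref{prop:LE1}, with $r=2d+C\le 3$ for $d=1$ and $\lambda\le1$), so that $\delta(E)^2+1\le 37$, the threshold condition $\eta_{\lambda_L}^{\zeta_0}=(\delta(E)^2+1)^{-1}$ with $\zeta_0=\tfrac1{3(1+\beta)}=\tfrac29$ at $\beta=\tfrac12$ gives $\lambda_L=37^{-9/2}$; intersecting with the constraint $\lambda\le\lambda_1$ needed for~(\ref{eq:lebandedge}) yields $\lambda_L^{(b)}=\min\{37^{-9/2},\lambda_1\}$.

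I expect the genuinely hard part to lie \emph{entirely inside the hypothesis}, not inside the corollary: everything above is routine once~(\ref{eq:lebandedge}) is granted, whereas actually \emph{establishing} that the IDS is uniformly $\tfrac12$-H\"older at the band edge across the whole window $[0,\lambda_1]$ is the real obstacle. The difficulty is structural: weak-$^*$ convergence $\nu_\lambda\to\nu$ does not transport local regularity of the free density of states to the perturbed measures, and at the interior energies this gap was closed by invoking~\cite{hstoiciu} to control $\sup_\lambda\rho_\lambda(E)$ --- a tool unavailable at the van~Hove singularities $E=\pm2$, where one must instead bound the weaker quantity~(\ref{eq:lebandedge}) directly. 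For this reason I would present the result exactly as a conditional theorem, with~(\ref{eq:lebandedge}) flagged as the hypothesis that remains to be proved.
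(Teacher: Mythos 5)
Your approach is exactly the one the paper intends: the band-edge result is a conditional corollary of Theorem \ref{thm:LE-quant1}, with hypothesis [H3] at $E=\pm 2$ supplied by the free-Laplacian estimate (\ref{eq:IDS-free1}) with $\delta=\tfrac12$ (the van Hove exponent) for the limiting measure, and by the assumed finiteness of (\ref{eq:lebandedge}) for the $\lambda$-dependent half; the threshold $\lambda_L^{(b)}=\min\{37^{-9/2},\lambda_1\}$ comes out of (\ref{eq:thresholdle}) with $\zeta_0=\tfrac{1}{3(1+\beta)}=\tfrac29$ exactly as you compute. The paper gives no separate proof of this theorem, so your write-up is, if anything, more explicit than the source.

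One step does not hold up as written, however. Theorem \ref{thm:LE-quant1} with $\beta=\tfrac12$ yields the exponent $\tfrac13\cdot\tfrac{\beta}{\beta+1}=\tfrac19$, so the chain of estimates you set up produces $0\le L_\lambda(E)\le C_L^{(b)}\lambda^{1/9}$, not $\lambda^{1/6}$. Since $\lambda^{1/9}>\lambda^{1/6}$ for $0<\lambda<1$, the bound you actually derive is strictly weaker than the one asserted, and your closing claim that it ``delivers the asserted weak-disorder H\"older bound'' is a silent non sequitur. The discrepancy originates in the paper's own statement: the constants there ($\beta=\tfrac12$, threshold $37^{-9/2}$) are consistent only with the exponent $\tfrac19$, while the displayed $\lambda^{1/6}$ is evidently carried over from the interior case $\beta=1$ of Theorem \ref{thm:lyapweakcoupling}. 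So your method is the right one, but you should either state and prove the theorem with exponent $\tfrac19$, or explicitly flag that $\lambda^{1/6}$ is unobtainable from Theorem \ref{thm:LE-quant1} with $\beta=\tfrac12$, rather than passing from $\eta_\lambda^{\frac13\frac{\beta}{1+\beta}}$ to the claimed bound as if they agreed.
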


%%%%%%%%%%%%%%%%%%%%%%%%%%%%%%%%%%%%%%%%%%%%%%%%%%%%%%%%%%%%%%%%%%%%%%%%%%%%%%%%%%%%%%%%%%%%%%
\subsection{Continuity of the density of states function in the probability distribution}\label{subsec:dos-low1}

In this final application, we examine the potential implications of Theorem \ref{thm:IDS-quant1} for the DOSf. For this, we will assume:
\begin{description}
\item [{[H4]}] Both the DOSm $n_\alpha^{(\infty)}$, for all $\alpha$, and $n^{(\infty)}$ are absolutely continuous. This implies the existence of density of states functions $\rho_\alpha(E)$ and $\rho(E)$ so that
\beq\label{eq:dos1a}
d n_\alpha^{(\infty)}(E) =: \rho_\alpha (E) ~dE, \forall \alpha \in \N
\eeq
and
\beq\label{eq:dos1b}
d n^{(\infty)}(E) =: \rho (E) ~dE .
\eeq
\end{description}

Given the assumptions [H1]--[H2], the DOSf $\rho_\alpha (E)$ and $\rho (E)$ are trivially of compact support, supported in $[-r,r]$. We recall that since $w^\star$-limits do not in general preserve the components of the Lebesgue decomposition of a sequence of measures, it will be necessary for us to assume absolute continuity of the limit $n^{(\infty)}$ even if the elements of the sequence $n_\alpha^{(\infty)}$ are absolutely continuous.

Following, denote by $\hat{\rho_\alpha}$ and $\hat{\rho}$ the Fourier transform of, respectively, $\rho_\alpha(E)$ and $\rho(E)$. Applying Theorem \ref{thm:quant1} for $f(E) = \frac{1}{\sqrt{2 \pi}} e^{-itE}$, we obtain for $\alpha \geq \alpha_0$ and $t \in \R$:
\beq\label{eq:dos-ft1}
| \hat{\rho}_\alpha(t) - \hat{\rho}(t) | \leq \gamma (|t| + 1) \eta_\alpha^{\frac{1}{1 + 2 d}} ~.
\eeq
We observe that, upon replacing the Fourier transforms of the DOSf by the Fourier transforms of the DOSm, (\ref{eq:dos-ft1}) holds even without assuming absolute continuity of the DOSm as in [H4].

Since the right side of \eqref{eq:dos-ft1} is linear in $t$, we need to impose decay conditions of $\hat{\rho}_\alpha$ and $\hat{\rho}$ in order to be able to take the inverse Fourier transform. We make the following assumption.
\begin{description}
\item[{[H5]}] We assume [H1]--[H2] and [H4]. In addition, we suppose that there exists a constant $0 < D_1 < \infty$ and $\epsilon > 0$ such that both of the following holds for all $t \in \R$:
    \bea
    | \hat{\rho}_\alpha (t) | & \leq  & \frac{D_1}{|t|^{1 + \epsilon}} , ~~~ \forall \alpha \in \N   \label{eq:ft-dos-decay1} \\
    | \hat{\rho} (t) | & \leq  & \frac{D_1}{|t|^{1 + \epsilon}} . \label{eq:ft-dos-decay2}
    \eea
\end{description}
Note that the constants are independent of $\alpha$.

\begin{theorem}\label{thm:dos1}
Consider the set-up described in [H1]--[H2] and assume that [H4]--[H5] hold. Then, for all $\alpha \geq \alpha_0$ with $\alpha_0$ as in (\ref{eq:weak-conv-dos3}) and $E \in \R$, we have:
\beq\label{eq:dos2}
|{\rho}_\alpha(E) - {\rho}(E) | \leq C \eta_\alpha^{\frac{\epsilon}{2 + \epsilon}\frac{1}{1 + 2 d}} ~\mbox{,}
\eeq
where the constant $C = C(D_1, \epsilon, d)$ is given in \eqref{eq:dos-cnst1}.
\end{theorem}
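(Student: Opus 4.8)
The plan is to reconstruct the pointwise difference $\rho_\alpha(E) - \rho(E)$ from its Fourier data by inversion, and then to balance the two estimates already at our disposal: the bound \eqref{eq:dos-ft1}, which is linear in $t$ but valid for all frequencies, against the polynomial decay \eqref{eq:ft-dos-decay1}--\eqref{eq:ft-dos-decay2} supplied by [H5]. The mechanism is a frequency cutoff at some $T>1$, applying \eqref{eq:dos-ft1} on $|t|\le T$ and the decay on $|t|>T$, followed by optimization in $T$.

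First I would note that [H5] forces $\hat\rho_\alpha,\hat\rho\in L^1(\R)$, so Fourier inversion is legitimate and produces continuous representatives of $\rho_\alpha$ and $\rho$; the pointwise estimate \eqref{eq:dos2} is then asserted for these representatives. With the symmetric normalization implicit in \eqref{eq:dos-ft1}, inversion gives, for every $E\in\R$,
\[
\rho_\alpha(E)-\rho(E)=\frac{1}{\sqrt{2\pi}}\int_\R e^{itE}\bigl(\hat\rho_\alpha(t)-\hat\rho(t)\bigr)\,dt,
\]
and hence
\[
|\rho_\alpha(E)-\rho(E)|\le\frac{1}{\sqrt{2\pi}}\int_\R|\hat\rho_\alpha(t)-\hat\rho(t)|\,dt.
\]

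Next I would split the last integral at $|t|=T$. On $|t|\le T$ the bound \eqref{eq:dos-ft1} yields $\int_{|t|\le T}\gamma(|t|+1)\eta_\alpha^{1/(1+2d)}\,dt\le 3\gamma\,T^2\,\eta_\alpha^{1/(1+2d)}$ (using $T>1$), while on $|t|>T$ the triangle inequality together with \eqref{eq:ft-dos-decay1}--\eqref{eq:ft-dos-decay2} gives $\int_{|t|>T}|\hat\rho_\alpha-\hat\rho|\,dt\le\int_{|t|>T}2D_1|t|^{-1-\epsilon}\,dt=\tfrac{4D_1}{\epsilon}\,T^{-\epsilon}$. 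Collecting these,
\[
|\rho_\alpha(E)-\rho(E)|\le\frac{1}{\sqrt{2\pi}}\Bigl(3\gamma\,T^2\,\eta_\alpha^{\frac{1}{1+2d}}+\frac{4D_1}{\epsilon}\,T^{-\epsilon}\Bigr).
\]

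Finally I would optimize over $T$: balancing $T^2\eta_\alpha^{1/(1+2d)}$ against $T^{-\epsilon}$ forces $T^{2+\epsilon}\sim\eta_\alpha^{-1/(1+2d)}$, i.e. $T=\eta_\alpha^{-\frac{1}{(2+\epsilon)(1+2d)}}$, at which value both terms become proportional to $\eta_\alpha^{\frac{\epsilon}{2+\epsilon}\frac{1}{1+2d}}$---exactly the exponent in \eqref{eq:dos2}. Tracking the prefactors $3\gamma$, $4D_1/\epsilon$, and $1/\sqrt{2\pi}$ through this substitution produces the constant $C=C(D_1,\epsilon,d)$. The whole argument is a routine split-and-optimize; the only genuine content---and the main, if modest, obstacle---is the justification of the inversion with continuous representatives, which is precisely why the integrability hypothesis [H5] is imposed and why its decay exponent $\epsilon$ reappears, damped to $\epsilon/(2+\epsilon)$, in the final H\"older exponent.
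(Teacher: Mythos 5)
Your proposal is correct and follows essentially the same route as the paper's proof: Fourier inversion (justified by the $L^1$ decay in [H5]), a frequency cutoff with the bound \eqref{eq:dos-ft1} on low frequencies and the [H5] decay on high frequencies, and optimization of the cutoff $A=\eta_\alpha^{-\xi}$, yielding $\xi=\tfrac{1}{(2+\epsilon)(1+2d)}$ and the exponent $\tfrac{\epsilon}{2+\epsilon}\tfrac{1}{1+2d}$. The only differences are immaterial bookkeeping in the prefactors (the paper records $C=4\sqrt{2/\pi}\max\{\gamma;D_1/\epsilon\}$), so there is nothing to correct.
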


\begin{proof}
Let $\alpha \geq \alpha_0$ and $1 \leq A$ to be determined. Using \eqref{eq:dos-ft1} and [H5], the inverse Fourier transform gives
\bea\label{eq:dos3}
| \rho_\alpha(E) - \rho(E) | & \leq & \int_{-A}^A | \hat{\rho}_\alpha(t) - \hat{\rho}(t) | \frac{dt}{\sqrt{2 \pi}} +
\int_{|t| \geq A} | \hat{\rho}_\alpha(t) - \hat{\rho}(t) | \frac{dt}{\sqrt{2 \pi}} \nonumber \\
 & \leq & \frac{4 \gamma}{\sqrt{2 \pi}} A^2 \eta_\alpha^{\frac{1}{1 + 2 d}} +  \frac{4 D_1}{\epsilon \sqrt{2 \pi}} A^{- \epsilon} .
\eea
Taking $A = \eta_\alpha^{- \xi}$, for some $0 < \xi$ and optimizing in $\xi$, we obtain
\beq\label{eq:dos4}
| \rho_\alpha(E) - \rho(E) | \leq 4 \sqrt{ \frac{2}{\pi} } \max \{ \gamma ; \frac{D_1}{\epsilon} \} \eta_\alpha^{\frac{\epsilon}{2 + \epsilon}\frac{1}{1 + 2 d}},
\eeq
which determines the constant $C$ in \eqref{eq:dos2} as
\beq\label{eq:dos-cnst1}
C = C(D_1, \epsilon, d) = 4 \sqrt{ \frac{2}{\pi} } \max \{ \gamma ; \frac{D_1}{\epsilon} \} ~.
\eeq
\end{proof}
We note that the estimate \eqref{eq:dos2} is uniform in $E$ as a result of the non-locality of the Fourier transform.

\subsubsection{The density of states function in the weak-disorder regime}
To conclude this section, we comment on the application of Theorem \ref{thm:dos1} to capture the behavior of the DOSf in the weak disorder regime. Considering the model described in (\ref{eq:schr-op-low1}), application of Theorem \ref{thm:dos1} amounts to taking $\eta_\alpha = \lambda \to 0^+$, {\em{provided that one can show that the hypotheses [H4]--[H5] hold}}.

To start, we note that for any $d \in \mathbb{N}$ and $\lambda > 0$, [H4] always holds (for $\eta_\alpha = \lambda$) if the single-site measure $\mu$ underlying the model in (\ref{eq:schr-op-low1}) is AC with bounded density, i.e. if
\begin{equation} \label{eq:boundeddensity}
d \mu(x) = h(x) d x ~\mbox{, for some } 0 \leq h \in L^1\cap L^\infty([-1,1]) ~\mbox{with } \Vert h \Vert_1 = 1 ~\mbox{.}
\end{equation}
Indeed by the Wegner estimate, (\ref{eq:boundeddensity}) implies that $\rho_\lambda$ exists for all $\lambda > 0$ as a function in $L^1 \cap L^\infty([-\lambda,\lambda])$ (in particular, $\rho_\lambda(E)$ is defined for Lebesgue a.e. $E \in [-\lambda, \lambda]$) and satisfies
\begin{equation} \label{eq:wegner}
\Vert \rho_\lambda \Vert_\infty \leq \frac{N \Vert h \Vert_\infty}{\lambda} ~\mbox{.}
\end{equation}

In view of [H5], we also observe that (\ref{eq:ft-dos-decay2}) automatically holds for the free Laplacian for dimensions $d \geq 3$:
\begin{prop} \label{prop:freelapl_dosf}
For the free Laplacian $H = \Delta$ on $\ell^2(\mathbb{Z}^d)$, one has for every $d \in \mathbb{N}$
\begin{equation} \label{eq:decayFourierFreeLapl}
\vert \hat{\rho}_{\lambda = 0}^{(d)}(t) \vert \leq \dfrac{2^{-d} \pi^{-d/2}}{\vert t \vert^{d/2}} ~\mbox{.}
\end{equation}
Here, $\hat{\rho}_{\lambda = 0}^{(d)}$ is the Fourier-transform of the DOSf for the free Laplacian as given in (\ref{eq:D0Sm-free1}) and the exponent in the decay on the right-hand side of (\ref{eq:decayFourierFreeLapl}) is sharp.
\end{prop}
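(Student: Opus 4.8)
The plan is to reduce the statement to the one-dimensional computation by exploiting the convolution structure in (\ref{eq:D0Sm-free1}). Since $\rho_{\lambda = 0}^{(d)}$ is the $d$-fold convolution of $\rho_{\lambda = 0}^{(1)}$, its Fourier transform factorizes into a product of single-site transforms, so that $\hat{\rho}_{\lambda = 0}^{(d)}(t) = \bigl(\hat{\rho}_{\lambda = 0}^{(1)}(t)\bigr)^d$. It therefore suffices to compute and estimate the single factor $\hat{\rho}_{\lambda = 0}^{(1)}(t)$ and then raise the bound to the $d$-th power.

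First I would evaluate $\hat{\rho}_{\lambda = 0}^{(1)}$ explicitly. Starting from $\hat{\rho}_{\lambda = 0}^{(1)}(t) = \frac{1}{\sqrt{2\pi}}\int_{-2}^{2} e^{-itE}\,\frac{1}{2\pi}\,(1-(E/2)^2)^{-1/2}\,dE$ and substituting $E = 2\cos\phi$, the inner factor $\frac{1}{2\pi}\int_{-2}^{2} e^{-itE}(1-(E/2)^2)^{-1/2}\,dE$ reduces to $\frac{1}{\pi}\int_0^{\pi} e^{-2it\cos\phi}\,d\phi$, which is exactly the standard integral representation of the Bessel function $J_0$. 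Hence $\hat{\rho}_{\lambda = 0}^{(1)}(t) = \frac{1}{\sqrt{2\pi}}\,J_0(2t)$, and therefore
\[
\hat{\rho}_{\lambda = 0}^{(d)}(t) = (2\pi)^{-d/2}\,J_0(2t)^d .
\]
The decay rate and the explicit constant now follow from the classical envelope bound $|J_0(x)| \le \sqrt{2/(\pi x)}$, valid for all $x > 0$. Substituting $x = 2t$ and taking $d$-th powers gives $|\hat{\rho}_{\lambda = 0}^{(d)}(t)| \le (2\pi)^{-d/2}(\pi|t|)^{-d/2} = 2^{-d/2}\pi^{-d}|t|^{-d/2}$, and since $(2/\pi)^{d/2} < 1$ this is majorized by the claimed $2^{-d}\pi^{-d/2}|t|^{-d/2}$; the factor $(2/\pi)^{d/2}$ of slack is precisely what makes the stated (non-optimal) constant comfortable. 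For small $|t|$, where the right-hand side of (\ref{eq:decayFourierFreeLapl}) is large, the estimate is trivial from $|J_0|\le 1$, so the inequality holds for every $t \neq 0$ (and the bound at $t=0$ is vacuous).

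For the sharpness of the exponent I would use the refined asymptotic $J_0(x) = \sqrt{2/(\pi x)}\bigl(\cos(x-\pi/4) + O(x^{-1})\bigr)$. Choosing $t_n \to \infty$ with $2t_n - \pi/4 \in 2\pi\Z$ forces $\cos(2t_n - \pi/4) = 1$, whence $|t_n|^{d/2}\,|\hat{\rho}_{\lambda = 0}^{(d)}(t_n)|$ tends to a strictly positive limit. This shows no bound of the form $C|t|^{-s}$ with $s > d/2$ can hold, so the exponent $d/2$ is optimal.

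The genuinely delicate points are, first, justifying the \emph{uniform} envelope bound $|J_0(x)| \le \sqrt{2/(\pi x)}$ on all of $(0,\infty)$ rather than merely in the asymptotic regime (one verifies that $\sqrt{x}\,|J_0(x)|$ never exceeds $\sqrt{2/\pi}$, e.g.\ from the controlled-remainder form of the asymptotic expansion together with a check on the finitely many initial extrema); and second, keeping the several $2\pi$-normalization constants consistent between the Fourier convention $f(E) = \frac{1}{\sqrt{2\pi}}e^{-itE}$ used to define $\hat{\rho}$ and the convolution theorem, so that the product representation $\hat{\rho}_{\lambda = 0}^{(d)} = (\hat{\rho}_{\lambda = 0}^{(1)})^d$ is applied with the correct prefactor. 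Once these are pinned down, the remaining estimates are entirely routine.
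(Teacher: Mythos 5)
Your route is the same as the paper's: compute $\hat{\rho}_{\lambda=0}^{(1)}(t)=\frac{1}{\sqrt{2\pi}}J_0(2t)$ via the substitution $E=2\cos\phi$, factorize the $d$-fold convolution in (\ref{eq:D0Sm-free1}) on the Fourier side, and bound $J_0$. You add two genuine improvements: you use the sharp envelope bound $|J_0(x)|\le\sqrt{2/(\pi x)}$ where the paper uses the cruder $|J_0(y)|\le|y|^{-1/2}$, and — more importantly — you actually prove sharpness of the exponent $d/2$ by evaluating along a sequence of cosine peaks of the Bessel asymptotics. The paper asserts sharpness but gives no argument, so this part of your proposal supplies something the paper omits.

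There is, however, a gap, and it sits exactly at the point you flagged as delicate and then did not resolve. With the paper's convention $\hat{f}(t)=\frac{1}{\sqrt{2\pi}}\int f(E)\,e^{-itE}\,dE$ (the one used in (\ref{eq:dos-ft1})), the convolution theorem reads $\widehat{f*g}=\sqrt{2\pi}\,\hat{f}\,\hat{g}$, so the $d$-fold convolution gives
\begin{equation*}
\hat{\rho}_{\lambda=0}^{(d)}(t)=(\sqrt{2\pi})^{\,d-1}\bigl(\hat{\rho}_{\lambda=0}^{(1)}(t)\bigr)^{d}=\frac{1}{\sqrt{2\pi}}\,J_0(2t)^{d},
\end{equation*}
not $\bigl(\hat{\rho}_{\lambda=0}^{(1)}(t)\bigr)^{d}$ as you write — equivalently, the characteristic function of a sum of $d$ iid variables is the $d$-th power of the characteristic function, but $\hat{\rho}$ carries a single overall factor $\frac{1}{\sqrt{2\pi}}$, not $d$ of them. (To be fair, the paper's own (\ref{eq:dosffouriertr:lapl}) makes the identical slip, so your proof is ``as correct as'' the printed one.) This does not affect the decay exponent $d/2$, so the essential content and your sharpness argument survive, but it does change the constant: with the corrected prefactor your envelope bound yields $|\hat{\rho}_{\lambda=0}^{(d)}(t)|\le\frac{1}{\sqrt{2\pi}}(\pi|t|)^{-d/2}$, and since $\frac{1}{\sqrt{2\pi}}>2^{-d}$ for $d\ge 2$, the stated constant $2^{-d}\pi^{-d/2}$ in (\ref{eq:decayFourierFreeLapl}) is not recovered. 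Indeed, your own peak-sequence argument then gives $|t_n|^{d/2}\,|\hat{\rho}_{\lambda=0}^{(d)}(t_n)|\to\frac{1}{\sqrt{2\pi}}\pi^{-d/2}$, which exceeds $2^{-d}\pi^{-d/2}$ for $d\ge 2$, so once the normalization is fixed the inequality with that particular constant actually fails; the clean correct statement is $|\hat{\rho}_{\lambda=0}^{(d)}(t)|\le\frac{1}{\sqrt{2\pi}}(\pi|t|)^{-d/2}$. This is harmless for the applications (only the rate $|t|^{-d/2}$ is used downstream), but since you explicitly identified the $2\pi$-bookkeeping as one of the two delicate points of the proof, you should carry it through correctly rather than reproduce the unadjusted product formula.
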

\begin{proof}
Using the explicit expression of the DOSf in (\ref{eq:D0Sm-free1}), we obtain for $d=1$
\begin{equation}
\hat{\rho}_{\lambda = 0}^{(1)}(t) = \dfrac{1}{\sqrt{2 \pi}} J_0(2 t) ~\mbox{,}
\end{equation}
where $J_0$ is the zeroth-order Bessel function of the first kind. In particular, for arbitrary $d \in \mathbb{N}$, (\ref{eq:D0Sm-free1}) implies that
\begin{equation} \label{eq:dosffouriertr:lapl}
\hat{\rho}_{\lambda = 0}^{(d)}(t) = \left[ \dfrac{1}{\sqrt{2 \pi}} J_0(2 t) \right]^d ~\mbox{.}
\end{equation}

The decay estimate claimed in (\ref{eq:decayFourierFreeLapl}) thus follows using that $\vert J_0(y) \vert \leq \vert y \vert^{-1/2}$ for all $y \in \mathbb{R}$.
\end{proof}

In summary, for an AC single-site measure $\mu$ satisfying (\ref{eq:boundeddensity}), Theorem \ref{thm:dos1} would immediately produce the following corollary if we assume that (\ref{eq:ft-dos-decay1}) in [H5] holds for $d \geq 3$:
\begin{theorem} \label{thm:contiDOSfweakdisorder}
Consider the model described in (\ref{eq:schr-op-low1}) with an AC single-site measure $\mu$ satisfying (\ref{eq:boundeddensity}) and $d \geq 3$. Assume that (\ref{eq:ft-dos-decay1}) in [H5] holds for $0 < \eta_\alpha = \lambda \leq \widetilde{\lambda_0}$, for some $0<\widetilde{\lambda_0}$ and $0 < D_1, \epsilon$. Then, there exists $\lambda_1 > 0$, such that $\rho_\lambda$ is continuous in $L^\infty$-norm as $\lambda \to 0^+$ and satisfies:
\begin{equation} \label{eq:contiDOSfweakdisorder}
\Vert \rho_\lambda - \rho^{(d)}_{\lambda =0} \Vert_\infty \leq C^\prime \lambda^\delta ~\mbox{, for all } 0 \leq \lambda \leq \lambda_1 ~\mbox{.}
\end{equation}
Here, $\lambda_1 = \min\{ \widetilde{\lambda_0} , \lambda_0\}$ with $\lambda_0$ given in (\ref{eq:lambda-zero1}), $\delta > 0$ is determined by
\begin{equation}
\delta = \left( \dfrac{\epsilon^\prime}{2 + \epsilon^\prime} \right) \left(\dfrac{1}{1 + 2d} \right) ~\mbox{, with } \epsilon^\prime = \min \left\{ \epsilon ~;~ \frac{d}{2} - 1 \right\} ~\mbox{,}
\end{equation}
and
\begin{equation}
C^\prime = 4 \sqrt{ \frac{2}{\pi} } \max \{ \gamma, \frac{D_1^\prime}{\epsilon^\prime} \} ~\mbox{, where }  D_1^\prime = \max\{ D_1 ~;~ 2^{-d} \pi^{-d/2} \} ~\mbox{.}
\end{equation}
\end{theorem}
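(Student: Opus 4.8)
The plan is to read off Theorem~\ref{thm:contiDOSfweakdisorder} as a direct application of Theorem~\ref{thm:dos1} to the rescaled weak-disorder model, with $\eta_\alpha = \lambda \to 0^+$, once the hypotheses [H4]--[H5] are verified with a single common pair of Fourier-decay constants. Recall from section~\ref{subsec:dosm-low1} that the model (\ref{eq:schr-op-low1}) is equivalent, after rescaling the random variables, to (\ref{eq:schr-op2}) with single-site measure $\nu_\lambda$ from (\ref{eq:couplingrescaled meas}), that $d_w(\nu_\lambda,\delta_0)\leq \lambda$, and that the limit $\lambda\to 0^+$ is the $w^\star$-limit $\nu_\lambda \stackrel{w^\star}{\rightarrow}\delta_0$, whose associated DOSm is that of the free Laplacian. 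Thus all I must do is check that the hypotheses of Theorem~\ref{thm:dos1} are met for $\eta_\alpha=\lambda$ and that the resulting exponent and constant coincide with the stated $\delta$ and $C'$.

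First I would verify [H4]. On the sequence side, since $\mu$ is absolutely continuous with bounded density as in (\ref{eq:boundeddensity}), the Wegner estimate (\ref{eq:wegner}) guarantees that for every $\lambda>0$ the DOSm $n_\lambda^{(\infty)}$ is absolutely continuous with density $\rho_\lambda\in L^1\cap L^\infty$. On the limit side, the DOSm of the free Laplacian is absolutely continuous with density $\rho^{(d)}_{\lambda=0}$ given explicitly in (\ref{eq:D0Sm-free1}). Hence [H4] holds with $\eta_\alpha=\lambda$.

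Next I would verify [H5] with a common pair of constants. The decay (\ref{eq:ft-dos-decay1}) for $\hat{\rho}_\lambda$ is hypothesized with constants $D_1,\epsilon>0$, valid for $0<\lambda\leq\widetilde{\lambda_0}$. The decay (\ref{eq:ft-dos-decay2}) for the limit $\hat{\rho}=\hat{\rho}^{(d)}_{\lambda=0}$ is instead supplied by Proposition~\ref{prop:freelapl_dosf}, which yields $|\hat{\rho}^{(d)}_{\lambda=0}(t)|\leq 2^{-d}\pi^{-d/2}|t|^{-d/2}$; here the hypothesis $d\geq 3$ is exactly what forces $d/2>1$, so that this is genuine decay of order $1+(d/2-1)$ with a \emph{positive} exponent. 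Setting $\epsilon':=\min\{\epsilon,\,d/2-1\}>0$ and $D_1':=\max\{D_1,\,2^{-d}\pi^{-d/2}\}$, and using that $|t|^{-(1+\epsilon)}$ and $|t|^{-d/2}$ are both bounded by $|t|^{-(1+\epsilon')}$ for $|t|\geq 1$ (the only regime in which the decay is used in the proof of Theorem~\ref{thm:dos1}, cf.\ the tail integral in (\ref{eq:dos3})), both (\ref{eq:ft-dos-decay1}) and (\ref{eq:ft-dos-decay2}) hold with exponent $1+\epsilon'$ and constant $D_1'$. Thus [H5] holds with $(D_1',\epsilon')$. Invoking Theorem~\ref{thm:dos1} with $\eta_\alpha=\lambda$ and these reconciled constants, valid for $\lambda\leq\lambda_1:=\min\{\widetilde{\lambda_0},\lambda_0\}$ with $\lambda_0$ from (\ref{eq:lambda-zero1}) (the threshold required to apply Theorem~\ref{thm:quant1}, hence Theorem~\ref{thm:dos1}), produces (\ref{eq:dos2}) in the form $\|\rho_\lambda-\rho^{(d)}_{\lambda=0}\|_\infty\leq C'\lambda^\delta$ with $\delta=\frac{\epsilon'}{2+\epsilon'}\frac{1}{1+2d}$ and $C'=4\sqrt{2/\pi}\,\max\{\gamma,\,D_1'/\epsilon'\}$, matching the stated constants. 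Letting $\lambda\to 0^+$ shows the right-hand side vanishes, giving $L^\infty$-continuity at $\lambda=0$; the case $\lambda=0$ is trivial since the left-hand side is then $0$.

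The main point of care is not any hard estimate — Theorem~\ref{thm:dos1} carries all of the analytic weight — but the bookkeeping needed to merge the two \emph{independent} Fourier-decay conditions (one assumed for $\hat\rho_\lambda$, one obtained from Proposition~\ref{prop:freelapl_dosf} for the limit) into the single pair $(D_1',\epsilon')$ demanded by [H5], and the observation that this forces the effective exponent down to $\epsilon'=\min\{\epsilon,\,d/2-1\}$, which in turn pins down $\delta$ and $\lambda_1$. A secondary subtlety worth flagging is that the a~priori Wegner bound $\|\rho_\lambda\|_\infty\leq N\|h\|_\infty/\lambda$ \emph{degenerates} as $\lambda\to 0^+$ and therefore serves only to establish absolute continuity in [H4]; the non-degenerate, uniform-in-$E$ control on $\|\rho_\lambda-\rho^{(d)}_{\lambda=0}\|_\infty$ comes entirely from the Fourier-side argument of Theorem~\ref{thm:dos1}, which relies on the decay hypothesis (\ref{eq:ft-dos-decay1}) rather than on the pointwise Wegner estimate.
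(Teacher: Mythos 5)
Your proposal is correct and follows exactly the route the paper intends: Theorem \ref{thm:contiDOSfweakdisorder} is presented there as an immediate corollary of Theorem \ref{thm:dos1} with $\eta_\alpha = \lambda$, with [H4] supplied by the Wegner estimate (\ref{eq:wegner}), the limit-side decay in [H5] supplied by Proposition \ref{prop:freelapl_dosf}, and the constants merged into $\epsilon' = \min\{\epsilon, \frac{d}{2}-1\}$ and $D_1' = \max\{D_1, 2^{-d}\pi^{-d/2}\}$ exactly as you do. Your additional observations --- that the domination by $|t|^{-(1+\epsilon')}$ is only needed for $|t| \geq A \geq 1$ where the tail integral in (\ref{eq:dos3}) lives, and that the degenerating Wegner bound serves only to establish absolute continuity --- are correct refinements of the bookkeeping the paper leaves implicit.
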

We mention that the definitions of the constants $\epsilon^\prime$ and $D_1^\prime$ in Theorem \ref{thm:contiDOSfweakdisorder} take into account the decay of the Fourier-transform for the free Laplacian in (\ref{eq:decayFourierFreeLapl}).

While we do conjecture that (\ref{eq:ft-dos-decay1}) in [H5] holds for $\eta_\alpha = \lambda$ and $d \geq 3$, at the present moment we cannot provide a proof. Further evidence in favor of this conjecture is however provided by a result of Carmona and Lacroix in their monograph \cite{carmonalacroix}, where they prove an explicit expression for the Fourier transform of the DOSm for the case that the random variables in the Hamiltonian (\ref{eq:schr-op-low1}) with $N=1$ are iid according to a {\em{Cauchy distribution}} (also known as the {\em{Lloyd model}}), i.e.
\begin{equation} \label{eq:cauchy}
d \nu_\lambda(x) = \dfrac{1}{\pi} \dfrac{\lambda}{x^2 + \lambda^2} ~\mbox{ , $\lambda > 0$.}
\end{equation}
In this situation, based on a discrete version of the Feynman-Kac formula and ideas going back to Molchanov they show (see the discussion of the Lloyd model on p. 329 in \cite{carmonalacroix}) that
\begin{equation} \label{eq:carmonalacroix_ftdosf}
\widehat{\rho}_\lambda(t) = \widehat{\rho}_{\lambda = 0}^{(d)}(t) \mathrm{e}^{-\lambda \vert t \vert} ~\mbox{,}
\end{equation}
where $\widehat{\rho}_{\lambda = 0}^{(d)}(t)$ is given in (\ref{eq:dosffouriertr:lapl}). We emphasize that their proof of (\ref{eq:carmonalacroix_ftdosf}) crucially relies on the fact that the Fourier transform of a Cauchy distribution is an exponential. Note that  (\ref{eq:carmonalacroix_ftdosf}) in particular implies that $\rho_\lambda = \rho_{\lambda = 0}^{(d)} * h_\lambda$.

While our approach developed in the sections \ref{sec:qual-contDOSm1}-\ref{sec:quan-contDOSm1} requires compactness of the support of the single-site distribution and hence does not have an immediate extension to e.g. (\ref{eq:cauchy}), we note that a simple modification of the cut-off argument used in the proof of Theorem \ref{thm:dos1} allows to however give a direct proof of the continuity of the DOSf as $\lambda \to 0^+$ for the Lloyd model (\ref{eq:cauchy}), at least if $d \geq 3$. Since this was not addressed in \cite{carmonalacroix}, we add the argument here for completeness:
\begin{theorem}
Consider the Lloyd model, i.e. the Hamiltonian in (\ref{eq:schr-op-low1}) for $N=1$ with a single-site distribution given by (\ref{eq:cauchy}). For dimensions $d \geq 3$, the DOSf is continuous as $\lambda \to 0+$ and satisfies
\begin{equation}
\Vert \rho_{\lambda} - \rho_{\lambda = 0} \Vert_\infty \leq D_L \lambda^{\frac{d-2}{d+2}} ~\mbox{.}
\end{equation}
Here, the constant $D_L = D_L(d)$ is determined explicitly in (\ref{eq:hoeldercarmonlacroix}).
\end{theorem}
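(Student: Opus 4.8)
The plan is to bypass the polynomial/compact-support machinery of Sections \ref{sec:qual-contDOSm1}--\ref{sec:quan-contDOSm1} entirely and argue directly on the Fourier side, just as in the cut-off argument from the proof of Theorem \ref{thm:dos1}. Since $d \geq 3$ forces enough decay of $\widehat{\rho}_{\lambda=0}^{(d)}$ (and, by (\ref{eq:carmonalacroix_ftdosf}), of $\widehat{\rho}_\lambda$), both $\rho_\lambda$ and $\rho_{\lambda=0}$ are continuous functions recoverable by inversion, and I would start from
\beq
| \rho_\lambda(E) - \rho_{\lambda=0}(E) | \leq \frac{1}{\sqrt{2\pi}} \int_\R | \widehat{\rho}_\lambda(t) - \widehat{\rho}_{\lambda=0}^{(d)}(t) | ~dt ~\mbox{,}
\eeq
uniformly in $E \in \R$, using $|e^{itE}| = 1$.

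The decisive simplification comes from the explicit Lloyd-model identity (\ref{eq:carmonalacroix_ftdosf}): the integrand factors as
$$
| \widehat{\rho}_\lambda(t) - \widehat{\rho}_{\lambda=0}^{(d)}(t) | = | \widehat{\rho}_{\lambda=0}^{(d)}(t) | ~| e^{-\lambda|t|} - 1 | ~\mbox{.}
$$
I would then exploit the two elementary envelopes $|e^{-\lambda|t|} - 1| = 1 - e^{-\lambda|t|} \leq \lambda|t|$ and $|e^{-\lambda|t|} - 1| \leq 1$, together with the sharp decay $|\widehat{\rho}_{\lambda=0}^{(d)}(t)| \leq 2^{-d}\pi^{-d/2} |t|^{-d/2}$ of Proposition \ref{prop:freelapl_dosf} (valid for all $t \neq 0$, since $|J_0(y)| \leq |y|^{-1/2}$) and the trivial bound $|\widehat{\rho}_{\lambda=0}^{(d)}(t)| \leq (2\pi)^{-d/2}$ that follows from (\ref{eq:dosffouriertr:lapl}) near the origin. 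Splitting the integral at a threshold $|t| = A$ gives a head contribution $\lesssim \lambda A^2$ (from $|e^{-\lambda|t|}-1|\leq \lambda|t|$ and the bounded envelope, whose $t$-integral over $[-A,A]$ is $O(A^2)$) and a tail contribution $\lesssim A^{1-d/2}$ (from $|e^{-\lambda|t|}-1|\leq 1$ and the decay, whose integral $\int_A^\infty t^{-d/2}\,dt$ converges precisely because $d/2 > 1$).

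Balancing the two contributions $\lambda A^2 \sim A^{1-d/2}$ yields the choice $A = \lambda^{-2/(d+2)}$, for which both terms become comparable to $\lambda^{(d-2)/(d+2)}$; this produces the claimed exponent and the explicit constant
\beq\label{eq:hoeldercarmonlacroix}
D_L = \frac{1}{\sqrt{2\pi}}\left[ (2\pi)^{-d/2} + \frac{2^{2-d}\pi^{-d/2}}{d-2} \right] ~\mbox{.}
\eeq
The one genuine obstacle -- and the reason the statement is restricted to $d \geq 3$ -- is the convergence of the tail integral $\int_A^\infty t^{-d/2}\,dt$, which is finite only for $d > 2$; for $d = 1,2$ the Fourier transform of the free DOSf decays too slowly for this inversion argument, and a different method would be required. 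A minor technical point to dispatch is the behavior near $t = 0$, where the decay bound degenerates and must be replaced by the uniform bound $(2\pi)^{-d/2}$; however, the vanishing factor $\lambda|t|$ renders the head integral harmless.
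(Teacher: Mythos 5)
Your proposal is correct and follows essentially the same route as the paper's own proof: inverse Fourier transform combined with the Carmona--Lacroix identity (\ref{eq:carmonalacroix_ftdosf}), a split of the $t$-integral at a threshold $A$ using the linear bound $|e^{-\lambda|t|}-1|\leq \lambda|t|$ near the origin and the decay bound (\ref{eq:decayFourierFreeLapl}) in the tail, and the balancing choice $A=\lambda^{-2/(d+2)}$ giving the exponent $\frac{d-2}{d+2}$. The only differences are cosmetic sharpenings of elementary bounds (you use $1-e^{-y}\leq y$ and $|e^{-\lambda|t|}-1|\leq 1$ where the paper uses $e|y|$ and the triangle inequality), which merely change the explicit value of $D_L$.
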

\begin{proof}
For $A > 0$ to be determined later, using (\ref{eq:carmonalacroix_ftdosf}), inverse Fourier transform yields in analogy to (\ref{eq:dos3}),
\bea \label{eq:cl_1}
\vert \rho_\lambda(E) - \rho_{\lambda = 0}(E) \vert \leq \int_{-A}^A \vert  \hat{\rho}_{\lambda = 0}^{(d)}(t) \vert \vert \mathrm{e}^{-\lambda \vert t \vert} - 1 \vert ~\dfrac{d t}{\sqrt{2 \pi}} + \sqrt{\dfrac{2}{\pi}} \int_{\vert t \vert \geq A} \vert  \hat{\rho}_{\lambda = 0}^{(d)}(t) \vert ~dt ~\mbox{,}
\eea
where the inequality holds for Lebesgue a.e. $E$ and in $L^2(\mathbb{R})$.

Using the decay estimate (\ref{eq:decayFourierFreeLapl}) for the second integral in (\ref{eq:cl_1}) and that, for all $y \in \mathbb{R}$ on has, $\vert \mathrm{e}^{-y} - 1 \vert \leq \mathrm{e} \vert y \vert$ and $\vert J_0(y) \vert \leq 1$, to estimate the first integral in (\ref{eq:cl_1}), we obtain
\bea
\Vert \rho_\lambda - \rho_{\lambda = 0} \Vert_\infty \leq \dfrac{\mathrm{e}}{(2 \pi)^{d + 1/2}} \cdot \lambda A^2 + \dfrac{2^{-d + 5/2} \pi^{-(d+1)/2}}{d-2} A^{-\frac{d}{2} + 1} ~\mbox{.}
\eea

Hence, taking $A = \lambda^{-\xi}$ where $\xi > 0$ is to be optimized, we obtain the claim with $\xi = \frac{2}{d+2}$ and
\begin{equation} \label{eq:hoeldercarmonlacroix}
D_L = \dfrac{1}{(d-2) 2^{d - \frac{9}{2}} \pi^{(d+1)/2}     } ~\mbox{.}
\end{equation}
\end{proof}

%%%%%%%%%%%%%%%%%%%%%%%%%%%%%%%%%%%%%%%%%%%%%%%%%%%%%%%%%%%%%%%%%%%%%%%%%%%%%%%%%%%%%%%%%%%%%%

\section{Extensions and generalizations }\label{sec:extensions}

The purpose of this section is to outline a few generalizations of the developed theory. As pointed out at the end of section \ref{sec:intro1}, our results are not limited to the random lattice Schr\"odinger operators specified in [H1]. In section \ref{sec:extensions_necfeat}, we therefore list the necessary features a model needs to possess in order to be amenable to the framework developed in this paper. The subsequent sections then consider a few examples of models which have been of particular interest in the literature, specifically the finite-range Anderson model and random Schr\"odinger operators on the strip (section \ref{sec:extensions_finiteRange_strip}), as well as the Anderson model on the Bethe lattice (section \ref{sec:extensions_bethe}).

\subsection{Necessary features of the model} \label{sec:extensions_necfeat}

While the main result of this paper in Theorem \ref{thm:main} was formulated for the specific model described in [H1], its proof presented in the sections \ref{sec:qual-contDOSm1}--\ref{sec:quan-contDOSm1} in fact only relies on the following necessary features of the model, which are shared by many more discrete random operators.

To list these necessary features, let $\mathcal{J} \neq \emptyset$ be a countable index set, $(\mathcal{K},\rho)$ be a fixed compact metric space, $\mathbb{G}$ be an infinite graph, and $L \in \mathbb{N}$ be given. Suppose that for each $\omega \in \Omega = \mathcal{K}^\mathcal{J}$, $H_\omega$ is a bounded self-adjoint operator on $\ell^2(\mathbb{G}; \mathbb{C}^L)$ and that the elements of $\omega = (\omega_j)_{j \in \mathcal{J}} \in \Omega$ are idd random variables, distributed according to a common Borel probability measure $\nu \in \mathcal{P}(\mathcal{K})$. Let
\beq
\nu^{(\infty)} := \bigotimes_{k \in \mathcal{J} } \nu  ~\mbox{,}
\eeq
denote the probability measure on $\Omega$. We note that $L > 1$, will allow us to e.g. take into account Schr\"odinger operators on a strip, see section \ref{sec:extensions_finiteRange_strip_Strip}.

Since every compact metric space is seperable, Dudley's result in \cite{Dudley_1966}, Theorem 12 therein, implies that, as before, weak-$^*$ convergence of the Borel probability measures $\mathcal{P}(\mathcal{K})$ on $\mathcal{K}$ is metrizable by the metric defined in (\ref{eq:metric}) (with $\mathrm{Lip}([-C,C])$ replaced by $\mathrm{Lip}(\mathcal{K})$).

In order to apply the framework developed in this paper {\em{to prove the qualitative and quantitative continuity of the DOSm}} in the underlying probability distribution, we require the model to satisfy each of the following three properties:
\begin{itemize}
\item[(P1)] {\bf{Basic spectral assumptions:}} There exists $r >0$ such that the spectrum of $H_\omega$ satisfies,
\begin{equation}
\sigma(H_\omega) \subseteq [-r,r] ~\mbox{, for all $\omega \in \Omega$ .}
\end{equation}
Moreover, the density of states measure (DOSm) $n_\nu^{(\infty)}$ can be defined as a spectral average of the form
\begin{equation} \label{eq:DOSm_general}
n_\nu^{(\infty)}(f) := \frac{1}{N} \E_{\nu^{(\infty)}} \{ {\rm Tr} (P_0 f(H_\omega) P_0) \} ~\mbox{, for } f \in Lip([-r,r]) ~\mbox{,}
\end{equation}
where $P_0$ is a finite-rank, orthogonal projection on $\ell^2(\mathbb{G}; \mathbb{C}^L)$ and $N=\mathrm{rk} P_0$.
\vspace{.1in}
\noindent

\item[(P2)] {\bf{Finite-range structure:}} For each $n \in \mathbb{N}$, the map
\begin{equation}
    \omega = (\omega_j) \mapsto {\rm Tr} (P_0 (H_\omega)^n P_0)
\end{equation}
 depends on only {\em{finitely}} many variables $\omega_j$ whose number can be bounded above by some strictly increasing {\em{counting function}} $\Gamma: \mathbb{N} \to \mathbb{N}$.
\vspace{.1in}
\noindent

\item[(P3)] {\bf{Lipschitz property:}} Given $j \in \mathcal{J}$, we write $\omega = (\omega_j, \omega_{\neq j})$. We then require that for each $j \in \mathcal{J}$, every {\em{fixed}} $\omega_{\neq j}$, and all $f \in Lip([-r,r])$, the function
\begin{equation}
\mathcal{K} \ni \lambda \mapsto \mathrm{Tr} (P_0 f(H_{(\lambda, \omega_{\neq j})} P_0) ~\mbox{,}
\end{equation}
is Lipschitz such that for all $\lambda, \lambda_0 \in \mathcal{K}$, one has
\begin{equation}
\left\vert \mathrm{Tr} (P_0 f(H_{(\lambda, \omega_{\neq j})} P_0) - {\rm Tr} (P_0 f(H_{(\lambda_0, \omega_{\neq j})} P_0) \right\vert \leq \gamma \Vert f \Vert_{\mathrm{Lip}} ~\rho(\lambda, \lambda_0) ~\mbox{.}
\end{equation}
Here, $\gamma \in \mathbb{R}$ is a constant, possibly depending on $N=\mathrm{rk} P_0$, but {\em{independent}} of both $f$ and $\omega$.
\end{itemize}

As pointed out in sections \ref{sec:qual-contDOSm1}-\ref{sec:quan-contDOSm1} (see e.g. the remarks following the statement of Theorem \ref{thm:qual1}), continuity of the DOSm in the probability distribution immediately implies the respective results for the IDS provided the latter depends continuously on energy. Hence, to {\em{prove the continuity statements of Theorem \ref{thm:main} for the IDS}}, in addition to the properties (P1)-(P3) above, the model also needs to satisfy that:
\begin{itemize}
\item[(P4)] for each $\nu \in \mathcal{P}(\mathcal{K})$, the IDS $N_\nu(E):= n_\nu^{(\infty)}((-\infty, E))$ is a continuous function in $E$ with a known modulus of continuity.
\end{itemize}

Before turning to specific examples of models which satisfy these properties, let us briefly identify the role of (P1) -- (P3) for the model in [H1], considered so far. Property (P1) is obviously satisfied by the very set-up described in [H1]. The finite-range structure (P2) for the model in [H1] is the subject of Lemma \ref{lem:rv-counting1}. We mention that by the same argument than presented in the proof of Lemma \ref{lem:continuity1}, every model possessing (P1) and (P2) automatically satisfies that ${\rm Tr} \{ P_0 f(H_\omega) P_0 \} \in C(\Omega)$  for every $f \in C([-r,r])$. The Lipschitz property for [H1] was verified in Proposition \ref{prop:finite-rank-prop2}, which in turn implies the finite-rank Lemma (Lemma \ref{sec:finite-rank-lemma1}) by the remarks preceding (\ref{eq:rankN1}).

Finally, we observe that while the proof strategy presented in the sections \ref{sec:qual-contDOSm1}--\ref{sec:quan-contDOSm1} applies to every model satisfying the properties (P1)-(P3), the specific modulus of continuity of the DOSm in the probability distribution will be determined by the counting function $\Gamma(n)$ in (P2). Indeed, as can be seen from (\ref{eq:weak-conv-dos2}), the modulus of continuity of the DOSm is goverened by the relative competition between the first and the second term on the right hand side of (\ref{eq:weak-conv-dos2}).

\subsection{The finite-range Anderson model and random Schr\"odinger operators on the strip} \label{sec:extensions_finiteRange_strip}
The two models in this section present minor modifications of the model in [H1]. In particular, the counting function $\Gamma(n)$ in both cases is the same as in Lemma \ref{lem:rv-counting1}. By the remarks made at the end of section \ref{sec:extensions_necfeat}, this implies that the {\em{validity of Theorem \ref{thm:main} extends to both random Schr\"odinger operators on the strip as well as to the finite-range Anderson model}}.

\subsubsection{The finite-range Anderson model} \label{sec:extensions_finiteRange_strip_FiniteRange}

The finite-range Anderson model, sometimes also known as the ``generalized Anderson model,'' see e.g. \cite{DamanikSimsStolz_JFunAnal_2004, Bucaj_KunzSoulliardprep2016}, is obtained by modifying the Hamiltonian in (\ref{eq:schr-op2}) so to allow for a {\em{non-uniform potential profile}} in the unit $\Lambda_0$ (and hence in each of its translates).

To introduce the model formally, for $j \in \mathcal{J}= K \mathbb{Z}^d$, we denote by $\tau_j$ the translation on $\mathbb{Z}^d$ from the origin to site $j$,
\begin{equation}
\tau_j(k):=k + j ~\mbox{, } k \in \mathbb{Z}^d ~\mbox{,}
\end{equation}
and let $U_j$ be the associated induced unitary on $\ell^2(\mathbb{Z}^d)$ as defined in [H1]. Moreover, we write $\pi_k := \vert \delta_k \rangle \langle \delta_k \vert$ for the projection onto the standard basis vector $\delta_k$ of $\ell^2(\mathbb{Z}^d)$. Fixing $K \in \mathbb{N}$, we then define the {\em{Anderson model with finite-range potential}} as
\begin{description}
\item [{[$H1^\prime$]}] the discrete Hamiltonian on $\ell^2(\mathbb{Z}^d)$ of the form
\begin{equation} \label{eq:genAndmodel}
H_\omega =  \Delta + \sum_{j \in {\mathcal{J}}} \omega_j \left( \sum_{j^\prime \in \Lambda_j} \Theta(\tau_j^{-1}(j^\prime)) \pi_{j^\prime}  \right) ~\mbox{,}
\end{equation}
where $\Theta: [0, K-1]^d \cap \mathbb{Z}^d \to \mathbb{R}$ is a given function, subsequently referred to as ``{\em{profile function}},'' which models the potential profile in each of the lattice units
\begin{equation}
\Lambda_j: = \tau_j( [0,K-1]^d \cap \mathbb{Z}^d) ~\mbox{.}
\end{equation}
\end{description}
Observe that taking $\Theta \equiv 1$ reduces [$H1^\prime$] to [H1].

The model in [$H1^\prime$] clearly satisfies the properties (P1) and (P2) of section \ref{sec:extensions_necfeat}; indeed, the counting lemma, Lemma \ref{lem:rv-counting1}, remains unaffected when replacing [H1] by [$H1^\prime$].

Property (P3) is verified by fairly obvious modifications of the proof of Proposition \ref{prop:finite-rank-prop2}. To this end, fixing $l \in \mathcal{J}$, we first modify the operator in (\ref{eq:finiteranklemmasetup}) according to
\begin{equation} \label{eq:finiteranklemmasetup_modif}
\widetilde{H_\lambda}^{(\ell)}:= H^{(0)} + \lambda \widetilde{P_\ell} ~\mbox{, } \lambda \in [-C,C] ~\mbox{,}
\end{equation}
where
\begin{equation}
\widetilde{P_\ell} :=\sum_{j^\prime \in \Lambda_j} \Theta(\tau_j^{-1}(j^\prime)) \pi_{j^\prime} ~\mbox{.}
\end{equation}
Further, as in (\ref{eq:inclusion1}), suppose $[a,b]$ is a closed interval such that
\beq\label{eq:inclusion1_genand}
\bigcup_{\lambda \in [- C, C]} ~\sigma (\widetilde{H_\lambda}^{(\ell)} ) \subseteq [a,b].
\eeq

Then, obvious modifications of the proof of Proposition \ref{prop:finite-rank-prop2} imply:
\begin{prop} \label{lem:finite-rank1_modif}
Given the set-up described in (\ref{eq:finiteranklemmasetup_modif})-(\ref{eq:inclusion1_genand}). For all $f \in Lip([a,b])$ with Lipschitz constant $L_f$ and all $\lambda, \lambda_0 \in [-C,C]$, one has that
\bea\label{eq:rank-N2_modif}
\left| {\rm Tr} ~ \left( P_0 f(\widetilde{H_\lambda}^{(\ell)}) P_0 \right) - {\rm Tr} ~ \left( P_0 f(\widetilde{H_{\lambda_0}}^{(\ell)}) P_0 \right) \right| & \leq & 2 N^2 \Vert \Theta \Vert_\infty L_f ~\mbox{.}
\eea
\end{prop}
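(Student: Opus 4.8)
The plan is to establish Proposition~\ref{lem:finite-rank1_modif} by running the three-stage argument of Proposition~\ref{prop:finite-rank-prop2} essentially verbatim, the sole structural change being that the rank-$N$ projection $P_\ell$ is everywhere replaced by the finite-rank self-adjoint operator $\widetilde{P_\ell}$ from \eqref{eq:finiteranklemmasetup_modif}. Concretely, I would first prove the estimate for $f \in C_c^\infty(\mathbb{R})$ by means of the Fourier/Duhamel representation, then reduce the resulting trace to a total-variation bound for a product of scalar spectral measures, and finally extend to general $f \in Lip([a,b])$ by the mollification step (item~5 in the proof of Proposition~\ref{prop:finite-rank-prop2}). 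The one genuinely new feature — and the origin of the factor $\Vert \Theta \Vert_\infty$ — is that $\partial_\lambda \widetilde{H_\lambda}^{(\ell)} = \widetilde{P_\ell}$ is no longer a projection but a self-adjoint operator of norm $\Vert \Theta \Vert_\infty$.

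For $f \in C_c^\infty(\mathbb{R})$ I would reproduce \eqref{eq:fourier1}--\eqref{eq:fourier2}, noting that Duhamel's formula uses only $\partial_\lambda \widetilde{H_\lambda}^{(\ell)} = \widetilde{P_\ell}$, to obtain
\begin{equation*}
f(\widetilde{H_\lambda}^{(\ell)}) - f(\widetilde{H_{\lambda_0}}^{(\ell)}) = \frac{\lambda - \lambda_0}{\sqrt{2\pi}} \int_{\mathbb{R}} \widehat{(f^\prime)}(k) \left[ \int_0^1 e^{ik\theta \widetilde{H_\lambda}^{(\ell)}}\, \widetilde{P_\ell}\, e^{ik(1-\theta)\widetilde{H_{\lambda_0}}^{(\ell)}}\, d\theta \right] dk .
\end{equation*}
The key observation is that, since the rank-one projections $\pi_{j^\prime} = |\delta_{j^\prime}\rangle\langle\delta_{j^\prime}|$ are mutually orthogonal, the operator $\widetilde{P_\ell}$ is already in diagonalized form, $\widetilde{P_\ell} = \sum_{j^\prime \in \Lambda_\ell} \Theta(\tau_\ell^{-1}(j^\prime))\, |\delta_{j^\prime}\rangle\langle\delta_{j^\prime}|$, with $\{\delta_{j^\prime} : j^\prime \in \Lambda_\ell\}$ orthonormal and $\Vert \widetilde{P_\ell}\Vert = \Vert \Theta \Vert_\infty$. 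Inserting this together with the resolution $P_0 = \sum_{i=1}^N |\psi_i^{(0)}\rangle\langle\psi_i^{(0)}|$ into the trace, each weight $\Theta(\tau_\ell^{-1}(j^\prime))$ factors cleanly out of the corresponding summand, so that the argument reduces to the projection case with an extra scalar multiplier bounded by $\Vert \Theta \Vert_\infty$.

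The remaining estimate then proceeds exactly as in steps~3--4 of Proposition~\ref{prop:finite-rank-prop2}. For fixed $i \in \{1,\dots,N\}$ and $j^\prime \in \Lambda_\ell$, the scalar spectral measures defined through $\langle \psi_i^{(0)}, g(\widetilde{H_{\widetilde{\lambda}}}^{(\ell)}) \delta_{j^\prime}\rangle = \int g\, d\mu_{\widetilde{\lambda}}^{(0i,j^\prime)}$ have total variation at most $1$ by Cauchy--Schwarz (both $\psi_i^{(0)}$ and $\delta_{j^\prime}$ being unit vectors), whence the associated product measures have total variation at most $2$, as in \eqref{eq:rank-N51}. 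Carrying out the $k$- and $\theta$-integrations collapses the oscillatory integral to $\int_0^1 f^\prime(\theta x + (1-\theta)y)\, d\theta$, bounded by $L_f$ on $[a,b]^2$ thanks to \eqref{eq:inclusion1_genand}; summing over the $N^2$ pairs $(i,j^\prime)$ and using $|\Theta(\tau_\ell^{-1}(j^\prime))| \le \Vert \Theta \Vert_\infty$ produces the analogue of \eqref{eq:rank-N6}, namely $2N^2 \Vert \Theta \Vert_\infty L_f |\lambda - \lambda_0|$, which is \eqref{eq:rank-N2_modif}.

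I do not expect a genuine obstacle here: the argument is a near-verbatim adaptation, and the only new bookkeeping is tracking the uniformly bounded eigenvalues of $\widetilde{P_\ell}$ through the expansion. The two points worth checking carefully are (i) that the weights can indeed be pulled out uniformly, which is immediate from the fact that $\widetilde{P_\ell}$ is diagonal in the orthonormal family $\{\delta_{j^\prime}\}$, and (ii) that the final mollification extending the bound from $C_c^\infty(\mathbb{R})$ to $f \in Lip([a,b])$ goes through unchanged, which it does because $\Vert \widetilde{P_\ell}\Vert = \Vert \Theta \Vert_\infty$ is independent of the smoothing parameter, so the right-hand side stays uniform in $\epsilon$ exactly as in item~5 of Proposition~\ref{prop:finite-rank-prop2}.
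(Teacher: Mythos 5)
Your proposal is correct and takes essentially the same approach as the paper: the paper's own proof of this proposition consists of the single assertion that ``obvious modifications of the proof of Proposition \ref{prop:finite-rank-prop2}'' suffice, and the modifications you spell out --- expanding $\widetilde{P_\ell}$ in its orthonormal eigenbasis $\{\delta_{j'}\}$ with eigenvalues $\Theta(\tau_\ell^{-1}(j'))$ bounded by $\Vert \Theta \Vert_\infty$, rerunning the Duhamel/total-variation argument over the $N^2$ pairs, and keeping the mollification step unchanged --- are exactly those intended. One small remark: your derivation produces the factor $|\lambda - \lambda_0|$ on the right-hand side, which is evidently missing (a typo) from the paper's stated inequality \eqref{eq:rank-N2_modif}, so your bound is the correct form consistent with Proposition \ref{prop:finite-rank-prop2}.
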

%\begin{proof}
%As explained in item 5 of the proof of Proposition  \ref{prop:finite-rank-prop2}, it suffices to establish the claim for $f \in \mathcal{C}_c^\infty(\mathbb{R})$. Then, following item 2. of the same proof, we use the Fourier transform and Duhamel's formula to obtain
%\beq\label{eq:fourier2_modif}
%f(\widetilde{H_\lambda}^{(\ell)}) - f(\widetilde{H_{\lambda_0}}^{(\ell)}) = \frac{(\lambda - \lambda_0)}{\sqrt{2 \pi}}  \int_\R ~dk ~\widehat{f^\prime} (k) ~\int_0^1 ~\left[ e^{ik \theta \widetilde{H_\lambda}^{(\ell)}} \widetilde{P_\ell}  e^{ik (1- \theta) \widetilde{H_{\lambda_0}}^{(\ell)}}  \right] ~d\theta ~\mbox{.}
%\eeq
%Thus, modifying (\ref{eq:measure2}) to define a complex Borel measure on $[a,b]^2$ by
%\beq\label{eq:measure2_modif}
%d \widetilde{\mu_{\lambda, \lambda_0}} (x,y) = \sum_{(j,j^\prime) \in \Lambda_0 \times \Lambda_l} \Theta(\tau_l^{-1}(j^\prime)) ~\left( d \mu_{\lambda}^{(0j,\ell j')} (x) \otimes d\mu_{\lambda_0}^{(\ell j',0j)} (y) \right) ~\mbox{,}
%\eeq
%yields
%\bea\label{eq:rank-N51_modif}
%{\rm Tr} \left[ P_0 \left( f(\widetilde{H_\lambda}^{(\ell)}) - f(\widetilde{H_{\lambda_0}}^{(\ell)}) \right) P_0 \right]
%= {(\lambda - \lambda_0)} \int d \widetilde{\mu_{\lambda, \lambda_0}} (x,y) \int_0^1 d\theta ~{f^\prime}(\theta x + (1- \theta)y) ~\mbox{.}
%\eea
%
%Finally, since
%\beq\label{eq:measure-total-var2_modif}
%| \widetilde{\mu_{\lambda, \lambda_0}} | \leq 2 N^2 \Vert \Theta \Vert_\infty ~\mbox{,}
%\eeq
%the rest of the arguments in item 4. of the proof of Proposition \ref{prop:finite-rank-prop2} imply the claim.
%\end{proof}

\subsubsection{Random Schr\"odinger operators on the strip} \label{sec:extensions_finiteRange_strip_Strip}

Let $\emptyset \neq \mathcal{K}$ be a fixed {\em{compact}} subset of the $L \times L$ symmetric matrices over $\mathbb{R}$, equipped with a metric $\rho$ derived from any fixed matrix norm. We consider Schr\"odinger operators on $\mathbb{Z}$ with {\em{matrix-valued}} potentials, randomly sampled from $\mathcal{K}$, i.e. for $\omega = (\omega_n) \in \mathcal{K}^\mathbb{Z}$, we define a bounded self-adjoint operator $H_\omega$ on $\ell^2(\mathbb{Z}; \mathbb{C}^L)$ by
\begin{equation} \label{eq:SchrodStrip}
(H_\omega \psi)_n = \psi_{n-1} + \psi_{n+1} + \omega_n \psi_n ~\mbox{,}
\end{equation}
where the elements of $\omega = (\omega_n) \in \Omega$ are iid random variables distributed according to a common single-site measure $\nu \in \mathcal{P}(\mathcal{K})$. The DOSm is then defined as a spectral average of the form (\ref{eq:DOSm_general}) where $N=L$ and
\begin{equation}
P_0 = \sum_{k=1}^L \vert \delta_0 \otimes e_k \rangle \langle \delta_0 \otimes e_k \vert ~\mbox{.}
\end{equation}
Here, $e_k$, $1 \leq k \leq L$, is the standard basis of $\mathbb{C}^L$ and $\delta_0 = (\delta_{0,j})_{j \in \mathbb{Z}} \in \ell^2(\mathbb{Z})$.

Clearly this model satisfies all the properties (P1) - (P4) of section \ref{sec:extensions_necfeat}. Indeed, aside from obvious changes replacing the distance in $\mathbb{R}$ by the metric $\rho$ on $\mathcal{K}$, the proofs presented in the sections \ref{sec:qual-contDOSm1}--\ref{sec:finite-rank-lemma1} remain unchanged.

We briefly comment on the consequences of Theorem \ref{thm:main} for the continuity of Lyapunov exponents in the probability distribution for random Schr\"odinger operators on the strip. This follows up on our earlier remarks to Theorem \ref{rem:qualilyap} (remark \ref{rem:qualilyap} part (ii)) and Theorem \ref{thm:LE-quant1} (remark \ref{rem:quantilyap}).

Given $E \in \mathbb{C}$ and $n \in \mathbb{N}$, let
\begin{equation} \label{eq:transfer}
A_n^E(\omega) := A^E(\omega_{n-1}) \cdot A^E(\omega_{n-2}) \cdot A^E(\omega_0) ~\mbox{,}
\end{equation}
with
\begin{equation}
A^E(\omega_n):= \begin{pmatrix} E ~\mathrm{I}_L - \omega_n & - \mathrm{I}_L \\ \mathrm{I}_L & 0_l \end{pmatrix} ~\mbox{,}
\end{equation}
where $\mathrm{I}_L$ and $0_L$ is the $L \times L$ identity and zero matrix, respectively. We note that for each $n \in
\mathbb{N}$, $A_n^E(\omega)$ is a conjugate symplectic $2L \times 2L$ matrix over $\mathbb{C}$. Consequently, let $\sigma_1(A_n^E(\omega)) \geq \sigma_2(A_n^E(\omega)) \geq \dots \sigma_L(A_n^E(\omega)) \geq 1$ be the $L$ singular values of $A_n^E(\omega)$ with magnitude greater than or equal to 1. Then, by the Osceledec-Ruelle theorem (see e.g. \cite{simon_kotani_CMP1988}, section 5), the limits
\begin{equation}
0 \leq \gamma_j(E):= \lim_{n \to \infty} \dfrac{1}{n} \log \sigma_j(A_n^E(\omega)) ~\mbox{, } 1 \leq j \leq L ~\mbox{,}
\end{equation}
exist and are finite for $\nu^{(\infty)}$-a.e. $\omega \in \Omega$. The non-random numbers $\gamma_j(E)$, $1 \leq j \leq L$ are called the non-negative Lyapunov exponents associated with the random Schr\"odinger operator in (\ref{eq:SchrodStrip}).

For Schr\"odinger operators on the strip the Thouless formula, which generalizes (\ref{eq:le1}) for $L \geq 1$, is then given by
\begin{equation} \label{eq:thouless_strip}
L(E) := \sum_{j=1}^L \gamma_j(E) =  \int_\R ~\log |E^\prime - E| ~dn_\nu^{(\infty)} (E') ~\mbox{.}
\end{equation}
A proof of (\ref{eq:thouless_strip}) can e.g. be found in the appendix of \cite{simon_kotani_CMP1988}.

As a consequence of (\ref{eq:thouless_strip}), all the results of section \ref{subsec:le1}, in particular the Theorems \ref{thm:qualiLE} and \ref{thm:LE-quant1}, carry over to Schr\"odinger operators on the strip as continuity statements for the sum of the non-negative Lyapunov exponents.

\subsection{Anderson model on the Bethe lattice} \label{sec:extensions_bethe}

The Bethe lattice $\mathbb{B}$ is an infinite regular graph with no loops (i.e. a tree) with coordination number $k \geq 3$ (number of nearest neighbors at each vertex)\footnote{The case $k=2$ corresponds to $\mathbb{Z}$, which was treated before.}. We denote by $\mathcal{V}_\mathbb{B}$ the vertices of $\mathbb{B}$. To define the Anderson model on the Bethe lattice, we consider the random operator on $\ell^2(\mathcal{V}_\mathbb{B})$ given by
\begin{eqnarray} \label{eq:hamiltonianbethe}
H_\omega & = & \Delta_\mathbb{B} + \sum_{x \in \mathcal{V}_\mathbb{B}} \omega_x \pi_x ~\mbox{,} \nonumber \\
(\Delta_\mathbb{B} \psi)(x) & := & \sum_{y \in \mathcal{V}_\mathbb{B}: y \sim x} \psi(y) ~\mbox{,}
\end{eqnarray}
where $y \sim x$ in the definition of the graph Laplacian $\Delta_\mathbb{B}$ denotes the $k$ nearest neighboring vertices $y$ of $x$ and $\pi_x = \vert \delta_x \rangle \langle \delta_x \vert$ is the orthogonal projection onto the standard basis vector $\delta_x \in \ell^2(\mathcal{V}_\mathbb{B})$ associated with the vertex $x$. The Anderson model on the Bethe lattice was first proposed in the physics literature by Abou-Chacra, Anderson, and Thouless \cite{Abou-ChacraThoulessAnderson_JPhysC_1973}, and has ever since enjoyed considerable attention in both the mathematics and the physics community; we refer e.g. to \cite{warzel_ICMP_2013} for a review of known results and a more detailed list of references.

Replacing (\ref{eq:schr-op2}) by (\ref{eq:hamiltonianbethe}) in [H1], we will refer to this modified set-up as [H1$\mathbb{B}$]. In view of property (P1), we note that (\ref{eq:schr-op-sp1}) in [H1] becomes
\begin{equation}
\sigma ( H_\omega ) \subseteq [-2 \sqrt{k-1} - C, 2 \sqrt{k-1} + C] =: [-r_\mathbb{B},r_\mathbb{B}] ~\mbox{,}
\end{equation}
for all $\omega \in \Omega = [-C,C]^{\mathcal{V}_\mathbb{B}}$. Finally upon (arbitrarily) distinguishing one vertex as a the root ``0'' of the tree, the DOSm associated with $\nu \in \mathcal{P}([-C,C])$ can be defined in complete analogy to (\ref{eq:DOSm1}) as
\begin{equation} \label{eq:DOSm1_bethe}
n_\nu^{(\infty)}(f) := \E_{\nu^{(\infty)}} \{ {\rm Tr} (\pi_0 f(H_\omega) \pi_0) \} ~\mbox{.}
\end{equation}

It is obvious that [H1$\mathbb{B}$] satisfies the properties (P1)-(P3) of section \ref{sec:extensions_necfeat}. In particular, similar arguments than in the proof of the counting lemma, Lemma \ref{lem:rv-counting1}, imply that for each fixed $n \in \mathbb{N}$, the function $\omega \mapsto {\rm Tr} (\pi_0 (H_\omega)^n \pi_0)$ can depend on at most the number of vertices in the $\lfloor \frac{n}{2} \rfloor$-th generation of the tree:\begin{align} \label{eq:gammabethe}
1 + k \sum_{j=0}^{\lfloor \frac{n}{2} \rfloor - 1} (k-1)^{j} = 1 + k \dfrac{(k-1)^{\lfloor \frac{n}{2} \rfloor} - 1}{k - 2} \leq 3 k^{\frac{n}{2}} =: \Gamma_\mathbb{B}(n) ~\mbox{.}
\end{align}
Property (P2) therefore holds. Finally, the Lipschitz property (P3) holds by the same of proof than presented in Proposition \ref{prop:finite-rank-prop2} with $N=1$.

In summary, straight-forward modifications of the arguments presented in section \ref{sec:quan-contDOSm1} imply the following quantitative continuity for the DOSm:
\begin{theorem} \label{thm:quant_bethe}
Given the set-up described in [H1$\mathbb{B}$] and [H2], there exists $\alpha_\mathbb{B} \in \mathbb{N}$ such that for every $f \in Lip ([-r_\mathbb{B},r_\mathbb{B}])$ and $\alpha \geq \alpha_\mathbb{B}$, one has
\beq\label{eq:quant-cont1-bethe}
| n_\alpha^{(\infty)} (f) - n^{(\infty)} (f) | \leq \dfrac{\gamma_{\mathbb{B}} ~\Vert f \Vert_{\mathrm{Lip}} }{\sqrt{ \log\left(\eta_\alpha^{- \frac{2 \mathrm{e}}{1 + 2 \mathrm{e}}   }  \right)    }} ,
\eeq
The constants $\alpha_\mathbb{B}, \gamma_{\mathbb{B}}$ are independent of $f$ and are determined by (\ref{eq:alpha0_bethe}) and (\ref{eq:holderconst_bethe}), respectively.
\end{theorem}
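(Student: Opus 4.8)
The plan is to replay the argument of Theorem \ref{thm:quant1} essentially verbatim up to the point where the modulus of continuity is extracted, and then to redo the final optimization with the \emph{exponential} counting function $\Gamma_\mathbb{B}(n)=3k^{n/2}$ of \eqref{eq:gammabethe} in place of the polynomial $\Gamma(n)=2^d n^d$ of Lemma \ref{lem:rv-counting1}. First I would fix $f\in\mathrm{Lip}([-r_\mathbb{B},r_\mathbb{B}])$ and approximate it by its rescaled $n$-th Bernstein polynomial $f_n$, so that $\|f_n-f\|_\infty\le b_f n^{-1/2}$ with $b_f=2r_\mathbb{B}c_b L_f$, exactly as in \eqref{eq:poly-approx2}. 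Since $[H1\mathbb{B}]$ satisfies properties (P1)--(P3) with rank $N=1$ (because $\pi_0$ is rank-one), the finite-range reduction of Lemma \ref{lem:observation1}, the counting bound \eqref{eq:gammabethe}, the telescoping over the finitely many vertices on which $\mathrm{Tr}(\pi_0 f_n(H_\omega)\pi_0)$ depends, and the Lipschitz estimate of Proposition \ref{prop:finite-rank-prop2} (with $N=1$, so $c_f=2\|f\|_{\mathrm{Lip}}$) combine precisely as in steps 2--3 of the proof of Theorem \ref{thm:quant1} to yield, for every $n\in\mathbb{N}$ and every $\alpha$, the analogue of \eqref{eq:weak-conv-dos1}:
\[
\left| n_\alpha^{(\infty)}(f)-n^{(\infty)}(f)\right|\;\le\;\frac{2b_f}{n^{1/2}}+\Gamma_\mathbb{B}(n)\,c_f\,\eta_\alpha .
\]

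The decisive difference now appears: $\Gamma_\mathbb{B}$ grows exponentially, so its inverse is logarithmic, $\Gamma_\mathbb{B}^{-1}(y)=\tfrac{2}{\log k}\log(y/3)$. As in Theorem \ref{thm:quant1} I would adapt $n$ to the decay of $\eta_\alpha$: setting $h_\mathbb{B}:=(\Gamma_\mathbb{B}^{-1})^{1/2}$ and fixing a parameter $0<\xi<1$, choose $n_\alpha$ with $h_\mathbb{B}^2(\eta_\alpha^{-\xi})-1\le n_\alpha\le h_\mathbb{B}^2(\eta_\alpha^{-\xi})$. For $\alpha$ large enough that $h_\mathbb{B}^2(\eta_\alpha^{-\xi})\ge 2$ and $\eta_\alpha^{-\xi}\ge 9$, the monotonicity of $\Gamma_\mathbb{B}$ gives $\Gamma_\mathbb{B}(n_\alpha)\le \eta_\alpha^{-\xi}$, whence the two terms become
\[
\frac{2b_f}{n_\alpha^{1/2}}\le\frac{C}{\sqrt{\log(\eta_\alpha^{-\xi})}},\qquad \Gamma_\mathbb{B}(n_\alpha)\,c_f\,\eta_\alpha\le c_f\,\eta_\alpha^{1-\xi}.
\]
Here the first (Bernstein) term decays only logarithmically --- this is the bottleneck forced by the exponential counting function --- while the second decays like a power of $\eta_\alpha$.

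The final step is the optimization, which is where the exponent $\tfrac{2e}{1+2e}$ is born. Since the logarithmic term decreases as $\xi$ increases, I want $\xi$ as large as possible, the only constraint being that the polynomial second term remain dominated by the logarithmic first term. Writing $w=\eta_\alpha^{-\xi}$, demanding $\eta_\alpha^{1-\xi}\lesssim(\log w)^{-1/2}$ is equivalent to $w^{2(1-\xi)/\xi}\ge\log w$, and by the elementary inequality $y^\rho\ge\log y$ for all $y>0$ if and only if $\rho\ge 1/e$ (used already in the proof of Theorem \ref{thm:IDS-quant1}), this holds precisely when $\tfrac{2(1-\xi)}{\xi}\ge\tfrac{1}{e}$, i.e. $\xi\le\tfrac{2e}{1+2e}$. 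Taking $\xi=\tfrac{2e}{1+2e}$ then places $\log(\eta_\alpha^{-2e/(1+2e)})$ in the denominator of the first term and establishes \eqref{eq:quant-cont1-bethe}, with $\alpha_\mathbb{B}$ the threshold making all the ``large $\alpha$'' requirements above hold, and $\gamma_\mathbb{B}$ collecting $b_f/L_f$, $c_f/\|f\|_{\mathrm{Lip}}$, $\log k$ and the numerical constants.

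The main obstacle is exactly this competition. The exponential $\Gamma_\mathbb{B}$ permits $n$ to grow only logarithmically in $1/\eta_\alpha$, which is what converts the H\"older rate of Theorem \ref{thm:quant1} into the $(\log)^{-1/2}$ rate here; making the balance quantitative through the power-versus-logarithm inequality, so as to pin down the sharp admissible $\xi=\tfrac{2e}{1+2e}$, is the only genuinely delicate point, every other ingredient being a transcription of the $\mathbb{Z}^d$ argument with $N=1$, $r$ replaced by $r_\mathbb{B}$, and $\Gamma$ replaced by $\Gamma_\mathbb{B}$.
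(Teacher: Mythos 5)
Your proposal is correct and takes essentially the same route as the paper's proof: Bernstein approximation plus the finite-range telescoping and the Lipschitz (finite-rank) lemma give the two-term bound $2b_f n^{-1/2} + \Gamma_\mathbb{B}(n)c_f\eta_\alpha$, the exponential counting function forces the logarithmic rate, and the exponent $\frac{2\mathrm{e}}{1+2\mathrm{e}}$ is pinned down by exactly the same $y^\beta \geq \log y \Leftrightarrow \beta \geq 1/\mathrm{e}$ optimization the paper uses. The only differences are cosmetic: your largeness threshold ($\eta_\alpha^{-\xi}\geq \max\{9,3k\}$) versus the paper's $\eta_{\alpha}^{-\xi_0}\geq k^{3}$ in (\ref{eq:alpha0_bethe}), and slightly different bookkeeping of the constant $\gamma_\mathbb{B}$.
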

\begin{proof}
We let $f \in Lip([-r_\mathbb{B},r_\mathbb{B}])$ and follow the general outline of the proof of Theorem \ref{thm:quant1}. In particular, letting
\begin{equation} \label{eq:xi0_bethe}
\xi_0 :=  \frac{2 \mathrm{e}}{1 + 2 \mathrm{e}} ~\mbox{,}
\end{equation}
and taking $\xi \geq \xi_0$, to be determined later, we obtain in complete analogy to (\ref{eq:weak-conv-dos2}) that for all $\alpha \in \mathbb{N}$:
 \bea\label{eq:weak-conv-dos2_bethe}
   | n_\alpha^{(\infty)} (f) - n^{(\infty)} (f) | & \leq & \frac{2 b_f \sqrt{\log(k)}}{ \sqrt{ 2 \log(\eta_\alpha^{-\xi} /3 ) - \log(k) } }  + c_f \eta_\alpha^{1 - \xi}  ~\mbox{.}
\eea
Here, we used the explicit form of $\Gamma_\mathbb{B}(n)$ given in (\ref{eq:gammabethe}).

Choosing $\alpha_\mathbb{B} \in \mathbb{N}$ such that
\begin{equation} \label{eq:alpha0_bethe}
\eta_{\alpha_\mathbb{B}} \leq \mathrm{e}^{-3 \log(k) / \xi_0} ~\mbox{, for all $\alpha \geq \alpha_\mathbb{B}$ ,}
\end{equation}
with $\xi_0$ as in (\ref{eq:xi0_bethe}), $\xi \geq \xi_0$ and $k \geq 3$ imply that for all $\alpha \geq \alpha_\mathbb{B}$, one has
\begin{equation}
2 \log( \eta_\alpha^{-\xi} /3 ) - \log(k) \geq \log(\eta_\alpha^{-\xi}) ~\mbox{,}
\end{equation}
whence (\ref{eq:weak-conv-dos2_bethe}) yields
\begin{equation}  \label{eq:weak-conv-dos3_bethe}
 | n_\alpha^{(\infty)} (f) - n^{(\infty)} (f) | \leq \frac{2 b_f \sqrt{\log(k)}   }{ \sqrt{\log(\eta_\alpha^{-\xi} )  } } + c_f (\eta_\alpha^{-\xi})^{\frac{1 - \xi}{\xi}}  \mbox{.}
 \end{equation}
To optimize (\ref{eq:weak-conv-dos3_bethe}) in $\xi$ we use, as in the proof of Theorem  \ref{thm:IDS-quant1}, that $y^\beta \geq \log(y)$, for all $y \in (0, + \infty)$, if and only if $\beta \geq \frac{1}{\mathrm{e}}$. We therefore conclude that (\ref{eq:weak-conv-dos3_bethe}) is optimized by taking $\xi = \xi_0$.

In summary, we conclude that for all $\alpha \geq \alpha_\mathbb{B}$, one has
\begin{equation}
| n_\alpha^{(\infty)} (f) - n^{(\infty)} (f) | \leq \dfrac{2( 2 r c_b \sqrt{\log(k)} + 1 )}{ \sqrt{ \log\left( \eta_\alpha^{\xi_0} \right)   } } ~\mbox{,}
\end{equation}
which determines the constant $\gamma_\mathbb{B}$ in (\ref{eq:quant-cont1-bethe}) as
\begin{equation} \label{eq:holderconst_bethe}
\gamma_\mathbb{B} = 2( 2 r_\mathbb{B} c_b \sqrt{\log(k)} + 1 ) ~\mbox{,}
\end{equation}
where $c_b$ is given in (\ref{eq:bernsteinconst}).
\end{proof}

We cannot give a Bethe lattice analogue of our continuity result for the IDS (Theorem \ref{thm:IDS-quant1}), as it is not known whether for a {\em{general}} probability measure $\nu \in \mathcal{P}([-C,C])$ the IDS for the Anderson model on the Bethe lattice is continuous; we refer the reader to section \ref{subsec:previous-intro1} for a short review of some known results. We can however address the weak-disorder limit $\lambda \to 0^+$ for {\em{both}} the DOSm and the IDS of
\beq\label{eq:schr-op-low1-bethe}
H_\omega (\lambda) := \Delta_{\mathbb{B}} + \lambda \sum_{j \in {\mathcal{J}}} \omega_j P_j ~\mbox{, }
\eeq
with $\omega$ distributed according to an arbitrary fixed single-site measure $\mu \in \mathcal{P}([-1,1])$. Referring to the above-mentioned open problem of whether the IDS for the general Anderson model on the Bethe lattice is continuous, we note that for $\lambda = 0$ in (\ref{eq:schr-op-low1-bethe}), the DOSm is known to be absolutely continuous $\mathrm{d}n_{\lambda = 0}^{(\infty)}(E)= \rho_{\lambda = 0}^{(\mathbb{B})}(E) \mathrm{d}E$ with DOSf given by (see e.g. \cite{aizenmanWarzel_book}),
\begin{equation}
\rho_{\lambda = 0}^{(\mathbb{B})}(E) = \dfrac{k}{2 \pi} \dfrac{\sqrt{4(k-1) - E^2}}{k^2 - E^2} \chi_{[-2 \sqrt{k}, 2 \sqrt{k} ]}(E)  ~\mbox{.}
\end{equation}
In particular, this implies that for all $E \in \mathbb{R}$ and $\epsilon > 0$:
\begin{equation} \label{eq:IDS_BetheLapl}
N_{\lambda = 0}(E + \epsilon) - N_{\lambda = 0}(E) = n_{\lambda = 0}([E, E + \epsilon]) \leq c_{\mathbb{B}} \epsilon ~\mbox{, }
\end{equation}
where one can take
\begin{equation} \label{eq:IDS_BetheLapl_1}
c_{\mathbb{B}} =\Vert \rho_{\lambda = 0}^{(\mathbb{B})} \Vert_\infty = \begin{cases}  \dfrac{k}{4 \pi \sqrt{k^2 - 4(k-1)}} & ~\mbox{, if } k \in [3,6] \cap \mathbb{N} ~\mbox{,} \\ \dfrac{\sqrt{4 (k-1)}}{k}  & ~\mbox{, if } k \geq 7  ~\mbox{.} \end{cases}
\end{equation}

Application of Theorem \ref{thm:quant_bethe} therefore immediately yields:
\begin{theorem} \label{thm:contbethe_weak}
Consider the Anderson model on the Bethe lattice ($k \geq 3$) given in (\ref{eq:schr-op-low1-bethe}) with underlying single-site measure $\mu \in \mathcal{P}([-1,1])$. There exist constants $\lambda_\mathbb{B} , \widetilde{\gamma_\mathbb{B}}, \widetilde{c}_\mathbb{B}$ such that for all $0 < \lambda \leq \lambda_\mathbb{B}$:
\begin{itemize}
\item[(i)] for every $f \in Lip ([-2 \sqrt{k-1} - \lambda_0 , 2 \sqrt{k-1} + \lambda_0])$, one has
\begin{equation}
| n_\lambda^{(\infty)} (f) - n_{\lambda = 0}^{(\infty)} (f) | \leq \dfrac{\widetilde{\gamma_{\mathbb{B}}} ~\Vert f \Vert_{\mathrm{Lip}}}{\sqrt{ \log\left(\lambda^{- \frac{2 \mathrm{e}}{1 + 2 \mathrm{e}}   }  \right) }} ,
\end{equation}
\item[(ii)] for all energies $E \in [-2 \sqrt{k-1} - \lambda_0 , 2 \sqrt{k-1} + \lambda_0]$, one has
\begin{equation} \label{eq:ids_bethe_weak}
\vert N_\lambda(E) - N_{\lambda =0 }(E) \vert \leq \dfrac{\widetilde{c}_\mathbb{B}}{[ \log(\lambda^{- \frac{2 \mathrm{e}}{1 + 2 \mathrm{e}}   }) ]^{1/4}} ~\mbox{.}
\end{equation}
\end{itemize}
Explicitly, $\lambda_\mathbb{B} = \mathrm{e}^{- 3 \log(k)/ \xi_0}$, where $\xi_0$ is given in (\ref{eq:xi0_bethe}), and the constants $\widetilde{\gamma_\mathbb{B}}, \widetilde{c}_\mathbb{B}$ are determined in, respectively, (\ref{eq:hoelderconst_dosm_bethe_weak}) and (\ref{eq:hoelderconst_ids_bethe_weak}).
\end{theorem}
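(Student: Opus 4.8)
The plan is to obtain the weak-disorder continuity statements for the Bethe lattice as a direct application of Theorem~\ref{thm:quant_bethe} combined with the rescaling argument already used in section~\ref{subsec:dosm-low1} and the H\"older continuity of the IDS for the free Bethe-lattice Laplacian recorded in (\ref{eq:IDS_BetheLapl}). First I would recast the model (\ref{eq:schr-op-low1-bethe}) into the form [H2] via the single-site rescaling $\widetilde{\omega}_j := \lambda \omega_j$, exactly as in (\ref{eq:couplingrescaled meas}): the disorder limit $\lambda \to 0^+$ becomes the $w^\star$-limit $\nu_\lambda \to \delta_0$, with the explicit bound $d_w(\nu_\lambda, \delta_0) \leq \lambda$. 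Setting $\eta_\alpha = \lambda$ and choosing $\lambda_\mathbb{B} = \mathrm{e}^{-3\log(k)/\xi_0}$ so that the threshold (\ref{eq:alpha0_bethe}) is met, part~(i) then follows immediately from (\ref{eq:quant-cont1-bethe}); the constant can be taken as $\widetilde{\gamma_\mathbb{B}} = \gamma_\mathbb{B}$ from (\ref{eq:holderconst_bethe}), yielding
\begin{equation} \label{eq:hoelderconst_dosm_bethe_weak}
\widetilde{\gamma_\mathbb{B}} = 2( 2 r_\mathbb{B} c_b \sqrt{\log(k)} + 1 ) ~\mbox{.}
\end{equation}

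For part~(ii) I would mimic the cut-off construction from the proof of Theorem~\ref{thm:IDS-quant1}. Introduce the piecewise-linear functions $f_\pm$ of (\ref{eq:cut-off1})--(\ref{eq:cut-off2}) with slopes $1/a_\pm$, so that $0 \le f_- \le \chi_{[-r_\mathbb{B},E)} \le f_+$ and $\Vert f_\pm \Vert_{\mathrm{Lip}} \le 1 + 1/a_\pm$. Applying part~(i) to $f_\pm$ and using $n_{\lambda=0}^{(\infty)}([E-a_-, E+a_+]) \le c_\mathbb{B}(a_- + a_+)$ from (\ref{eq:IDS_BetheLapl}) gives, with $a_- = a_+ = \tfrac12 \lambda^\zeta$ for a parameter $\zeta > 0$ to be optimized,
\begin{equation}
| N_\lambda(E) - N_{\lambda=0}(E) | \leq \widetilde{\gamma_\mathbb{B}} \left( 2\lambda^{-\zeta} + 1 \right) \frac{1}{\sqrt{\log(\lambda^{-\xi_0})}} + c_\mathbb{B} \lambda^\zeta ~\mbox{.}
\end{equation}
The crucial difference from the $\mathbb{Z}^d$ case is that the DOSm modulus is only inverse-square-root-logarithmic, not H\"older, so the first term does not decay polynomially. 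I would therefore absorb the polynomial prefactor by replacing $\lambda^{-\zeta}$ against the logarithm: since $\lambda^{-\zeta} \cdot [\log(\lambda^{-\xi_0})]^{-1/2} \to 0$ is false, I instead choose $\zeta$ so that $\lambda^\zeta$ matches the dominant logarithmic term. The hard part is precisely this balancing: the best one can hope for is to make both contributions comparable to a fractional power of $\log(\lambda^{-\xi_0})$. Taking $a_\pm$ to decay like a negative power of the logarithm rather than a power of $\lambda$ — concretely $a_\pm = [\log(\lambda^{-\xi_0})]^{-1/4}$ — balances the $[\log]^{-1/2}/a_\pm$ term against the $c_\mathbb{B}\, a_\pm$ term, producing the $[\log(\lambda^{-\xi_0})]^{-1/4}$ rate in (\ref{eq:ids_bethe_weak}).

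Carrying this out, both resulting terms are bounded by a constant multiple of $[\log(\lambda^{-\xi_0})]^{-1/4}$, which fixes
\begin{equation} \label{eq:hoelderconst_ids_bethe_weak}
\widetilde{c}_\mathbb{B} = 2 \max\{ \widetilde{\gamma_\mathbb{B}} ~;~ c_\mathbb{B} \} ~\mbox{,}
\end{equation}
with $c_\mathbb{B}$ as in (\ref{eq:IDS_BetheLapl_1}). The restriction of $E$ to the compact interval is harmless, since outside it the left-hand side vanishes a priori, exactly as noted after Theorem~\ref{thm:weak-contIDS1}. I expect the main obstacle to be bookkeeping the optimization correctly: because the DOSm estimate here is only log-type, the standard polynomial optimization of section~\ref{sec:quan-contDOSm1} must be replaced by one in the variable $\log(\lambda^{-1})$, and care is needed to ensure the chosen $a_\pm$ stays admissible (in particular positive and small) for all $\lambda \le \lambda_\mathbb{B}$ and that the free-Laplacian H\"older bound (\ref{eq:IDS_BetheLapl}) is applied on an interval contained in the support, which is guaranteed by the energy restriction.
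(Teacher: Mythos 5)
Your proposal is correct and follows essentially the same route as the paper: part (i) via rescaling and Theorem \ref{thm:quant_bethe} with $\eta_\alpha = \lambda$, and part (ii) via the cut-off functions $f_\pm$ of Theorem \ref{thm:IDS-quant1}, with the key modification — also the paper's — of choosing $a_\pm$ as a negative power of $\log(\lambda^{-\xi_0})$ (namely $[\log(\lambda^{-\xi_0})]^{-1/4}$) rather than a power of $\lambda$, to balance the inverse-square-root-log DOSm bound against the free-Laplacian H\"older term. The only discrepancies are harmless constant bookkeeping (the paper records $\widetilde{c}_\mathbb{B} = \widetilde{\gamma_\mathbb{B}} + c_\mathbb{B}$ and uses $2\sqrt{k}+\lambda_0$ in place of $r_\mathbb{B}$ in $\widetilde{\gamma_\mathbb{B}}$).
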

We note that, as earlier (see e.g. our comments right after Theorem \ref{thm:weak-contIDS1}), the restriction of the energy in part (ii) of Theorem \ref{thm:contbethe_weak} could be dropped.

As mentioned in section \ref{subsec:previous-intro1}, Klein and Sadel had previously analyzed the regularity of the IDS in $\lambda$ for energies {\em{inside the spectrum of the free Laplacian}}, $[2 \sqrt{k-1}, 2\sqrt{k-1}]$. Specifically they showed in \cite{KleinSadel} that for every closed interval $I \subset (-2 \sqrt{K-1}, 2 \sqrt{K-1})$, there exists $\delta > 0$ such that $(E,\lambda) \mapsto N_\lambda(E)$ is jointly $\mathcal{C}^1$ for $(E, \lambda) \in I \times (-\delta, \delta)$. While Theorem \ref{thm:contbethe_weak} only obtains $\log$-H\"older continuity of the IDS in the disorder, our result is not restricted to compact subsets of the spectrum of the free Laplacian.

\begin{proof}
Part (i) follows by a straight-forward application of Theorem \ref{thm:quant_bethe} for $\eta_\alpha = \lambda$; here, from (\ref{eq:holderconst_bethe}), one finds
\begin{equation} \label{eq:hoelderconst_dosm_bethe_weak}
\widetilde{\gamma_\mathbb{B}} = 2( 2 (2 \sqrt{k} + \lambda_0) c_b \sqrt{\log(k)} + 1 ) ~\mbox{.}
\end{equation}

For part (ii), analogous arguments than the ones used in the proof of Theorem \ref{thm:IDS-quant1} (see in particular, (\ref{eq:ids-cont2})) yields by Theorem \ref{thm:quant_bethe} and (\ref{eq:IDS_BetheLapl}) that
\begin{equation}
\vert N_\lambda(E) - N_{\lambda =0}(E) \vert \leq \dfrac{\widetilde{\gamma_\mathbb{B}}}{\sqrt{ \log (\lambda^{-\xi_0})    } } \cdot \dfrac{1}{\min \{ a_- ; a_+ \}} + c_\mathbb{B} (a_- + a_+) ~\mbox{,}
\end{equation}
for arbitrary $a_-, a_+ > 0$ and all $0 < \lambda \leq \lambda_0$.
Hence, letting $a_- = a_+ = \frac{1}{2} ( \log (\lambda^{-\xi_0}) )^{-1/4}$, we conclude (\ref{eq:ids_bethe_weak}) with
\begin{equation} \label{eq:hoelderconst_ids_bethe_weak}
\widetilde{c}_\mathbb{B} = \widetilde{\gamma_\mathbb{B}} + c_\mathbb{B} ~\mbox{,}
\end{equation}
where the constant $c_\mathbb{B}$ is given in (\ref{eq:IDS_BetheLapl_1}). We note that the exponent of $1/4$ in (\ref{eq:ids_bethe_weak}) is optimized for our proof.
\end{proof}

%%%%%%%%%%%%%%%%%%%%%%%%%%%%%%%%%%%%%%%%%%%%%%%%%%%%%%%%%%%%%%%%%%%%%%%%%%%%%%%%%%%%%%%%%%%%%%%%%%%

\section{Appendix A: Proof of the counting Lemma (Lemma \ref{lem:rv-counting1})}\label{sec:appendix:rv-counting1}
\setcounter{equation}{0}

We prove Lemma \ref{lem:rv-counting1} in this section. We assume [H1].

\begin{proof}[Proof of Lemma \ref{lem:rv-counting1}]
Recall that by the set-up described in [H1], the lattice $\mathbb{Z}^d$ is partitioned into cubes consisting of $N=K^d$ sites. We begin by expanding the trace:
\bea\label{eq:count-exp1}
{\rm Tr} ~( P_0 H_\omega^n P_0 ) & = & \sum_{(k_1, \ldots, k_n ) \in \{ 0,1\}^n} {\rm Tr} ~(P_0 H_{k_n} \cdots H_{k_1} ) \nonumber \\
 &=&  \sum_{j \in \Z^d \cap [0,K-1]^d} \sum_{(k_1, \ldots, k_n ) \in \{ 0,1\}^n} \langle \delta_j, H_{k_n} \cdots H_{k_1} \delta_j \rangle .
 \eea
Because of the form of the discrete Laplacian $\Delta$, each non-vanishing summand in \eqref{eq:count-exp1} represents a walk on $\Z^d$ starting and ending at the {\em{same}} point in $[0,K-1]^d$. In particular, there cannot be more than $\lfloor n/2 \rfloor$ steps occurring in the same direction and parallel to one of the coordinate axes. This shows that the walks representing the non-vanishing terms in \eqref{eq:count-exp1} are confined within the cube
$$
\left[ - (K-1) -  \left\lfloor \frac{n}{2} \right\rfloor, (K-1) + \left\lfloor \frac{n}{2} \right\rfloor \right]^d .
$$
Using the fact that $H_\omega = H^{(0)} + V_\omega$, the function $\omega \mapsto {\rm Tr}~(P_0 H_\omega^n P_0 )$ thus depends on at most
\beq \label{eq:gammaaccurate}
\left( 2 \left\lceil \frac{1}{K-1} \left( K-1 + \left\lfloor \frac{n}{2} \right\rfloor \right) \right\rceil \right)^d \leq 2^d n^d
\eeq
random variables, which proves the lemma.
\end{proof}

%%%%%%%%%%%%%%%%%%%%%%%%%%%%%%%%%%%%%%%%%%%%%%%%%%%%%%%%%%%%%%%%%%%%%%%%%%%%%%%%%%%%%%%%%%%%%%%%%%%

\section{Appendix B: Alternate proof of the finite-rank lemma (Proposition \ref{prop:finite-rank-prop2})}\label{sec:appendix:alt-fr1}
\setcounter{equation}{0}

In this section, we provide an alternate proof of Proposition \ref{prop:finite-rank-prop2} using the Helffer-Sj\"ostrand functional calculus. This proof has the advantage of being slightly shorter than the proof presented in section \ref{sec:finite-rank-lemma1}. We briefly review the construction of the functional calculus and refer to \cite[section 2.2]{davies1} for a detailed discussion.

Suppose that $f \in C^\infty_0 (\R; \R)$. An {\em{almost-analytic extension of $f$ of order $n$}} is constructed as follows. Let $J \in C_c^\infty (\R; \R)$ be a real-valued function with $J |_{[-1,1]} = 1$ and ${\rm supp} ~J \subset [-2,2]$. We set $z = x + i y$, for $x,y \in \R$, and define $\tilde{f}$, for $n \in \N$, by
\beq\label{eq:aa-ext1}
\tilde{f}(z) := \sum_{k=0}^n \frac{1}{k!} f^{(k)}(x) (iy)^k J\left( \frac{y}{\langle x \rangle} \right),
\eeq
where $\langle x \rangle = (1+ \|x\|^2)^{\frac{1}{2}}$.
A straight-forward computation based on (\ref{eq:aa-ext1}) shows that every almost analytic extension $\tilde{f}$ of $f$ of order $n$ has the following properties:
\begin{description}
\item[{i.}] The derivatives of $f$ and $\tilde{f}$ on the real line agree:
$$
\frac{d^k \tilde{f}}{dx^k} (x) = \frac{d^k f}{dx^k}(x), ~~~\forall k \in \{0\} \cup \N, ~x \in \R.
$$

\item[{ii.}] The almost analytic property: Let $\frac{\partial}{\partial \overline{z}} := \frac{1}{2} \left( \frac{\partial }{\partial x} + i \frac{\partial }{\partial y} \right)$, then
the $\overline{z}$-derivative vanishes rapidly in a neighborhood of the real axis,
$$
\left| \frac{\partial \tilde{f}}{\partial \overline{z}} (z) \right| = \mathcal{O} (|y|^n| ), ~~~ y \rightarrow 0 .
$$
\end{description}

The almost analytic property $({\rm ii}.)$ makes possible the following definition. For every self-adjoint operator $H$ and $f \in C_c^\infty (\R; \R)$, the operator $f(H)$ may be represented in the form:
\beq\label{eq:hsj1}
 f(H) = \frac{1}{\pi} \int_{\C} \frac{\partial \tilde{f}}{\partial \overline{z}}(z) (H - z)^{-1} ~dx ~dy .
 \eeq
The integral on the right in \eqref{eq:hsj1} converges in the operator norm and defines a bounded, self-adjoint operator which is independent of the cut-off function $J$ and the order $n$ of $\tilde{f}$, see \cite[Lemma 2.2.4]{davies1}.

We also recall that for every compactly supported function $g \in C_c^1(\R^2)$, it is a basic consequence of the Cauchy-Pompeiu formula that for each $z_0 \in \C$, one has the representation
\beq\label{eq:pow1}
g(z_0) = \frac{1}{\pi} \int_{\C} \frac{\partial g}{\partial \overline{z}}(z) (z_0 - z)^{-1} ~dx ~dy
\eeq

\begin{proof}[Alternate proof of Proposition \ref{prop:finite-rank-prop2}]
1. By the same argument than at the beginning of the proof of Proposition \ref{prop:finite-rank-prop2}, it suffices to consider $f \in C_c^\infty (\R; \R)$. Let $\tilde{f}$ be an almost analytic extension of $f$ of order $n=2$ as in \eqref{eq:aa-ext1}. Then, for $\lambda \in [-C, C]$, application of the Helffer-Sj\"ostrand formula \eqref{eq:hsj1} yields
\beq\label{eq:hsj2}
{\rm Tr} (P_0 f(H_\lambda^{(\ell)}) P_0) = \frac{1}{\pi} \int_{\C} \frac{\partial \tilde{f}}{\partial \overline{z}}(z) {\rm Tr} ( P_0 ( H - z)^{-1} P_0 ) ~dx ~dy.
\eeq
By the second resolvent formula, for $z \in \C \backslash \R$, we have
\beq\label{eq:hsj3}
\frac{d}{d \lambda} {\rm Tr} (P_0 (H_\lambda^{(\ell)}-z)^{-1} P_0) = - {\rm Tr} ( P_0 ( H_\lambda^{(\ell)} - z)^{-1} P_\ell (H_\lambda^{(\ell)} - z)^{-1} P_0 ).
\eeq
Furthermore, for $z \in \C \backslash \R$, the fact that ${\rm Tr} P_0 = N$ and a standard resolvent estimate give
\beq\label{eq:hsj4}
| {\rm Tr} ( P_0 ( H_\lambda^{(\ell)} - z)^{-1} P_\ell (H_\lambda^{(\ell)} - z)^{-1} P_0 ) |
\leq \frac{N^2}{| \Im z|^2 } ~.
\eeq
It follows that if we choose an almost analytic extension of $f$ with order $n=2$, then the almost analytic property $({\rm ii}.)$ guarantees that
\beq\label{eq:hsj5}
 \left| \frac{\partial \tilde{f}}{\partial \overline{z}}(z) ~ {\rm Tr} ( P_0 ( H_\lambda^{(\ell)} - z)^{-1} P_\ell (H_\lambda^{(\ell)} - z)^{-1} P_0 ) \right| = \mathcal{O}(1) ~,
 \eeq
locally around $y = 0$.
Thus, differentiating under the integral sign yields
\beq\label{eq:hsj6}
\frac{d}{d \lambda} {\rm Tr} (P_0 f(H_\lambda^{(\ell)}) P_0) = - \frac{1}{\pi} \int_{\C} \frac{\partial \tilde{f}}{\partial \overline{z}}(z) ~ {\rm Tr} ( P_0 ( H_\lambda^{(\ell)} - z)^{-1} P_\ell (H_\lambda^{(\ell)} - z)^{-1} P_0 ) ~dx ~dy .
\eeq

\noindent
2. We recall the complex measure defined in \eqref{eq:measure2} and write, for $z \in \C \backslash \R$,
\beq\label{eq:hsj7}
 {\rm Tr} ( P_0 ( H_\lambda^{(\ell)} - z)^{-1} P_\ell (H_\lambda^{(\ell)} - z)^{-1} P_0 ) =
 \int_{\R \times \R} ~\frac{1}{s-z} ~\frac{1}{t-z} ~d \mu_{\lambda;\lambda} (s,t) .
 \eeq
For simplicity, we set $\mu := \mu_{\lambda;\lambda}$. By Fubini's Theorem, we then can re-write the right side of \eqref{eq:hsj6} as
\bea\label{eq:hsj8}
\lefteqn{ \frac{1}{\pi} \int_{\R \times \R} ~d \mu (s,t) \left\{ \int_{\C} \frac{\partial \tilde{f}}{\partial \overline{z}}(z) ~ \frac{1}{s-z} ~\frac{1}{t-z} ~dx ~dy \right\} } \nonumber \\
 &=& \int_{\R \times \R} ~d \mu (s,t) \chi_{\{s \neq t\}} \frac{1}{t-s} ~\frac{1}{\pi} \int_{\C} \frac{\partial \tilde{f}}{\partial \overline{z}}(z) \left( \frac{1}{s-z} - \frac{1}{t-z} \right) ~dx ~dy  \nonumber \\
  & & + \int_{\R \times \R} ~d \mu (s,t) \chi_{\{s = t\}}  ~\frac{1}{\pi} \int_{\C} \frac{\partial \tilde{f}}{\partial \overline{z}}(z) \frac{1}{(s-z)^2} ~dx ~dy.
\eea
We mention that the use of Fubini's Theorem is justified by \eqref{eq:hsj5} and the fact that $\tilde{f}$  has compact support in $\C$.
Applying \eqref{eq:pow1} to \eqref{eq:hsj8}, we obtain
\beq\label{eq:hsj9}
\frac{1}{\pi} \int_{\R \times \R} ~d \mu (s,t) \chi_{\{s \neq t\}} \frac{1}{t-s} (f(s) - f(t)) - \frac{1}{\pi} \int_{\R \times \R} ~d \mu (s,t) \chi_{\{s = t\}} f^\prime (s) .
\eeq

\noindent
3. In summary, combining \eqref{eq:hsj6} and  \eqref{eq:hsj9}, we conclude that
\beq\label{eq:hsj10}
\left| \frac{d}{d \lambda} {\rm Tr} (P_0 f(H_\lambda^{(\ell)}) P_0) \right|
\leq \int_{\R \times \R} ~ d | \mu| (s,t) ~| f^\prime( \xi_{s,t})| ,
\eeq
where $\xi_{s,t}$ satisfies $s \leq \xi_{s,t} \leq t$. Consequently, using estimate \eqref{eq:measure-total-var2}
in \eqref{eq:hsj10}, we obtain
\beq\label{eq:hsj11}
\left| \frac{d}{d \lambda} {\rm Tr} (P_0 f(H_\lambda^{(\ell)}) P_0) \right|
\leq 2 N^2 \| f^\prime \|_\infty.
\eeq
Finally, we observe that from \eqref{eq:hsj5} and the Dominated convergence Theorem, the right side of \eqref{eq:hsj6} is continuous in $\lambda$. In particular, the map
\beq\label{eq:hsj-cont1}
\lambda \rightarrow {\rm Tr} (P_0 f(H_\lambda^{(\ell)}) P_0)
 \eeq
is $C^1$ in $\lambda$. Thus the bounds \eqref{eq:hsj9} and \eqref{eq:hsj10} imply \eqref{eq:rank-N5}, which concludes our alternate proof of Proposition \ref{prop:finite-rank-prop2}.
\end{proof}

\begin{remark}
Since we only need an almost analytic extension of order $n=2$ in the above proof, we can relax the condition on $f$ to $f \in C^2_c(\R)$. For comparison, the proof presented in section \ref{sec:finite-rank-lemma1} requires slightly more, namely, that $\widehat{(f^\prime)} \in L^1 (\R)$, see \eqref{eq:fourier2}.
\end{remark}

%%%%%%%%%%%%%%%%%%%%%%%%%%%%%%%%%%%%%%%%%%%%%%%%%%%%%%%%%%%%%%%%%%%%%%%%%%%%%%%%%%%%%%%%%%%%%%%%%%%

\section{Appendix C: Nontangential limits of the Lyapunov exponent}\label{sec:appendix:nontangLE1}
\setcounter{equation}{0}

In this appendix, we explore the possibility of proving Proposition \ref{prop:le-quant1} under the condition that the DOSm is merely log-H\"older continuous,
\beq\label{eq:dosm-log1}
n([E-\epsilon, E+\epsilon]) \leq \frac{C}{\log \left(\frac{1}{\epsilon} \right)}, ~~~\forall 0 < \epsilon \leq \frac{1}{2},
\eeq
which is always satisfied by Craig and Simon \cite{craig-simon1}. Since the underlying probability measure $\nu$ will be fixed in this discussion, we simplify notation and denote the DOSm simply by $n$.

We consider replacing the left side of \eqref{eq:poisson-le2} by a statement about
\beq\label{eq:le-bv1}
\lim_{\epsilon \rightarrow 0^+ } \left| \frac{L(E+i\epsilon) - L(E)}{\Phi (\epsilon)} \right| ,
\eeq
for some strictly increasing function $0 \leq \Phi : [0, \epsilon_0] \rightarrow  \R$,
$\Phi \in C^1 ((0, \epsilon_0))$,  and with $\Phi' > 0$ on $(0, \epsilon_0)$.

We show that by following the same analysis as in the proof of Proposition \ref{prop:le-quant1}, but using \eqref{eq:dosm-log1} in place of $\beta$-H\"older continuity \eqref{eq:beta-cont-m1}, we obtain an upper bound that is infinite. We conclude that if a quantitative estimate on the boundary-value of the Lyapunov can be achieved with weaker conditions on the DOSm, another method of proof is needed.

We begin with the following lemma whose proof is an obvious modification of the proof in section \ref{subsec:le1}.
\begin{lemma}\label{lemma:le-general1}
For a function $\Phi$ as described above and all $E \in \R$, we have
\beq\label{eq:le-bv2}
\limsup_{\epsilon \rightarrow 0^+} \left\{ \frac{L(E+i\epsilon) - L(E)}{\Phi (\epsilon)} \right\} =
\limsup_{\epsilon \rightarrow 0^+} \left\{ \frac{P_n(E+i\epsilon)}{\Phi'(\epsilon)}  \right\}.
\eeq
\end{lemma}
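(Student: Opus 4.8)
The plan is to reproduce, for a general comparison function $\Phi$, the passage from \eqref{eq:poisson-le4} to \eqref{eq:poisson-le6} carried out in the proof of Proposition \ref{prop:le-quant1}, where the special choice $\Phi(\epsilon)=\epsilon^\beta$ was treated through the substitution $\epsilon=\eta^{1/\beta}$. First I would use that $\Phi$ is strictly increasing, lies in $C^1((0,\epsilon_0))$ with $\Phi'>0$, and satisfies $\Phi(0^+)=0$ (the natural normalization, since $\Phi$ plays the role of a modulus of continuity) to invoke the inverse function theorem: on a right neighborhood of $0$ the inverse $\Phi^{-1}$ is well defined, continuous up to $0$ with $\Phi^{-1}(0^+)=0$, and continuously differentiable with $(\Phi^{-1})'(\eta)=1/\Phi'(\Phi^{-1}(\eta))$. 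I then set up the auxiliary function
\[
f(\eta) := L\!\left(E + i\,\Phi^{-1}(\eta)\right),
\]
the direct analogue of the function $\eta \mapsto L(E+i\eta^{1/\beta})$ in \eqref{eq:le-aux1}.

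Next I would feed in the two analytic facts already available. By \eqref{eq:le2} the map $\epsilon\mapsto L(E+i\epsilon)$ extends continuously to $\epsilon=0$ with value $L(E)$, so $f$ is continuous at $\eta=0$ with $f(0)=L(E)$. By \eqref{eq:poisson-le1} one has $\tfrac{d}{d\epsilon}L(E+i\epsilon)=P_n(E+i\epsilon)$ for $\epsilon>0$, so the chain rule yields, for $\eta>0$,
\[
f'(\eta) = P_n\!\left(E + i\,\Phi^{-1}(\eta)\right)\,(\Phi^{-1})'(\eta) = \frac{P_n(E+i\epsilon)}{\Phi'(\epsilon)}, \qquad \epsilon = \Phi^{-1}(\eta).
\]
Applying the mean value theorem to $f$ on $[0,\eta]$ then produces, for every $\eta>0$, a point $\eta_0\in(0,\eta)$ with $\frac{f(\eta)-f(0)}{\eta}=f'(\eta_0)$; writing $\eta=\Phi(\epsilon)$ and $\eta_0=\Phi(\epsilon_0)$ with $0<\epsilon_0<\epsilon$, this is exactly the generalization of \eqref{eq:poisson-le4},
\[
\frac{L(E+i\epsilon)-L(E)}{\Phi(\epsilon)} = \frac{P_n(E+i\epsilon_0)}{\Phi'(\epsilon_0)}.
\]
Passing to $\limsup_{\epsilon\to0^+}$ should then give the asserted identity, just as \eqref{eq:poisson-le4} gave \eqref{eq:poisson-le6}.

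I expect the only real obstacle to lie in this last limit step, i.e. in upgrading the pointwise Cauchy--mean-value identity to an equality of the two $\limsup$'s. The inequality
\[
\limsup_{\epsilon\to0^+}\frac{L(E+i\epsilon)-L(E)}{\Phi(\epsilon)} \ \le\ \limsup_{\epsilon\to0^+}\frac{P_n(E+i\epsilon)}{\Phi'(\epsilon)}
\]
is immediate, since each value of the left-hand quotient equals $f'(\eta_0)$ at an interior point $\eta_0\to0$. The reverse inequality is the delicate direction --- it does not follow from the mean value theorem alone --- and my intention would be to obtain it by the very argument that already establishes \eqref{eq:poisson-le6} in the case $\Phi(\epsilon)=\epsilon^\beta$, which does not appear to use the power-law form of the substitution. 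I would also note, for the purposes of the appendix, that the conclusion to be drawn --- that replacing \eqref{eq:beta-cont-m1} by the log-H\"older bound \eqref{eq:dosm-log1} forces the right-hand $\limsup$ to be infinite, so that the method yields only a vacuous bound --- already follows from the easy (forward) inequality alone.
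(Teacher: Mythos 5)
Your proposal is, step for step, the paper's own proof of Lemma \ref{lemma:le-general1}: the same substitution $\eta=\Phi(\epsilon)$, the same auxiliary function $f(\eta)=L(E+i\Phi^{-1}(\eta))$ generalizing \eqref{eq:le-aux1}, continuity of $f$ at $\eta=0$ from \eqref{eq:le2}, the derivative via \eqref{eq:poisson-le1} and the chain rule, the mean value theorem, and the change of variables back to $\epsilon$, which is exactly \eqref{eq:le-bv5}; the paper then concludes by ``taking $\epsilon\to 0^+$.''

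Precisely for that reason, the obstacle you single out is genuine, and it is not resolved by the paper either. The mean value identity \eqref{eq:le-bv5} only yields the forward inequality
$\limsup_{\epsilon\to0^+}\bigl\{ (L(E+i\epsilon)-L(E))/\Phi(\epsilon)\bigr\}
\le \limsup_{\epsilon\to0^+}\bigl\{ P_n(E+i\epsilon)/\Phi'(\epsilon)\bigr\}$,
since the intermediate points $\tilde{\epsilon}(\epsilon)\in(0,\epsilon)$ need not sweep out all small parameters. The reverse inequality is not a formality: for a $C^1$ function the limsup of the derivative can strictly exceed the limsup of the difference quotients (a nonnegative derivative with tall, thin spikes), so additional structure of $P_n$ must be invoked. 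Moreover, the place you propose to take that structure from is empty: in the paper, \eqref{eq:poisson-le6} is deduced from \eqref{eq:poisson-le4} by exactly the same limsup-taking (Remark \ref{remark:generalizKontani}, part (ii)); there is no independent argument for the reverse direction anywhere in section \ref{subsec:le1}, so your deferral is circular. What the Poisson structure does provide --- positivity of $P_n$ and monotonicity of $\epsilon\mapsto\epsilon\,P_n(E+i\epsilon)$ --- is two-sided comparability of the two limsups up to multiplicative constants (at least for $\Phi(\epsilon)=\epsilon^\beta$), but not the asserted equality; indeed, for general positive measures (e.g.\ suitably separated near-atoms accumulating at $E$) the two limsups can differ, so equality cannot follow from the Thouless relation and the mean value theorem alone.

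Your closing observation is therefore the decisive one, and it is correct: everywhere the lemma is actually used --- in Proposition \ref{prop:le-quant1}, where the $\beta$-continuity bound is applied to the right-hand side, and in Lemmas \ref{lemma:le-bound1}--\ref{lemma:le-bound3}, which show that the right-hand side is $+\infty$ when estimated using only \eqref{eq:dosm-log1} --- only the forward inequality enters. So your argument, read as establishing ``$\le$'' (which it does rigorously, exactly as the paper's does), supports every application in the appendix; the equality claimed in \eqref{eq:le-bv2} is stronger than what either your proof or the paper's proof actually establishes.
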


\begin{proof}
We set $\eta := \Phi(\epsilon)$ so that $\epsilon = \Phi^{-1}(\eta)$. We consider the function $f$ defined by
\beq\label{eq:le-bv3}
f(\eta) := L(E + i \Phi^{-1}(\eta))
\eeq
on $[0, \eta_0)$, where $\eta_0 = \Phi(\epsilon_0)$. By the Mean Value Theorem, given $0 < \eta < \eta_0$, there exists $\tilde{\eta} = \tilde{\eta}(\eta) \in (0,\eta)$ so that
\bea\label{eq:le-bv4}
 \frac{L(E+i\Phi^{-1}(\eta)) - L(E)}{\eta} & = & \frac{f(\eta) - f(0)}{\eta}  =  f'(\tilde{\eta}) \nonumber \\
  & =& P_n( E + i \Phi^{-1}(\tilde{\eta})) ( \Phi^{-1})' (\tilde{\eta}) \nonumber \\
  &=& P_n( E + i \Phi^{-1}(\tilde{\eta})) \left( \frac{1}{ \Phi^\prime (\Phi^{-1}(\tilde{\eta}))} \right).
   \eea
Hence, changing coordinates $\eta = \Phi(\epsilon)$, we conclude that for each $0 < \epsilon < \epsilon_0$, there exists $\tilde{\epsilon} \in (0, \epsilon)$ so that
\beq\label{eq:le-bv5}
 \frac{L(E+i\epsilon) - L(E)}{\Phi (\epsilon)} =  P_n( E + i \tilde{\epsilon}) \left( \frac{1}{ \Phi^\prime (\tilde{\epsilon})} \right).
   \eeq
 Upon taking $\epsilon \rightarrow 0^+$ in \eqref{eq:le-bv5}, we establish the claim
  \eqref{eq:le-bv2}.
\end{proof}

We follow the proofs of section \ref{subsec:le1} replacing the assumption that the DOSm is $\beta$-H\"older continuous with the log-H\"older continuous property (\ref{eq:dosm-log1}) and $\epsilon^\beta$ by an appropriate function $\Phi(\epsilon)$ satisfying the conditions above. %. will show that log-H\"older continuity of the DOSm does not suffice for Proposition \ref{prop:le-quant1}.
We provide evidence that, {\em{only relying on the {\underline{upper}} bound in (\ref{eq:dosm-log1})}}, there does not exist a nontrivial function $\Phi$ so that, using the methods of section \ref{subsec:le1}, we have
\beq\label{eq:le-bv7}
\limsup_{\epsilon \rightarrow 0^+} \left\{ \frac{P_n( E + i {\epsilon})}{ \Phi^\prime ({\epsilon})} \right\} < \infty ~.
\eeq
Here, the word ``nontrivial'' excludes the choice $\Phi(\epsilon) = | L(E+i \epsilon) - L(E)|$ for which one trivially has
$$
| L(E+i \epsilon) - L(E)| \leq  d_E \Phi (\epsilon) ~.
$$

\begin{lemma}\label{lemma:le-bound1}
Assuming that the DOSm $n$ is log-H\"older continuous, for all $0 < \epsilon < \frac{1}{2}$, we have
\beq\label{eq:le-bv8}
\epsilon ~P_n (E + i \epsilon) \leq \epsilon + C \int_{\epsilon}^1 \frac{d \alpha}{\log \left( \frac{1}{\epsilon} \left[ \frac{\alpha}{1- \alpha} \right]^{1/2} \right)} ~.
\eeq
\end{lemma}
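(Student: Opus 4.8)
The plan is to follow the integration-by-parts-and-substitution strategy already used for \eqref{eq:poisson-le5} in the proof of Proposition \ref{prop:le-quant1}, but to replace the power bound $M(\delta)\le d_\beta\delta^\beta$ by the log-Hölder bound \eqref{eq:dosm-log1}, which forces us to isolate the region where that bound is unavailable. Throughout write $M(\delta):=n([E-\delta,E+\delta])$, so that, after pushing $n$ forward under $E'\mapsto|E-E'|$, the Poisson transform reads $P_n(E+i\epsilon)=\int_0^\infty\frac{\epsilon}{\delta^2+\epsilon^2}\,dM(\delta)$ and hence
\[
\epsilon\,P_n(E+i\epsilon)=\int_0^\infty\frac{\epsilon^2}{\delta^2+\epsilon^2}\,dM(\delta).
\]
First I would record that log-Hölder continuity \eqref{eq:dosm-log1} forces $n(\{E\})=0$, so $M(0^+)=0$; this removes any atom and makes the subsequent integration by parts clean.

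Next I would integrate by parts. Since $g(\delta):=\epsilon^2/(\delta^2+\epsilon^2)$ satisfies $g(0)=1$, $g(\infty)=0$, together with $M(0^+)=0$ and $M\le 1$, the boundary terms vanish and
\[
\epsilon\,P_n(E+i\epsilon)=\int_0^\infty\frac{2\delta\,\epsilon^2}{(\delta^2+\epsilon^2)^2}\,M(\delta)\,d\delta.
\]
I would then perform the substitution $\alpha=\epsilon^2/(\delta^2+\epsilon^2)$, which is tailored to produce exactly the argument of the logarithm appearing in the claim: one computes $d\alpha=-\frac{2\delta\epsilon^2}{(\delta^2+\epsilon^2)^2}\,d\delta$ and $\delta=\epsilon\sqrt{(1-\alpha)/\alpha}$, whence $1/\delta=\tfrac1\epsilon\sqrt{\alpha/(1-\alpha)}$, while $\delta$ runs over $(0,\infty)$ as $\alpha$ runs over $(1,0)$. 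The integral then collapses to
\[
\epsilon\,P_n(E+i\epsilon)=\int_0^1 M\!\left(\epsilon\sqrt{\tfrac{1-\alpha}{\alpha}}\right)d\alpha.
\]

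The decisive step is the choice of splitting point $\alpha=\epsilon$. On $[0,\epsilon]$ I would simply use $M\le 1$ to bound that part of the integral by $\epsilon$, which produces the first term on the right of \eqref{eq:le-bv8}. On $[\epsilon,1]$ the map $\alpha\mapsto\delta(\alpha)=\epsilon\sqrt{(1-\alpha)/\alpha}$ is decreasing, so $\delta(\alpha)\le\delta(\epsilon)=\sqrt{\epsilon(1-\epsilon)}\le\tfrac12$, the last inequality holding because $\epsilon(1-\epsilon)$ is maximized at $\epsilon=\tfrac12$. This is precisely the condition under which \eqref{eq:dosm-log1} is applicable, giving $M(\delta(\alpha))\le C/\log(1/\delta(\alpha))=C/\log\!\big(\tfrac1\epsilon[\alpha/(1-\alpha)]^{1/2}\big)$ for all $\alpha\in[\epsilon,1]$; integrating over $[\epsilon,1]$ yields the second term and completes the proof.

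I expect the main obstacle to be bookkeeping rather than anything deep: one must verify that the split point $\alpha=\epsilon$ keeps $\delta$ below the threshold $1/2$ (so that the log-Hölder estimate is legitimate) and that the resulting integrand $1/\log(\tfrac1\epsilon[\alpha/(1-\alpha)]^{1/2})$ is positive and finite on $[\epsilon,1]$ — which holds since $\delta(\alpha)\le\tfrac12<1$ there, so the logarithm is bounded below by $\log 2$. A secondary point to handle carefully is the justification of the integration by parts (finiteness and vanishing of the boundary terms), for which the no-atom consequence of \eqref{eq:dosm-log1} is exactly what is needed.
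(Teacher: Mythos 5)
Your proof is correct and essentially the same as the paper's: both establish the identity $\epsilon P_n(E+i\epsilon)=\int_0^1 M\bigl(\epsilon\left[\tfrac{1-\alpha}{\alpha}\right]^{1/2}\bigr)\,d\alpha$ and then split at $\alpha=\epsilon$, using $M\le 1$ on $[0,\epsilon]$ and the log-H\"older bound on $[\epsilon,1]$ (justified by $r_\epsilon(\epsilon)=\sqrt{\epsilon(1-\epsilon)}\le\tfrac12$). The only difference is cosmetic: the paper obtains the identity via the layer-cake representation and Fubini, while you obtain it by integration by parts (using $n(\{E\})=0$) followed by the substitution $\alpha=\epsilon^2/(\delta^2+\epsilon^2)$, which is the same computation in different clothing.
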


\begin{proof}
Let $f_\epsilon (x) = \epsilon^2 ( \epsilon^2 + x^2)^{-1}$, for $\epsilon > 0$. Observe that for every $0 < \epsilon$, $f_\epsilon$ is even and decreasing with $0 \leq f_\epsilon \leq 1$. Thus, setting
$$
r_\epsilon(\alpha) := \epsilon \left[ \frac{1- \alpha}{\alpha}   \right]^{\frac{1}{2}} \in [0, \infty].
$$
we compute by the layer-cake representation,
\bea\label{eq:le-bv9}
 \epsilon P_n ( E+i\epsilon) %& = & \int_\R ~\frac{\epsilon^2}{\epsilon^2 + (x-E)^2} ~dn(x)  \nonumber \\
  & = & \int_\R ~\left\{ \int_0^\infty ~\chi_{\{ f_\epsilon > \alpha \}} (x) ~d \alpha \right\} ~dn_E(x) \nonumber \\
  & = & \int_0^1 n_E ((-r_\epsilon(\alpha), r_\epsilon (\alpha)) ~d \alpha ,
\eea
where we abbreviated $d n_E(x) := dn(x + E)$.

To use the log-H\"older continuity in \eqref{eq:dosm-log1}, we divide the integral on the right of \eqref{eq:le-bv9} into $[0, \epsilon] \cup [\epsilon, 1]$ and obtain
\begin{equation} \label{eq:le-bv12}
\epsilon P_n ( E+i\epsilon)  \leq  \epsilon + \int_\epsilon^1 n_E ((-r_\epsilon(\alpha), r_\epsilon (\alpha))
 ~d \alpha \leq \epsilon +  C \int_\epsilon^1 \frac{d \alpha}{\log \left( \frac{1}{\epsilon} \left[\frac{\alpha}{1 - \alpha} \right]^{\frac{1}{2}}  \right)} ~\mbox{.}
%  & \leq \epsilon +  C \int_\epsilon^1 \frac{d \alpha}{\log \left( \frac{1}{r_\alpha} \right)} \nonumber \\
%  & \leq  \epsilon +  C \int_\epsilon^1 \frac{d \alpha}{\log \left( \frac{1}{\epsilon} \left[\frac{\alpha}{1 - \alpha} \right]^{\frac{1}{2}}  \right)} .
\end{equation}
Notice that $r_\epsilon(\epsilon) \leq \frac{1}{2}$, whence (\ref{eq:dosm-log1}) applies on the interval $[\epsilon, 1]$, which results in the estimate of the integral in (\ref{eq:le-bv12}).
\end{proof}

Lemma \ref{lemma:le-bound1} and the log-H\"older continuity of the DOSm imply
\beq\label{eq:le-bv13}
 \limsup_{\epsilon \rightarrow 0^+} \left\{ \frac{ P_n( E + i {\epsilon})}{ \Phi^\prime ({\epsilon})} \right\} \leq \limsup_{\epsilon \rightarrow 0^+} \left\{ \frac{1}{\Phi^\prime ({\epsilon})} +  \frac{C}{\epsilon \Phi'(\epsilon)} ~ \int_\epsilon^1 \frac{d \alpha}{\log \left( \frac{1}{\epsilon} \left[\frac{\alpha}{1 - \alpha} \right]^{\frac{1}{2}}  \right)} \right\}  .
   \eeq

\begin{lemma}\label{lemma:le-bound2}
There exists $\epsilon_0 > 0$ such that for all $0 < \epsilon < \epsilon_0$, we have
\beq\label{eq:le-bv14}
\int_\epsilon^1 \frac{d \alpha}{\log \left( \frac{1}{\epsilon} \left[\frac{\alpha}{1 - \alpha} \right]^{\frac{1}{2}}  \right)} \geq \frac{1}{100} \frac{1}{\log \left( \frac{1}{\epsilon} \right)} .
\eeq
\end{lemma}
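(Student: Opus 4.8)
The plan is to exploit the fact that the integrand is nonnegative and to bound the integral from below by restricting to a well-chosen subinterval of $[\epsilon,1]$ on which the denominator can be controlled from \emph{above}. Writing $L:=\log\left(\frac{1}{\epsilon}\right)$ and using the identity $\log\left(\frac{1}{\epsilon}\left[\frac{\alpha}{1-\alpha}\right]^{1/2}\right)=L+\frac{1}{2}\log\left(\frac{\alpha}{1-\alpha}\right)$, one sees the denominator is an increasing function of $\alpha$ on $(\epsilon,1)$ that blows up as $\alpha\to 1^{-}$. Hence a crude global estimate would only yield an \emph{upper} bound on the integrand and go the wrong way; the key idea is therefore to keep $\alpha$ bounded away from $1$.

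First I would identify the range of $\alpha$ on which $L+\frac{1}{2}\log\left(\frac{\alpha}{1-\alpha}\right)\le 2L$. Solving $\frac{1}{2}\log\left(\frac{\alpha}{1-\alpha}\right)\le L$ gives $\frac{\alpha}{1-\alpha}\le \frac{1}{\epsilon^{2}}$, i.e. $\alpha\le \frac{1}{1+\epsilon^{2}}$. Imposing additionally $\alpha\ge\frac{1}{2}$ forces $\frac{\alpha}{1-\alpha}\ge 1$, so the denominator is positive there (indeed $\ge L$). I would therefore work on the interval $\left[\frac{1}{2},\frac{1}{1+\epsilon^{2}}\right]$, on which the denominator lies in $[L,2L]$, noting that this interval is contained in $[\epsilon,1]$ for every $\epsilon<\frac{1}{2}$.

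Since the integrand is nonnegative on all of $[\epsilon,1]$ and bounded below by $\frac{1}{2L}$ on this subinterval, I would then estimate
\begin{equation*}
\int_\epsilon^1 \frac{d\alpha}{\log\left(\frac{1}{\epsilon}\left[\frac{\alpha}{1-\alpha}\right]^{1/2}\right)}\ge \frac{1}{2L}\left(\frac{1}{1+\epsilon^{2}}-\frac{1}{2}\right)=\frac{1}{2L}\cdot\frac{1-\epsilon^{2}}{2(1+\epsilon^{2})}.
\end{equation*}
For $\epsilon<\frac{1}{2}$ the factor $\frac{1-\epsilon^{2}}{2(1+\epsilon^{2})}$ is decreasing and exceeds its value $\frac{3}{10}$ at $\epsilon=\frac{1}{2}$, so the right-hand side is at least $\frac{3}{20L}\ge\frac{1}{100L}$, which is precisely \eqref{eq:le-bv14} with $\epsilon_0=\frac{1}{2}$.

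The argument is entirely elementary; the only point requiring genuine care—and the step I would regard as the crux—is the choice of subinterval, since the monotonicity of the denominator must be used to bound it from \emph{above} (via the threshold $\alpha\le\frac{1}{1+\epsilon^{2}}$) rather than below. The generous target constant $\frac{1}{100}$ leaves ample slack, so no sharp estimates are needed and any convenient subinterval bounded away from $\alpha=1$ would serve equally well.
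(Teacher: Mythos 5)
Your proof is correct, and the arithmetic checks out: on $\left[\tfrac12,\tfrac{1}{1+\epsilon^2}\right]$ the denominator $L+\tfrac12\log\left(\tfrac{\alpha}{1-\alpha}\right)$ is indeed trapped in $[L,2L]$ with $L=\log(1/\epsilon)$, the interval has length $\tfrac{1-\epsilon^2}{2(1+\epsilon^2)}\geq\tfrac{3}{10}$ for $\epsilon<\tfrac12$, and $\tfrac{3}{20}>\tfrac1{100}$, so $\epsilon_0=\tfrac12$ works. The execution differs from the paper's. The paper exploits the same monotonicity but through the \emph{integrand} rather than the denominator: it partitions $[\epsilon,1]$ by the mesh points $x_j^{(\epsilon)}=j\epsilon$, bounds the integral below by the right-endpoint Riemann sum of the decreasing integrand, estimates every surviving term by the value at $\alpha=1-\epsilon$ (where the denominator is at most $\tfrac32\log(1/\epsilon)$, since $\tfrac{1}{\epsilon}\left[\tfrac{1-\epsilon}{\epsilon}\right]^{1/2}\leq\epsilon^{-3/2}$), and concludes using $\epsilon\left\lfloor\tfrac1\epsilon\right\rfloor\to 1$. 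Your restriction to a fixed subinterval where the denominator is within a factor $2$ of $\log(1/\epsilon)$ buys a shorter argument with an explicit $\epsilon_0$ and no floor-function bookkeeping (the paper's $\epsilon_0$ is left implicit); the paper's route buys a better asymptotic constant, $\tfrac23$ versus your $\tfrac{3}{20}$, which is irrelevant here since only $\tfrac1{100}$ is needed. One line worth adding to your write-up: the nonnegativity of the integrand on the discarded piece $[\epsilon,\tfrac12]$, which you invoke when dropping it, needs the endpoint evaluation $\tfrac{1}{\epsilon}\left[\tfrac{\epsilon}{1-\epsilon}\right]^{1/2}=\left(\epsilon(1-\epsilon)\right)^{-1/2}>1$ together with the monotonicity of the denominator that you already established.
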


\begin{proof}
For $0 < \epsilon \leq 1/2$, let $h_\epsilon (\alpha) := \log \left( \frac{1}{\epsilon} \left[\frac{\alpha}{1 - \alpha} \right]^{\frac{1}{2}}  \right)$. The function $h_\epsilon \geq 0$ is continuous on $[\epsilon, 1)$, strictly decreasing, with
$$
h_\epsilon (\epsilon) = \log \left( \frac{1}{\sqrt { \epsilon }} \frac{1}{\sqrt{1 - \epsilon}}   \right) ~\mbox{, }
h_\epsilon (1) = 0.
$$
We define a sequence $x_j^{(\epsilon)} \in [\epsilon, 1]$ by $x_j^{(\epsilon)} := j \epsilon$, for $1 \leq j \leq \lfloor \frac{1}{\epsilon} \rfloor =: N_\epsilon + 1$, so that $x_{N_\epsilon+2}^{(\epsilon)} = 1$. Then, since $h_\epsilon \geq 0$ is strictly decreasing on $[\epsilon, 1]$, we have
\bea\label{eq:le-bv151}
\int_\epsilon^1 \frac{d \alpha}{\log \left( \frac{1}{\epsilon} \left[\frac{\alpha}{1 - \alpha} \right]^{\frac{1}{2}}  \right)} & \geq & \sum_{j=1}^{N_\epsilon - 1} h_\epsilon (x_{j+1}^{(\epsilon)})(x_{j+1}^{(\epsilon)} - x_{j}^{(\epsilon)}) \nonumber \\
 & & + h_\epsilon (x_{N_\epsilon+1}^{(\epsilon)}) \epsilon + h_\epsilon (x_{N_\epsilon + 2}^{(\epsilon)})(1-x_{N_\epsilon+1}^{(\epsilon)})
   \nonumber \\
& \geq & \epsilon \cdot \sum_{j=1}^{N_\epsilon - 1} h_\epsilon (x_{j+1}^{(\epsilon)}) \nonumber \\
 & \geq & \epsilon ( N_\epsilon - 1) h_\epsilon (1 - \epsilon) \geq
% & = & \epsilon \left(  \left\lfloor \frac{1}{\epsilon} \right\rfloor - 1 \right) \cdot \frac{1}{ \log
% \left( \frac{1}{\epsilon} \sqrt{\frac{1 - \epsilon}{\epsilon}} \right)}  \nonumber \\
% & = & \epsilon \left(  \left\lfloor \frac{1}{\epsilon} \right\rfloor -1 \right) \cdot \frac{1}{ \log \left( \frac{1}{\epsilon^{\frac{3}{2}}} \sqrt{1 - \epsilon} \right) } \nonumber \\
% & \geq &
  \frac{2\epsilon}{3} \left(  \left\lfloor \frac{1}{\epsilon} \right\rfloor -1 \right) \cdot \frac{1}{ \log \left( \frac{1}{\epsilon}  \right) } ~,
 \eea
where we used the facts that $h_\epsilon (x_{N_\epsilon + 2}^{(\epsilon)}) = 0$, $h_\epsilon (x_{N_\epsilon + 1}^{(\epsilon)}) \geq 0$, and that $|x_{j+1}^{(\epsilon)} - 1 | \geq \epsilon$, for all $1 \leq j \leq N_\epsilon - 1$. In conclusion, we obtain the claim since $\lim_{\epsilon \rightarrow 0^+} \epsilon \left\lfloor \frac{1}{\epsilon} \right\rfloor  = 1$.
% Result \eqref{eq:le-bv15} implies the claim since
%\begin{equation}
%\lim_{\epsilon \rightarrow 0^+} \epsilon \left\lfloor \frac{1}{\epsilon} \right\rfloor  = 1 ~.
%\end{equation}
\end{proof}

\begin{lemma}\label{lemma:le-bound3}
For all functions $\Phi$ as defined above, we have %There exists $\epsilon_0 > 0$ such that for all $0 < \epsilon < \epsilon_0$, we have
\beq\label{eq:le-bv15}
\limsup_{\epsilon \rightarrow 0^+} \frac{1}{\epsilon \Phi^\prime (\epsilon)} \int_\epsilon^1 \frac{d \alpha}{\log \left( \frac{1}{\epsilon} \left[\frac{\alpha}{1 - \alpha} \right]^{\frac{1}{2}}  \right)} = + \infty.
 \eeq
\end{lemma}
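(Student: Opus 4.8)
The plan is to reduce the claim to a statement purely about the growth of $\Phi$, and then derive a contradiction from the hypothesis that $\Phi$ is bounded. First I would invoke Lemma \ref{lemma:le-bound2}: for all sufficiently small $\epsilon > 0$ it gives the lower bound
\beq
\frac{1}{\epsilon \Phi^\prime (\epsilon)} \int_\epsilon^1 \frac{d \alpha}{\log \left( \frac{1}{\epsilon} \left[\frac{\alpha}{1 - \alpha} \right]^{\frac{1}{2}}  \right)} \geq \frac{1}{100} \cdot \frac{1}{\epsilon \Phi^\prime(\epsilon) \log \left( \frac{1}{\epsilon}\right)} ~\mbox{,}
\eeq
where we used that $\epsilon \Phi^\prime(\epsilon) > 0$. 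Taking $\limsup_{\epsilon \to 0^+}$, it therefore suffices to prove that $\limsup_{\epsilon \to 0^+} \left( \epsilon \Phi^\prime(\epsilon) \log(1/\epsilon) \right)^{-1} = + \infty$, which, since the argument is strictly positive, is equivalent to the statement
\beq \label{eq:liminfgoal}
\liminf_{\epsilon \to 0^+} ~\epsilon \, \Phi^\prime(\epsilon) \log \left( \frac{1}{\epsilon} \right) = 0 ~\mbox{.}
\eeq

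To establish \eqref{eq:liminfgoal}, I would argue by contradiction. Suppose the $\liminf$ equals some $c > 0$ (possibly $+\infty$). Then there exists $\epsilon_1 \in (0, \min\{\epsilon_0, 1\})$ such that $\epsilon \, \Phi^\prime(\epsilon) \log(1/\epsilon) \geq c/2$, i.e. $\Phi^\prime(\epsilon) \geq \frac{c/2}{\epsilon \log(1/\epsilon)}$, for all $0 < \epsilon < \epsilon_1$. Integrating this inequality over $[\epsilon, \epsilon_1]$ and using the elementary antiderivative $\frac{d}{dt}\left[ - \log \log(1/t) \right] = \frac{1}{t \log(1/t)}$, valid for $0 < t < 1$, I obtain
\beq
\Phi(\epsilon_1) - \Phi(\epsilon) = \int_\epsilon^{\epsilon_1} \Phi^\prime(t) \, dt \geq \frac{c}{2} \int_\epsilon^{\epsilon_1} \frac{dt}{t \log(1/t)} = \frac{c}{2} \left[ \log \log \left( \frac{1}{\epsilon}\right) - \log \log \left( \frac{1}{\epsilon_1} \right) \right] ~\mbox{.}
\eeq
As $\epsilon \to 0^+$ the right-hand side tends to $+\infty$, whereas the left-hand side is bounded above by $\Phi(\epsilon_1)$ because $\Phi \geq 0$; this contradiction proves \eqref{eq:liminfgoal} and hence the lemma.

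The only genuinely substantive ingredient is the logarithmic divergence of $\int_\epsilon^{\epsilon_1} \frac{dt}{t \log(1/t)} \sim \log\log(1/\epsilon)$, which is what makes the hypothesized lower bound $\Phi^\prime(\epsilon) \gtrsim (\epsilon \log(1/\epsilon))^{-1}$ incompatible with $\Phi$ staying bounded. I do not expect a real obstacle here; the main points requiring care are purely bookkeeping: choosing $\epsilon_1 < 1$ so that $\log(1/t) > 0$ on the interval of integration, and phrasing the contradiction argument so that it also covers the degenerate case $\liminf = +\infty$ (in which case any finite positive constant may be used in place of $c$). Conceptually, this is exactly the message of the appendix: the one-sided log-H\"older bound \eqref{eq:dosm-log1} forces, via Lemma \ref{lemma:le-bound2}, an integral of size at least $(\log(1/\epsilon))^{-1}$, whose decay is too slow to be absorbed by the derivative of any admissible $\Phi$, so that the method of Proposition \ref{prop:le-quant1} cannot yield a finite nontangential bound.
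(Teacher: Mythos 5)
Your proof is correct and follows essentially the same route as the paper's: assume the contrary, invoke Lemma \ref{lemma:le-bound2} to get $\Phi^\prime(\epsilon) \gtrsim \bigl(\epsilon \log(1/\epsilon)\bigr)^{-1}$ near $0$, integrate to produce the divergent $\log\log(1/\epsilon)$ term, and contradict the finiteness of $\Phi$. The only cosmetic differences are that you first recast the claim as $\liminf_{\epsilon \to 0^+} \epsilon\,\Phi^\prime(\epsilon)\log(1/\epsilon) = 0$ before negating it, and that you close the contradiction using $\Phi \geq 0$ rather than $\lim_{\epsilon \to 0^+}\Phi(\epsilon) = \Phi(0) < \infty$ as the paper does; both variants are sound.
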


\begin{proof}
Let us suppose to the contrary that there exists a function $\Phi$ so that the left side of \eqref{eq:le-bv15} is finite. Letting $\epsilon_0>0$ be as in Lemma \ref{lemma:le-bound2}, there thus exists $C_1 > 0$ and $0 < \tilde{\epsilon}_0 < \epsilon_0$ so that
for all $0 < \epsilon < \tilde{\epsilon}_0$, we have
\beq\label{eq:le-bv16}
%\limsup_{\epsilon \rightarrow 0^+}
\frac{C_1}{\epsilon} \int_\epsilon^1 \frac{d \alpha}{\log \left( \frac{1}{\epsilon} \left[\frac{\alpha}{1 - \alpha} \right]^{\frac{1}{2}}  \right)} \leq \Phi^\prime (\epsilon).
\eeq
By Lemma \ref{lemma:le-bound2}, we have for all $0 < \epsilon < \tilde{\epsilon}_0$,
\beq\label{eq:le-bv17}
\Phi^\prime (\epsilon) \geq
\frac{C_1}{100 \epsilon} \frac{1}{\log \left( \frac{1}{\epsilon} \right)}
 \eeq
Since $\Phi \in C^1 ((0, \epsilon_0))$, and as $(0, \tilde{\epsilon}_0] \subset (0, \epsilon_0)$, we find that for all $0 < \epsilon < \tilde{\epsilon}_0$,
\bea\label{eq:le-bv18}
\Phi (\tilde{\epsilon}_0) & \geq & \Phi (\epsilon) + \frac{C_1}{100} \int_{\epsilon}^{\tilde{\epsilon}_0} ~ \frac{dt}{ t \log \left( \frac{1}{t} \right)}  \nonumber \\
% &=& \Phi (\epsilon) + \frac{C_1}{100} \left( - \log \left( - \log \left( \frac{1}{t} \right) \right) \right]_{\epsilon}^{\tilde{\epsilon}_0}   \nonumber \\
&=& \Phi (\epsilon) + \frac{C_1}{100} \left\{  \log \left(  \log \left( \frac{1}{\epsilon} \right) \right)
 -    \log \left(  \log \left( \frac{1}{\tilde{\epsilon}_0} \right) \right) \right\} ~,
 \eea
Now, by hypothesis $\lim_{\epsilon \rightarrow 0^+} \Phi (\epsilon) = \Phi (0) < \infty$ but
$$
\lim_{\epsilon \rightarrow 0^+} \log \left(  \log \left( \frac{1}{\epsilon} \right) \right) = +\infty ~\mbox{,}
$$
so we conclude that $\Phi (\tilde{\epsilon}_0) = + \infty$, a contradiction.\end{proof}

We note that this result is not a contradiction to the trivial statement that
$$
\lim_{\epsilon \rightarrow 0^+} \frac{P_n(E+ i \epsilon)}{\Phi^\prime (\epsilon)} = 1 < + \infty ,
$$
if $\Phi(\epsilon) = |L(E + i \epsilon) - L(E)|$, since \eqref{eq:le-bv13} is an upper bound. But, this upper bound is the best one can obtain {\em{using only the upper bound in (\ref{eq:dosm-log1})}}.

In this context, we also note that the decomposition of the integral in \eqref{eq:le-bv12} could be performed ``more carefully,'' replacing $\epsilon$ in the decomposition by a general decreasing function $\epsilon \leq \Theta(\epsilon) \leq 1$ with $\lim_{\epsilon \to 0^+} \Theta(\epsilon) = 0$, %Proceeding along the same lines as before, this still leads to a diverging integral, very much alike the one in Lemma \ref{lemma:le-bound3}.
thereby resulting in the upper bound
\begin{equation} \label{eq:keyestimfinalequ}
 \limsup_{\epsilon \rightarrow 0^+} \left\{ \frac{ P_n( E + i {\epsilon})}{ \Phi^\prime ({\epsilon})} \right\} \leq \limsup_{\epsilon \rightarrow 0^+} \left\{ \frac{\Theta(\epsilon)}{\epsilon \cdot \Phi^\prime ({\epsilon})} +  \frac{C}{\epsilon \Phi'(\epsilon)} ~ \int_{\Theta(\epsilon)}^1 \frac{d \alpha}{\log \left( \frac{1}{\epsilon} \left[\frac{\alpha}{1 - \alpha} \right]^{\frac{1}{2}}  \right)} \right\}  ~\mbox{.}
\end{equation}
Similar to Lemma \ref{lemma:le-bound2} one then shows that for all sufficiently small $\epsilon$, one has the lower bound
\begin{equation}
\int_{\Theta(\epsilon)}^1 \frac{d \alpha}{\log \left( \frac{1}{\epsilon} \left[\frac{\alpha}{1 - \alpha} \right]^{\frac{1}{2}}  \right)} \geq \frac{1}{100} \frac{1}{\log \left( \frac{\sqrt{\Theta(\epsilon)}}{\epsilon} \right)} ~\mbox{,}
\end{equation}
Because $\lim_{\epsilon \to 0^+} \Theta(\epsilon) = 0$, one necessarily has $\frac{\sqrt{\Theta(\epsilon)}}{\epsilon} \leq \frac{1}{\epsilon^{3/2}}$ eventually in $\epsilon$ as $\epsilon \to 0^+$, thus arguments very much along the lines of the proof of Lemma \ref{lemma:le-bound3} imply that the $\limsup$ of the second term in (\ref{eq:keyestimfinalequ}) is infinite.

%%%%%%%%%%%%%%%%%%%%%%%%%%%%%%%%%%%%%%%%%%%%%%%%%%%%%%%%%%%%%%%%%%%%%%%%%%%%%%%%%%%%%%%%%%%%

\end{document}